\newtheorem{definition}{Definition} [section]
\newtheorem{theorem}{Theorem} [section]
\newtheorem{lemma}{Lemma} [section]
\newtheorem{proposition}{Proposition} [section]
\newtheorem{example}{Example} [section]
\newtheorem{remark}{Remark} [section]
\newcommand{\leqnomode}{\tagsleft@true\let\veqno\@@leqno}
\title{Dissipative dynamics for infinite lattice systems}
\author[1*$\dag$]{Shreya Mehta}
\author[2]{Boguslaw Zegarlinski}
\affil[1]{Department of Mathematics, Imperial College, London, United Kingdom}
\affil[2]{Université de Toulouse ; CNRS, UPS, F-31062 Toulouse Cedex 9, France}
\date{}
\begin{document}

\maketitle
\begin{abstract}
We study  dissipative dynamics constructed by means of non-commutative Dirichlet forms for various lattice systems with multiparticle interactions associated to CCR algebras. We give a number of explicit examples of such models.  Using an idea of quasi-invariance of a state, we show how one can construct unitary representations of various groups. Moreover in models with locally conserved quantities associated to an infinite lattice we show that there is no spectral gap and the corresponding dissipative dynamics decay to equilibrium polynomially in time.
\end{abstract}
\tableofcontents

\section{Introduction}

The theory of quantum dissipative systems has grown rapidly in the last few decades (see e.g. \cite{CZ} and references therein). Such systems can be described by an equation of the following form
\begin{equation}{\label{1.1}}
    \frac{\partial}{\partial t}P_tf=\mathfrak{L}P_tf,\:P_0=id
\end{equation}
where $\mathfrak{L}$ is a Markov generator and $P_t \equiv e^{t\mathfrak{L}}$ is the associated Markov semigroup. The semigroup $P_t$ is positivity and unit preserving. An easier way to prove closability and dissipativeness of $\mathfrak{L}$ is to consider a symmetric semibounded operator in a Hilbert space. A delicate problem in quantum case is how to simultaneously have symmetry in a Hilbert space associated to a state and positivity preservation of the generated semigroup. One of the solutions to this problem come via proving the closability and Markov property of the pre-Dirichlet form. The theory of Dirichlet forms 
originated in \cite{AHK} for the trace state and fully developed later in \cite{Cip1}; for further development see e.g. references in \cite{Cip3}, \cite{P1},\cite{Z},\cite{CZ}.
In this work, we assume that the generator $\mathfrak{L}$ is constructed using Quantum Dirichlet form given by
\begin{equation}\label{DiForm}
\mathcal{E}_\Lambda(f)\equiv \sum_{j\in \Lambda} \int_{\mathbb{R}} \left(\nu_j\langle \delta_{\alpha_t(X_j) )} (f), \delta_{\alpha_t(X_j)} (f)\rangle_{\omega}+\mu_j\langle \delta_{\alpha_t(X_j^\ast) )} (f), \delta_{\alpha_t(X_j^\ast)} (f)\rangle_{\omega}\right) \eta(t)dt
\end{equation}
for $\nu_j,\mu_j$ and $\Lambda\subseteq\mathbb{Z}^d$, with a scalar product
\[\langle f, g \rangle_{\omega}\equiv lim_{\Lambda\rightarrow \mathbb{Z}^d}\langle f, g \rangle_{\omega_\Lambda} = lim_{\Lambda\rightarrow \mathbb{Z}^d} {Tr} (\rho_\Lambda^{1/2} f^{\ast}
\rho_\Lambda^{1/2} g)\]
where $\omega_\Lambda$ is a finite volume state corresponding to a density matrix 
 $\rho_\Lambda \equiv e^{-\beta U_\Lambda}/ {Tr_\Lambda(e^{-\beta U_\Lambda})} $, and the modular dynamics corresponding to $\omega$ is defined by
\begin{equation}{\label{MO}}\alpha_t (B) = lim_{\Lambda\rightarrow \mathbb{Z}^d} e^{- {it\beta U_\Lambda}} {Be}^{{it\beta U_\Lambda}} .\end{equation}   
The function $\eta(t)$ is an admissible function  in the sense of Park, \cite{P1}, (see Definition (\ref{admissible}) below), and the operators $X_j$'s are chosen appropriately later for specific cases.

By the Beurling Deny theory(\cite{BD}), (with its noncommutative generalisation  \cite{Cip1,Cip3}), providing a one to one correspondence between Markov semigroups and Dirichlet forms, the problem of analysing the generator of the Markov semigroups is translated to studying their corresponding Dirichlet forms.  
We say that the semigroup  $P_t$ converges to equilibrium if $lim_{t\rightarrow\infty}P_t(f)=\omega(f)$ where the limit is with respect to the noncommutative $L_p$ spaces or the norm of the algebra. Frequently one can study decay to equilibrium in $L_2$ and $L_p$ spaces using coercive inequalities of Poincare and Log Sobolev type that were studied e.g. in  \cite{OZ,CB} and references therein. 

We discuss construction of Dirichlet forms for models of large interacting systems.   As a starting point we use the quantum harmonic oscillator (refer \cite{CFL}). Given a Hilbert space, say $\mathfrak{h}$ with $\{e_n\}$ as the orthonormal basis, one defines the creation and annihilation operators by 
\begin{equation}{\label{Creation}}A^{\ast}e_n=\sqrt{n+1}e_{n+1}\end{equation}
\begin{equation}{\label{Ann}}Ae_n=\sqrt{n}e_{n-1}\end{equation}
and the associated particle number operator 
\[ Ne_n\equiv A^\ast A e_n=ne_n\]
with dense domains $D(A)=D(A^\ast)=D(\sqrt{N})$.
For $U=N$ we define a density matrix  $\rho=\frac1Z e^{-\beta U}$, with a normalization constant $Z\in(0,\infty)$. In commutative analysis, one considers an Ornstein-Uhlenbeck (O-U) semigroups defined by the Dirichlet generator associated to a Gaussian measure.
The noncommutative generalisation of O-U semigroup is introduced in \cite{CFL},\cite{KP} with the generator (\cite{CM})  given by the extension of the Alicki's theorem
\begin{equation}{\label{Gen}}
 \mathfrak{L}f= \sum_{j}\left(-e^{-\frac{\beta}{2} }[V_j,f] V_j^{\ast}+e^{\frac{\beta}{2} } V_j^{\ast}[V_j,f]\right)
\end{equation}
where $V_j$ are the eigenvectors of the modular operator \eqref{MO}.  
The quantum Markov semigroups generated by the generators of the form \eqref{Gen} were studied in \cite{CM} in the finite dimensional setting 
where the authors obtained some entropic dissipation inequalities for the necessarily infinite-dimensional Bose O-U semigroup.  

 In case of infinite systems with interaction, in order to be able to define Dirichlet forms and generators on dense domains, the finite speed of propagation of information for the hamiltonian dynamics needs to be satisfied. The case with bounded multi-particle interactions for quantum spin systems was established in \cite{LR}. We generalise it to 
the systems with multiparticle interactions involving unbounded operators.

Some dissipative dynamics for the quantum spin systems on the lattice were earlier discussed in \cite{MZ}, \cite{GM}, \cite{Z}, where they show the existence of dynamics that have an exponential decay to equilibrium in the high-temperature region. 
General systems with quadratic interaction were studied in \cite{OZa}, \cite{BKP} and \cite{FQ}. 
Because of the presence of unbounded operators in the Markov generators, these issues are more complicated than otherwise.

In this work, we construct the models of interacting dissipative systems on a finite dimensional lattice $\mathbb{Z}^d$ with finite range interaction. 
We consider the multiparticle interaction and generalise the setup of quantum spin systems used e.g. in   \cite{MZ, LR, BR}. For some of our models with locally conserved quantities, we provide the detailed analysis proving that Poincar{\'e} inequality cannot be satisfied and the system converges to equilibrium with polynomial rate of convergence. In particular this extends commutative case considered in \cite{INZ}.

A quantum Brownian motion model provided in \cite{CFL} has no spectral gap and equilibrium state. Some of our models are more general with no spectral gap at the bottom of spectrum of Markov generator along with the existence of an equilibrium state.

The organisation of this paper is as follows. In Section 2, we describe the infinite dimensional quantum system considered in this work and give some basic notations and definitions. We also address the corresponding domain issues of derivations, their adjoints and Dirichlet forms. Section 3 we explore the idea of quasi-invariance of the states to provide the unitary group representations related to certain transformations of the commutation relations. This includes in particular new representations of Minkowski group. 

In Section 4, the modular dynamics is constructed and the finite speed of propagation of information for unbounded potentials is proven. In Section 5, the convergence of states and norms in the corresponding $\mathbb{L}_p$ space is established. We continues in Section 6 with definitions of non-commutative Dirichlet forms and corresponding Markov generators. In Section 7, we provide number of explicit models and their Dirichlet forms and Markov generators. Section 8 and Section 9 establishes that there is no spectral gap for some examples, including the case which is invariant with respect to modular operator derivations.

In Section 10 generalises the result of \cite{INZ} by showing the algebraic decay to equilibrium for the corresponding quantum Markov semigroups. 
Finally, in the Appendix, we provide computations for the corresponding $\Gamma_1$ form and discuss the operator carr{\'e} du champ type bounds and some of their decay to equilibrium implications. 

\section{The Infinite Quantum System}
\textbf{The Lattice}\\
For $d\in\mathbb{N}$, let $\mathbb{Z}^d$ be the $d$-dimensional square lattice with the $l_1$ lattice metric $dist(\cdot,\cdot)$ defined by
\[dist(\mathbf{i},\mathbf{j}):=|i-j|_1\equiv\sum_{l=1}^d|i_l-j_l|\]
for $\mathbf{i}=(i_1,\dots,i_d),\mathbf{j}=(j_1,\dots,j_d)\in\mathbb{Z}^d.$ Given $R\in(0,\infty)$, for $\mathbf{i},\mathbf{j}\in\mathbb{Z}^d$ we write $\mathbf{i}\sim \mathbf{j}$ whenever $0\leq dist(\mathbf{i},\mathbf{j})\leq R$ and say that $\mathbf{i}$ and $\mathbf{j}$ are neighbours in the lattice. If $\mathcal{O}$ is a finite subset of $\mathbb{Z}^d$ we will write $\mathcal{O}\subset\subset\mathbb{Z}^d $.\\


\noindent\textbf{The configuration space}\\
For $\Lambda\subseteq\mathbb{Z}^d$, we associate a separable Hilbert space $\mathcal{H}_\Lambda$ with a property that for bounded sets $\Lambda_1,\Lambda_2\subset\subset\mathbb{Z}^d$, $\Lambda_1\subset\Lambda_2$, we have $\mathcal{H}_{\Lambda_1}\subset \mathcal{H}_{\Lambda_2}$ and $\mathcal{H}_{\mathbb{Z}^d}=\overline{{\cup}_{\Lambda\subset\subset\mathbb{Z}^d}\mathcal{H}_\Lambda}$. 
For a finite set $\mathcal{O}\subset\subset \Lambda$, let $\mathcal{A}_\mathcal{O}\subset\mathcal{L}(\mathcal{H}_\mathcal{O})$ denote the algebra of bounded operators. Let $\mathcal{A}=\cup_{\mathcal{O}\subset\subset\Lambda}\mathcal{A}_\mathcal{O}$ be an inductive limit algebra  \cite{BR}  generated by all local algebras.\\

We assume, we have a family of mutually commuting copies
of creation and annihilation operators satisfying the CCR given by
\begin{equation}{\label{CCR}}
    [A_j,A_j^\ast]=id,\:\:[A_j,A_k^\sharp]=0 , \, j\neq k. 
\end{equation}
where $A_k^\sharp\in\{A_k,A_k^\ast\}$. 
For $\mathcal{O}\subset\subset \mathbb{Z}^d$, let $\mathcal{D}_\mathcal{O}$ denote the family of finite polynomials in the creation and annihilation operators $A_k^\sharp$, $k\in \mathcal{O}$ and let $\mathcal{D} \equiv \cup_{\mathcal{O}\subset\subset\mathbb{Z}^d}\mathcal{D}_\mathcal{O}$. 
A particle number operator at $j\in\mathbb{Z}^d$ is defined by
\[
N_j\equiv A_j^\ast A_j
\]
\noindent\textbf{Remark} \\
The following relations will be useful in proving many claims involving creators/annihilators and particle number operators. For analytic function $h$,
(using a basis of eigenvectors for $N$), one can see that 
\begin{equation}\label{R1}\begin{split}
    A^{\ast}   h (N)  &=  h (N - 1)  A^{\ast}\quad\textrm{and} \quad
 		  h (N) A^{\ast} = A^{\ast}   h (N + 1) \\
     A  h (N-1) &=  h (N)  A\quad\textrm{and} \quad
 		 h (N+1)  A = A   h (N) \, .\
\end{split}
\end{equation}
For $\beta\in(0,\infty)$, let
\[
Z_o\equiv Tr_j \left(e^{-\beta N_j} \right)
\]
which is finite, positive  and independent of $j\in\mathbb{Z}^d$, and 
for $\Lambda\subset\subset\mathbb{Z}^d$ let
\[
\rho_{o,\Lambda}\equiv \prod_{j\in\Lambda}\frac{1}{Z_o} e^{-\beta N_j}
\]
\\
For $\Lambda\subset \mathbb{Z}^d$ define on $\mathcal{D}\cup \mathcal{A}$ the following linear maps
\[
\mathbb{E}_{o,\Lambda,s,p}(f) \equiv  Tr_\Lambda \left(\rho_{o,\Lambda}^{(1-s)/p}   f \rho_{o,\Lambda}^{s/p}  \right)
\]
with $0\leq s \leq 1$ and $p\in[1,\infty)$.
On $\mathcal{D}\cup \mathcal{A}$ we consider an infinite product state
\[
\omega_o(f)\equiv \lim_{\Lambda\to\mathbb{Z}^d} \mathbb{E}_{o,\Lambda,s}(f) 
\]
which is well defined on $\mathcal{D}\cup \mathcal{A}$.
The corresponding $\mathbb{L}_p(\omega_o)$, $p\in[1,\infty)$, functional is well defined on $\mathcal{D}$ (and $\cup \mathcal{A}$) by
\[
\|f\|_{\omega_o,p,s}^p = \lim_{\Lambda\to\mathbb{Z}^d}   Tr_\Lambda \left|\rho_{o,\Lambda}^{(1-s)/p}  f \rho_{o,\Lambda}^{s/p}\right|^p
\]
where $|g|\equiv \left(g^\ast g\right)^\frac12$. Replacing the limits by $\Lambda\to \mathcal{O}$ for any unbounded $\mathcal{O}\subset\mathbb{Z}^d$, we can introduce the corresponding state $\omega_{o,\mathcal{O}}$ and the corresponding $\mathbb{L}_p(\omega_{o,\mathcal{O}})$, $p\in[1,\infty)$, norms.
\\

\noindent\textbf{Interaction}\\
A potential $\Phi=\{(\Phi_\mathcal{O})_{\mathcal{O}\subset\subset \mathbb{Z}^d}\}$ of finite range $R\in(0,\infty)$ is a family of selfadjoint densely defined operators   $\Phi_\mathcal{O}\in \mathcal{A}_\mathcal{O}$ (or $\mathcal{D}_\mathcal{O}$), such that
\[\Phi_\mathcal{O}=0 \:\text{if}\: diam(\mathcal{O})\geq R.\]
The potential energy for a finite set $\Lambda\subset\subset\mathbb{Z}^d$ is defined by 
\[U_\Lambda=\sum_{\mathcal{O}\subset\Lambda }\Phi_\mathcal{O}.\]
We assume that the potential has the following thermodynamic stability property
\[
\begin{split}
0<Z_\Lambda \equiv Tr_\Lambda e^{-U_\Lambda}<\infty \\
    0< \left|\limsup_{\Lambda\to\mathbb{Z}^d} \frac{1}{|\Lambda|} \log Z_\Lambda \right| <\infty
\end{split}
\]
In the setup of this paper it is natural to consider the bounded multiparticle interaction in the form of polynomials in bounded operators $a_k^\sharp\equiv (1+\varepsilon N_k^{\frac12})^{-1} A_k^\sharp$, $k\in\mathbb{Z}^d$, $\varepsilon\in(0,\infty)$,
\[ \Phi_O\equiv\Phi_O(a_k^\sharp, k\in O).\]
\\

\noindent\textbf{A state and noncommutative $\mathbb{L}_p$ spaces}\\
On $\mathcal{D} \cup\mathcal{A} $, we define a state with respect to the interaction $U_\Lambda$
\[\omega^{(\Lambda)}(f)=\omega_o Tr_\Lambda  (\rho_\Lambda f)\] 
where the density matrix  is given by $\rho_\Lambda \equiv \frac{1}{Z_\Lambda}e^{-\beta U_\Lambda}$
and the associated modular dynamics is defined by 
\begin{equation}{\label{MO}}\alpha_{t,\Lambda} (B) 
=\lim_{\tilde\Lambda\to\mathbb{Z}^d}\rho_{o,\tilde\Lambda\cap\Lambda^c}^{it}\rho_\Lambda^{it}B\rho_\Lambda^{-it}\rho_{o,\tilde\Lambda\cap\Lambda^c}^{-it}.\end{equation}
The corresponding scalar product is given as follows
\[\begin{split}
 \langle f, g \rangle_{\omega^{(\Lambda)}} 
=\omega^{(\Lambda)}((\alpha_{-i/2,\Lambda}(f))^\ast g)=\omega_\Lambda(f^\ast(\alpha_{-i/2,\Lambda }(g))).   
\end{split}\]
Under suitable conditions (discussed later, which hold e.g. for finite range  uniformly bounded weak multiparticle potential or quadratic ones) the following limits exist, (see \cite{BR} for bounded quantum spin systems),
\[
\omega (f) \equiv  \lim_{\Lambda\to\mathbb{Z}^d} \omega_\Lambda (f)
\]
\[
\alpha_t (f) \equiv  \lim_{\Lambda\to\mathbb{Z}^d} \alpha_{t ,\Lambda} (f)
\]
and consequently the following scalar product is well defined
\[\langle f, g \rangle_{\omega }=\lim_{\Lambda\to\mathbb{Z}^d}\langle f, g \rangle_{\omega_\Lambda} =\omega (f^\ast(\alpha_{-i/2}(g)))=\omega ((\alpha_{-i/4}(f))^\ast(\alpha_{-i/4}(g))) .
\]
The corresponding $\mathbb{L}_p(\omega )$, for any $p\in[1,\infty)$, functional is well defined on $\mathcal{D}\cup \mathcal{A}$ by
\[
\|f\|_{\omega,p,s}^p = \lim_{\Lambda\to\mathbb{Z}^d}   Tr_\Lambda \left|\rho_{\Lambda}^{(1-s)/p}  f \rho_{\Lambda}^{s/p}\right|^p
\]
where $|g|\equiv \left(g^\ast g\right)^\frac12$.
Note that in particular case $p\in\mathbb{N}$ we have the following expression, see e.g. \cite{MZ},
\[
\|f\|_{\omega,p,1/2}^p  =\omega\left( \alpha(i /2p) (f)\alpha(3i/2p)(f)...\alpha((2p-1)i/2p)(f)\right) 
\]
where $\alpha( qi/2p)(f)\equiv\alpha_{qi/2p}(f)$, for a suitable class of operators $f$.\\

\noindent \textbf{Derivations}\\
A derivation $\delta_X$ in direction $X\in \mathcal{A} $ (resp. $\mathcal{D}$)  on $D(\delta_X)=\mathcal{A}$ (resp. $\mathcal{D}$)  is defined by
\[\delta_X(B)=i[X,B].\]

\begin{proposition} 
If $X^\ast X, X X^\ast\in\mathbb{L}_2(\omega)$, then the derivation is well defined on 
a domain
\[D(\delta_X)\supset \{B^\ast B,B B^\ast\in\mathbb{L}_2(\omega)\}.\]
In particular for any $X\in \mathcal{D}\cup \mathcal{A}$
\[\mathcal{D}\cup \mathcal{A} \subset D(\delta_X).
\]
The above condition holds if $X,\alpha_{\pm i/4}(X)\in \mathbb{L}_4(\omega)$ and then we have
$D(\delta_X)\supset \{B ,\alpha_{\pm i/4}(B)\in \mathbb{L}_4(\omega)\}$.\\   
\end{proposition}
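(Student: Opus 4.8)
The plan is to read ``well-defined on $D(\delta_X)$'' as the assertion that for every $B$ in the stated domain the commutator $\delta_X(B)=i[X,B]=i(XB-BX)$ is again an element of the GNS Hilbert space $\mathbb{L}_2(\omega)$. By the triangle inequality for $\|\cdot\|_{\omega,2}\equiv\|\cdot\|_{\omega,2,1/2}$ it suffices to bound $\|XB\|_{\omega,2}$ and $\|BX\|_{\omega,2}$ separately. I would carry out the estimates at finite volume, where $\omega_\Lambda(\cdot)=\mathrm{Tr}_\Lambda(\rho_\Lambda\,\cdot)$ is given by a genuine density matrix and all traces are literal, and only at the end pass to the thermodynamic limit using the existence of $\omega$ and $\alpha_t$ assumed earlier.

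The computational heart is a single Cauchy--Schwarz estimate. Writing $\rho\equiv\rho_\Lambda$ and using the symmetric embedding,
\[
\|XB\|_{\omega,2}^2=\mathrm{Tr}\!\left(\rho^{1/2}(XB)^\ast\rho^{1/2}XB\right)=\mathrm{Tr}\!\left(\rho^{1/2}B^\ast X^\ast\rho^{1/2}XB\right),
\]
and regrouping cyclically this equals $\mathrm{Tr}\big((X^\ast\rho^{1/2}X)(B\rho^{1/2}B^\ast)\big)$, a trace of a product of two positive operators. The trace Cauchy--Schwarz inequality $|\mathrm{Tr}(PQ)|\le\|P\|_{\mathrm{HS}}\|Q\|_{\mathrm{HS}}$ then gives
\[
\|XB\|_{\omega,2}^2\le\|X^\ast\rho^{1/2}X\|_{\mathrm{HS}}\,\|B\rho^{1/2}B^\ast\|_{\mathrm{HS}}=\|XX^\ast\|_{\omega,2}\,\|B^\ast B\|_{\omega,2},
\]
where the last step uses the identities $\|X^\ast\rho^{1/2}X\|_{\mathrm{HS}}^2=\mathrm{Tr}(\rho^{1/2}XX^\ast\rho^{1/2}XX^\ast)=\|XX^\ast\|_{\omega,2}^2$ and likewise for $B$. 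The symmetric estimate $\|BX\|_{\omega,2}^2\le\|X^\ast X\|_{\omega,2}\|BB^\ast\|_{\omega,2}$ follows by interchanging the roles of the two factors. Hence $X^\ast X,XX^\ast\in\mathbb{L}_2(\omega)$ together with $B^\ast B,BB^\ast\in\mathbb{L}_2(\omega)$ yield $\|\delta_X(B)\|_{\omega,2}\le\|XB\|_{\omega,2}+\|BX\|_{\omega,2}<\infty$, which is the first claim. The inclusion $\mathcal{D}\cup\mathcal{A}\subset D(\delta_X)$ then follows by checking $B^\ast B,BB^\ast\in\mathbb{L}_2(\omega)$ on the generators: for $B\in\mathcal{A}$ this is the contractivity of the embedding of a bounded operator, while for polynomials $B\in\mathcal{D}$ it is the Gaussian-type decay of the weight $e^{-\beta N}$ (controlled for the interacting $\rho_\Lambda$ through the boundedness of the multiparticle interaction built from the $a_k^\sharp$), which dominates polynomial growth and makes $\rho^{1/4}B^\ast B\rho^{1/4}$ Hilbert--Schmidt.

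For the last assertion I would pass through the KMS form of the inner product. Since $\alpha$ is a one-parameter group of $\ast$-automorphisms, $\alpha_{-i/4}(X^\ast X)=(\alpha_{i/4}(X))^\ast\,\alpha_{-i/4}(X)$, and writing $X_\pm\equiv\alpha_{\pm i/4}(X)$ one obtains
\[
\|X^\ast X\|_{\omega,2}^2=\omega\big((X_+^\ast X_-)^\ast X_+^\ast X_-\big)=\omega\big(X_-^\ast X_+X_+^\ast X_-\big).
\]
Applying the four-factor noncommutative Hölder inequality for the state $\omega$ bounds this by a product of the $\mathbb{L}_4$ norms of $X_+$ and $X_-$, and similarly for $XX^\ast$; thus $X,\alpha_{\pm i/4}(X)\in\mathbb{L}_4(\omega)$ implies $X^\ast X,XX^\ast\in\mathbb{L}_2(\omega)$, and the stated domain inclusion then follows from the first part. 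The role of the modular shifts $\alpha_{\pm i/4}$ is precisely to redistribute the $\rho$-weights into the staggered positions demanded by the symmetric $\mathbb{L}_4$ norms; without them the product $X^\ast X$ does not factor across the symmetric embedding, which is why the hypothesis is phrased with $\alpha_{\pm i/4}(X)$ rather than $X$ alone.

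The main obstacle is not the algebra but making these manipulations rigorous for unbounded $X,B$: one must justify that the relevant products are trace class (so the cyclic rearrangements and the trace Cauchy--Schwarz step are legitimate), and that the analytic continuations $\alpha_{\pm i/4}$ actually exist, which forces one to work with analytic elements --- exactly the content of assuming $\alpha_{\pm i/4}(X)\in\mathbb{L}_4(\omega)$. The finiteness supplied by the $\mathbb{L}_2$ and $\mathbb{L}_4$ hypotheses is what upgrades the formal trace identities to genuine Hilbert--Schmidt statements, and uniformity of all bounds in $\Lambda$ --- guaranteed by the earlier existence of the limiting state and modular dynamics --- is what lets the finite-volume estimates survive the limit $\Lambda\to\mathbb{Z}^d$.
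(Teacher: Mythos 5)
Your proof is correct and follows essentially the same route as the paper's: the triangle-inequality reduction to $\|XB\|_{\omega,2}$ and $\|BX\|_{\omega,2}$, the cyclic regrouping $Tr\left(B\rho^{1/2}B^\ast\cdot X^\ast\rho^{1/2}X\right)$ followed by trace Cauchy--Schwarz, the insertion of the modular shifts $\alpha_{\pm i/4}$ to redistribute the density weights followed by a four-factor H\"older bound in $\mathbb{L}_4$, and finally approximation by normal (finite-volume) states. Your intermediate bound $\|XB\|_{\omega,2}^2\leq\|XX^\ast\|_{\omega,2}\,\|B^\ast B\|_{\omega,2}$ in fact carries the correct exponents, whereas the paper's displayed version has superfluous squares on the right-hand side.
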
  
  \begin{proof}:
We note that 
\[
 \langle \delta_X(B), \delta_X(B) \rangle_{\omega }\leq 2\|XB\|_{\omega }^2 + 2\|BX\|_{\omega }^2
\]
If $ \tilde\omega(\cdot)=Tr (\tilde\rho\cdot)$, with a density matrix $\tilde \rho$, then using the definition of the norm and Cauchy-Schwartz inequality for trace we get
\[\begin{split}
   \|XB\|_{\tilde\omega }^2 = Tr\left(\tilde\rho^\frac12B^\ast X^\ast\tilde\rho^\frac12 XB\right) = Tr\left(B\tilde\rho^\frac12B^\ast\cdot  X^\ast\tilde\rho^\frac12 X\right)\\
   \leq  \|B^\ast  B \|_{\tilde\omega }^2 \,
     \| X  X^\ast  \|_{\tilde\omega }^2 
\end{split}
 \]
 Similarly we have
 \[\begin{split}
  \|BX\|_{\tilde\omega }^2  = Tr\left(\tilde\rho^\frac12X^\ast B^\ast\tilde\rho^\frac12 BX \right) = Tr\left(B^\ast\tilde\rho^\frac12 B\cdot  X\tilde\rho^\frac12 X^\ast\right)\\
   \leq  \|B B^\ast   \|_{\tilde\omega }^2 \,
     \|  X^\ast X  \|_{\tilde\omega }^2 
\end{split}
 \]
Next we remark that, with $\tilde\alpha_{\frac{i}4}(X) \equiv\tilde\rho^{-\frac14}X \tilde\rho^{  \frac14} $, we have
 \[\begin{split}
\|  X^\ast X  \|_{\tilde\omega }^2 =
Tr\left( \left( \tilde\rho^\frac18 X^\ast \tilde\rho^\frac18\right)
\left( \tilde\rho^\frac18 \tilde\rho^{-\frac14}X \tilde\rho^{  \frac14} \tilde\rho^\frac18\right)
 \left( \tilde\rho^\frac18 X^\ast \tilde\rho^\frac18\right)
 \left( \tilde\rho^\frac18 \tilde\rho^{-\frac14}X \tilde\rho^{  \frac14} \tilde\rho^\frac18\right)
\right)\\
=
Tr\left( \left( \tilde\rho^\frac18 X^\ast \tilde\rho^\frac18\right)
\left( \tilde\rho^\frac18 \tilde\alpha_{ \frac{i}4}(X)   \tilde\rho^\frac18\right)
 \left( \tilde\rho^\frac18 X^\ast \tilde\rho^\frac18\right)
 \left( \tilde\rho^\frac18 \tilde\alpha_{\frac{i}4}(X)   \tilde\rho^\frac18\right)
\right)\\
\leq \|X^\ast\|_{\tilde\omega,4 }^2 \|\tilde\alpha_{\frac{i}4}(X)\|_{\tilde\omega,4 }^2
\end{split}
 \]
 and similarly
 \[\begin{split}
\|  X X^\ast  \|_{\tilde\omega }^2  
\leq \|X \|_{\tilde\omega,4 }^2 \|\tilde\alpha_{\frac{i}4}(X^\ast)\|_{\tilde\omega,4 }^2
\end{split}
 \]
The rest follows by approximation of a state by normal states.
 \end{proof}

\subsection{Domain Issues}
We discuss briefly the domain questions when single point potential is given by $\Phi_{j}\equiv V(N_j)$ with unbounded 
real function $V$ (and the rest of the potential is bounded of finite range). 
 Consider the one point case with the state $\omega_V$ given by density matrix 
 $\exp\{-\beta V(N)\}/Z$. Then we notice that, using \eqref{R1}, we have 
\[\begin{split}
    \alpha_{is}(A^\ast)\equiv 
    e^{sV(N)}A^\ast e^{-sV(N)}&= e^{ s(V(N+1)-V(N))} A^\ast
    \\
    \alpha_{-is}(A)\equiv e^{-sV(N)} A\; e^{sV(N)}&= 
    e^{s(V(N+1)-V(N))} A 
\end{split}
\]
If $V$ is linear or sublinear the exponential multiplier is bounded.
However for $V$ growing faster than linearly, the exponential multiplier is an unbounded operator and $\alpha_{\pm is}(A^\sharp)$ may not be in $L_p(\omega_V)$
for some $s\neq 0$ and all sufficiently large $p\in[0,\infty)$.
In this case it is necessary to consider a suitable replacement for the space $\mathcal{D}$.
We propose to consider the following set 
\[\mathcal{D}_{\varepsilon,\Lambda}\equiv\{Fe^{-\sum_{j\in\Lambda}\varepsilon V(N_j)N_j^\delta }: F\in\mathcal{D}_\Lambda, \; \varepsilon,\delta\in(0,\infty)\} \]
and for infinite system
\[\mathcal{D}_{\varepsilon}\equiv \cup_{\Lambda\subset\subset\mathbb{Z}^d}\mathcal{D}_{\varepsilon,\Lambda}.
\]
Such set is dense in the closure of $\mathcal{D}$ with respect to $L_p$ norms.

\subsection{Adjoint Operators} 
We define left $L_X$, respectively right $R_X$, multiplication operator by an operator $X$ as follows
\[
L_X f\equiv X f,\qquad R_X f\equiv f X
\]
provided $Ran(f)\subset D(X)$ and $Ran(X)\subset D(f)$, respectively.

We have the following expressions for corresponding adjoints with respect to the scalar product.
\[ \begin{split}
\langle L_X (g), f\rangle &= Tr\left( \rho^\frac12 (Xg)^\ast \rho^\frac12 f\right)
= Tr\left( \rho^\frac12  g ^\ast \rho^\frac12 \left(\rho^{-\frac12}X^\ast \rho^\frac12\right) f\right)\\
&=    \langle g, L_{\alpha_{i/2}(X^\ast)} f\rangle
\end{split}
\]
and similarly we get
\[ 
\langle R_X (g), f\rangle  =  \langle g, R_{\alpha_{-i/2}(X^\ast)} f\rangle.
\]
Denoting the adjoint operation by $\star$, we summarise this as follows
\[
L_X ^\star = L_{\alpha_{i/2}(X^\ast)}, \qquad R_X^\star = R_{\alpha_{-i/2}(X^\ast)}.
\]
Next we discuss  adjoint of derivation with respect to a scalar product associated to a normal state $\omega(\cdot)\equiv Tr(\rho\cdot)$, with density operator $\rho$ with respect to a trace $Tr$.  
We have
\[\begin{split}
\langle \delta_{X} (f), g \rangle_{\omega} &\equiv 
{Tr} (\rho^{1/2} (\delta_{X} (f))^{\ast}\rho^{1/2} g)
={Tr} (\rho^{1/2} (i[X^{\ast},f^{\ast}])\rho^{1/2} g)\\
&={Tr} (\rho^{1/2} f^{\ast} \rho ^{1/2}
({g}\rho^{1/2} iX ^{\ast} \rho^{-1/2}-  \rho^{-1/2}  iX ^{\ast} \rho^{1/2} g))) 
\end{split}\]
Alternatively using the fact that
\[
\delta_{X} (f) = i ( L_X(f)- R_X(f)),
\]
we have
\[
\delta_{X}^\star   = -i \left( L_X^\star - R_X^\star \right) = 
i\left( R_{\alpha_{-i/2}(X^\ast)}-L_{\alpha_{i/2}(X^\ast)}\right).
\]
Summarising, we have the following property.

\begin{proposition}\label{AdjointPro}  For $X\in\mathcal{D}_\varepsilon$ 
\[\begin{split}
   \delta^{\star}_{X} (g) &= i\left( R_{\alpha_{-i/2}(X^\ast)}(g)-L_{\alpha_{i/2}(X^\ast)}(g)\right)\\
   &= i\left({g}\alpha_{-i/2}(X^{\ast}) - \alpha_{i/2}(X^{\ast})  g\right) \\
   &= - \delta_{\alpha_{-i/2}(X^{\ast})}(g) - i\left(\alpha_{i/2}(X^{\ast}) -\alpha_{-i/2}(X^{\ast})   \right)g \\
   &=  -  \delta_{\alpha_{i/2}(X^{\ast})}  (g) 
   - gi\left( \alpha_{i/2}(X^{\ast})  -\alpha_{-i/2}(X^{\ast})   
   \right) \\
  &=  -  \delta_{\frac12(\alpha_{i/2}(X^{\ast})+\alpha_{-i/2}(X^{\ast})}  (g) 
   + \left\{ \frac12 i\left( \alpha_{-i/2}(X^{\ast}) - \alpha_{i/2}(X^{\ast}) \right) , g\right\}
\end{split} \]
on a dense domain $\mathcal{D}(\delta^{\star}_{X})\supset\mathcal{D}$.
Moreover, for $f,g\in\mathcal{D}_\varepsilon$ , we have modified Leibnitz rule
\[ \begin{split}
  \delta^{\star}_{X} (fg) 
&= \delta^{\star}_{X} (f) g  - f  \delta_{\alpha_{-i/2}(X^{\ast})}(g) 
\\
&= f \delta^{\star}_{X} (g) - \delta_{\alpha_{i/2}(X^{\ast})}(f)  g
\\
&=   \delta^{\star}_{X} (f) g  + f \delta^{\star}_{X} (g) 
-i f\left(\alpha_{-i/2}(X^{\ast})- \alpha_{i/2}(X^{\ast})  \right)g.
\end{split}
\]   

\end{proposition}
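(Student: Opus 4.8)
The plan is to treat the five expressions for $\delta_X^\star(g)$ as a single algebraic chain and then handle the Leibniz identities separately. The opening line requires no new work: it is exactly the formula $\delta_X^\star = i\left(R_{\alpha_{-i/2}(X^\ast)} - L_{\alpha_{i/2}(X^\ast)}\right)$ obtained just above by combining the decomposition $\delta_X = i(L_X - R_X)$ with the adjoint rules $L_X^\star = L_{\alpha_{i/2}(X^\ast)}$ and $R_X^\star = R_{\alpha_{-i/2}(X^\ast)}$; evaluating the right-hand side on $g$ gives the second line $i\left(g\,\alpha_{-i/2}(X^\ast) - \alpha_{i/2}(X^\ast)\,g\right)$. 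From here every remaining equality is produced by adding and subtracting one modular-conjugated copy of $X^\ast$. For the third line I would insert $\pm\, i\,\alpha_{-i/2}(X^\ast)\,g$, so that $i\left(g\,\alpha_{-i/2}(X^\ast) - \alpha_{-i/2}(X^\ast)\,g\right) = -\delta_{\alpha_{-i/2}(X^\ast)}(g)$ is recognized as a commutator-derivation and the leftover collapses to $-i\left(\alpha_{i/2}(X^\ast) - \alpha_{-i/2}(X^\ast)\right)g$. Symmetrically, inserting $\pm\, i\,g\,\alpha_{i/2}(X^\ast)$ yields the fourth line with $-\delta_{\alpha_{i/2}(X^\ast)}(g)$ and a right multiplication. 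The fifth line is then the half-sum of the third and fourth, where linearity of $Y\mapsto\delta_Y$ fuses the two derivations into $\delta_{\frac12(\alpha_{i/2}(X^\ast)+\alpha_{-i/2}(X^\ast))}$ and the two one-sided multiplications combine into the anticommutator $\{\tfrac12 i(\alpha_{-i/2}(X^\ast)-\alpha_{i/2}(X^\ast)),\,g\}$.

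For the modified Leibniz rule I would start from the operator form $\delta_X^\star(h) = i\left(h\,\alpha_{-i/2}(X^\ast) - \alpha_{i/2}(X^\ast)\,h\right)$ and set $h = fg$. The first identity follows by expanding $\delta_X^\star(f)\,g = i\left(f\,\alpha_{-i/2}(X^\ast)\,g - \alpha_{i/2}(X^\ast)\,f\,g\right)$ and noting that the spurious middle term $i\,f\,\alpha_{-i/2}(X^\ast)\,g$ is cancelled precisely by $-f\,\delta_{\alpha_{-i/2}(X^\ast)}(g)$, leaving $i\left(f\,g\,\alpha_{-i/2}(X^\ast) - \alpha_{i/2}(X^\ast)\,f\,g\right) = \delta_X^\star(fg)$. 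The second identity is the mirror computation, cancelling instead against $-\delta_{\alpha_{i/2}(X^\ast)}(f)\,g$. The third, symmetric form is obtained by substituting the third line of the adjoint formula, i.e. $\delta_{\alpha_{-i/2}(X^\ast)}(g) = -\delta_X^\star(g) - i\left(\alpha_{i/2}(X^\ast)-\alpha_{-i/2}(X^\ast)\right)g$, into the first identity, so that both $\delta_X^\star(f)\,g$ and $f\,\delta_X^\star(g)$ appear and the surplus reduces to $-i\,f\left(\alpha_{-i/2}(X^\ast)-\alpha_{i/2}(X^\ast)\right)g$.

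The genuine content, and what I expect to be the main obstacle, is not the algebra but the justification that every object above is densely defined and that the formal rearrangements stay inside $\mathbb{L}_2(\omega)$. This is precisely why the class $\mathcal{D}_\varepsilon$ was introduced: for $X\in\mathcal{D}_\varepsilon$ the modular conjugates $\alpha_{\pm i/2}(X^\ast)$ carry the multiplier $e^{\pm\frac12(V(N+1)-V(N))}$, which for superlinearly growing $V$ is an unbounded operator, so a priori $\alpha_{\pm i/2}(X^\ast)$ need not lie in any fixed $\mathbb{L}_p(\omega)$. The resolution is to exploit the damping factor $e^{-\sum_j \varepsilon V(N_j)N_j^\delta}$ built into $\mathcal{D}_\varepsilon$ together with the integrability criterion of the first Proposition of this section: setting $Y=\alpha_{\pm i/2}(X^\ast)$, I would verify that $Y,\,\alpha_{\pm i/4}(Y)\in\mathbb{L}_4(\omega)$, which guarantees that $L_{\alpha_{i/2}(X^\ast)}$, $R_{\alpha_{-i/2}(X^\ast)}$ and the associated commutator-derivations $\delta_{\alpha_{\pm i/2}(X^\ast)}$ all map $\mathcal{D}$ into $\mathbb{L}_2(\omega)$, hence $\mathcal{D}(\delta_X^\star)\supset\mathcal{D}$. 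Once this is in place, the defining adjoint relation $\langle\delta_X(f),g\rangle_\omega = \langle f,\delta_X^\star(g)\rangle_\omega$ extends from $\mathcal{D}$ to the full domain by density of $\mathcal{D}$ (and $\mathcal{D}_\varepsilon$) in the $\mathbb{L}_2$-closure, and the Leibniz identities, being operator identities valid on $\mathcal{D}_\varepsilon$, hold wherever both sides are defined.
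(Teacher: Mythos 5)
Your proposal is correct and takes essentially the same route as the paper: the paper likewise obtains $\delta_X^\star = i\left(R_{\alpha_{-i/2}(X^\ast)}-L_{\alpha_{i/2}(X^\ast)}\right)$ from the trace-cyclicity adjoints $L_X^\star = L_{\alpha_{i/2}(X^\ast)}$, $R_X^\star = R_{\alpha_{-i/2}(X^\ast)}$ together with the decomposition $\delta_X = i(L_X - R_X)$, and then states the proposition as a summary, the remaining equalities and the modified Leibnitz identities being exactly the add-and-subtract commutator algebra you carry out. Your closing discussion of the domain via $\mathcal{D}_\varepsilon$ and the $\mathbb{L}_4$ criterion of the preceding proposition fills in details the paper leaves implicit, in a manner consistent with its framework.
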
   
\textbf{Remark}: We remark that in case when
\[
\alpha_{\pm \sfrac{i}{2}}(X)=e^{\pm \xi} X
\]
and so
\[
\alpha_{\mp \sfrac{i}{2}}(X^\ast)=e^{\pm \xi} X^\ast
\]
for some $\xi\in\mathbb{R}$, the above formulas simplify as follows
\[\begin{split}
   \delta^{\star}_{X} (g) &= i\left({g} (e^{\xi} X^{\ast}) -  (e^{-\xi}X^{\ast})  g\right) \\
   &= - e^{\xi}\delta_{ X^{\ast} }(g) +2i\sinh(\xi)  X^{\ast}  g \\
   &=  -  e^{-\xi}\delta_{X^{\ast}} (g) 
   + 2i\sinh(\xi) g X^{\ast} 
  \\
  &=  -  \cosh(\xi)\delta_{ X^{\ast}}  (g) 
   + i\sinh(\xi)\left\{  X^{\ast}  , g\right\}
\end{split} \]
and for the modified Leibnitz rule as follows
\[ \begin{split}
  \delta^{\star}_{X} (fg) 
&= \delta^{\star}_{X} (f) g  - e^{\xi}f  \delta_{X^{\ast}}(g) \\
&= f \delta^{\star}_{X} (g) - e^{-\xi}\delta_{X^{\ast}}(f)  g .
\end{split}
\]   
On the algebra associated to $\xi=0$, as e.g. algebra of functions of $N$ in case of quantum harmonic oscillator, we recover perfect classical formulas.

\section{Quasi Invariance and Unitary Group Representations} 

In this section we discuss briefly unitary representations of some groups. One of the motivations for that is the fact that the generators of such representations can be used in constructions of dissipative dynamics discussed later on.\\
Suppose, $\theta, \tau\in\mathbb{C}$ satisfy $|\tau|^2-|\theta|^2=1$. Then the operators
\[
a\equiv a(\tau,\theta)\equiv \tau A+\theta A^\ast, \qquad a^\ast\equiv a^\ast(\tau,\theta)\equiv \bar\tau A^\ast +\bar\theta A 
\]
satisfy CCR
\[[a,a^\ast]  =id.\]
The transformation of this type is known in the mathematical/theoretical physics literature under the name of \href{https://en.wikipedia.org/wiki/Bogoliubov_transformation}{Bogolubov transformations}.\\
We can consider here a class of transformations which includes the Lorenz group acting on two dimensional CCR vectors via matrices, 
\[
\mathbb{R}\ni t \longmapsto
\begin{pmatrix}
cosh(t) & sinh(t)  \\
sinh(t) & cosh(t)
\end{pmatrix}
\]
(which can be extend to complex parameter).
For any polynomial function $f$ and interaction energy $U$ for which $Tr \left( f(A,A^\ast) e^{-U(A,A^\ast)}\right)$ is well defined, we have
\[
Tr\left( f(a,a^\ast) e^{-U(a,a^\ast)}\right)=Tr \left(f(a,a^\ast) e^{-U(a,a^\ast)} \right).
\]
For a differentiable function $\mathbb{R}\ni s\to \{(\tau(s), \theta(s)):|\tau(s)|^2= |\theta(s)|^2+1\}$, we define  
\[a_s\equiv a(\tau(s), \theta(s)), \; a_s^\ast\equiv a(\tau(s), \theta(s))^\ast,\]
with the initial condition
$a_0=A, a_0^\ast=A^\ast$.
Given initial density $\rho\equiv \frac1Z e^{-U(A,A^\ast)}$, we define transformed density as follows
\[
\rho_s\equiv \frac1{Z_s} e^{-U(a_s,a_s^\ast)}
\]
with normalisation factor $Z_s =Tr e^{-U(a_s,a_s^\ast)}=Tre^{-U(A,A^\ast)}$.
Then we have the following result.
\begin{theorem}
The following formula defines a unitary group representation in $\mathbb{L}_2(\rho)$.
\[
V_s(f(A,A^\ast)) \equiv \rho^{-\frac14}\rho_s^{\frac14} f(a_s,a_s^\ast)
\rho_s^{\frac14}\rho^{-\frac14}
\]
The generator of the group on polynomials is given by
\[
\partial_sV_s(f)(A,A^\ast)_{|s=0} = \partial_s f(a_s,a_s^\ast)_{|s=0} -\left\{ \frac14\int_0^1 d\lambda \; \rho^{\frac{\lambda}4} \partial_s U(a_s,a_s^\ast)_{|s=0}\rho^{\frac{1-\lambda}4}, f(A,A^\ast)-\frac12 \frac{Z'}{Z}f(A,A^\ast)\right\}.
\]
where curly bracket denote anticommutator and $Z'\equiv (\partial_s{Z_s})_{|s=0}$.   
\end{theorem}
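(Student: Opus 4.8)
The plan is to recognise the map $\beta_s\colon f(A,A^\ast)\mapsto f(a_s,a_s^\ast)$ as a one-parameter group of Bogolubov $\ast$-automorphisms and to reduce every claim to structural identities for $\beta_s$. Three facts carry the argument. First, provided the path $s\mapsto(\tau(s),\theta(s))$ runs through a one-parameter subgroup of the symmetry displayed above (the Lorentz boosts compose additively), one has $\beta_0=\mathrm{id}$ and $\beta_s\circ\beta_t=\beta_{s+t}$. Second, in the single mode case $\beta_s$ is unitarily implementable, $\beta_s(\cdot)=W_s(\cdot)W_s^\ast$ with $W_s$ the squeezing unitary, so $\beta_s$ preserves the trace; in particular $Z_s=\mathrm{Tr}\,e^{-U(a_s,a_s^\ast)}=\mathrm{Tr}\,W_s e^{-U(A,A^\ast)}W_s^\ast=Z$. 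Third, since $\beta_s$ commutes with the analytic functional calculus and $Z_s=Z$, we have $\rho_s=\beta_s(\rho)$ and hence $\rho_s^{\kappa}=\beta_s(\rho^{\kappa})$ for every power $\kappa$.

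For unitarity I would expand $\langle V_sf,V_sg\rangle_\omega=\mathrm{Tr}\big(\rho^{1/2}(V_sf)^\ast\rho^{1/2}(V_sg)\big)$ directly. Writing $(V_sf)^\ast=\rho^{-1/4}\rho_s^{1/4}\beta_s(f)^\ast\rho_s^{1/4}\rho^{-1/4}$ and using cyclicity of the trace, the four inner copies of $\rho^{\pm1/4}$ either cancel or collapse, leaving $\mathrm{Tr}\big(\rho_s^{1/2}\beta_s(f)^\ast\rho_s^{1/2}\beta_s(g)\big)$. By $\rho_s^{1/2}=\beta_s(\rho^{1/2})$ and multiplicativity this equals $\mathrm{Tr}\,\beta_s\big(\rho^{1/2}f^\ast\rho^{1/2}g\big)$, and the trace invariance of $\beta_s$ returns $\langle f,g\rangle_\omega$. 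Surjectivity, and hence unitarity rather than mere isometry, is then immediate from the group law with inverse $V_{-s}$.

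The group property $V_{s+t}=V_sV_t$ follows by substitution: applying $\beta_s$ to $V_tf$ and invoking $\beta_s(\rho_t)=\beta_{s+t}(\rho)=\rho_{s+t}$ together with $\beta_s\beta_t=\beta_{s+t}$, the intermediate factors $\rho_s^{\pm1/4}$ telescope and one reads off $V_{s+t}f$. Strong continuity on the dense domain $\mathcal{D}_\varepsilon$ reduces to continuity of $s\mapsto(\tau(s),\theta(s))$, hence of $s\mapsto f(a_s,a_s^\ast)$, and of $s\mapsto\rho_s^{1/4}$; Stone's theorem then provides a (skew-adjoint) generator.

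Finally, for the generator I differentiate $V_s(f)=\rho^{-1/4}\rho_s^{1/4}f(a_s,a_s^\ast)\rho_s^{1/4}\rho^{-1/4}$ at $s=0$. The product rule gives three terms; the one in which the derivative hits $f(a_s,a_s^\ast)$ is exactly $\partial_s f(a_s,a_s^\ast)|_{s=0}$. For the two outer terms I write $\rho_s^{1/4}=Z_s^{-1/4}e^{-U(a_s,a_s^\ast)/4}$ and apply Duhamel's formula
\[
\partial_s e^{-U_s/4}\big|_{s=0}=-\tfrac14\int_0^1 e^{-\lambda U_0/4}\,\big(\partial_sU(a_s,a_s^\ast)|_{s=0}\big)\,e^{-(1-\lambda)U_0/4}\,d\lambda,
\]
together with $e^{-U_0/4}=Z^{1/4}\rho^{1/4}$, to extract the operator $\frac14\int_0^1\rho^{\lambda/4}\,\partial_sU(a_s,a_s^\ast)|_{s=0}\,\rho^{(1-\lambda)/4}\,d\lambda$; the scalar contribution $\partial_s Z_s^{-1/4}|_{s=0}$ supplies the $Z'/Z$ correction. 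Assembling the two outer terms into the symmetric anticommutator combination and adding the middle derivation term reproduces the stated formula; as a consistency check the resulting operator is skew-symmetric for $\langle\cdot,\cdot\rangle_\omega$, which one verifies from the adjoint rules $L_X^\star=L_{\alpha_{i/2}(X^\ast)}$, $R_X^\star=R_{\alpha_{-i/2}(X^\ast)}$ established earlier. I expect the main obstacle to be precisely this last step: justifying the Duhamel expansion and the differentiation of the fractional power $\rho_s^{1/4}$ of the unbounded operator $e^{-U(a_s,a_s^\ast)}$, and controlling domains so that each term is a well-defined element of $\mathbb{L}_2(\rho)$. I anticipate this needs the regularisation built into $\mathcal{D}_\varepsilon$ (keeping the relevant analytic continuations $\alpha_{\pm i/4}$ in $\mathbb{L}_4(\omega)$, as in the domain proposition of Section 2) rather than any new idea, since the algebraic content is entirely carried by the automorphism identities above.
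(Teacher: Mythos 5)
First, a point of reference: the paper states this theorem with no proof at all --- after the statement the text moves straight on to the higher-dimensional Minkowski discussion --- so there is no ``paper's argument'' to compare against and your proposal has to stand entirely on its own. On its own terms, the structural half is correct and complete: the identities $\rho_s=\beta_s(\rho)$ and $\rho_s^{\kappa}=\beta_s(\rho^{\kappa})$, trace invariance of the unitarily implemented Bogolubov automorphism, the cyclicity computation giving $\langle V_sf,V_sg\rangle_\omega=\langle f,g\rangle_\omega$, and the telescoping giving $V_sV_t=V_{s+t}$ all check out; moreover your explicit hypothesis that $s\mapsto(\tau(s),\theta(s))$ runs through a one-parameter subgroup (boosts $\tau=\cosh s$, $\theta=\sinh s$) is genuinely needed for the group law and is left implicit in the paper.

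The gap is in the last step, and it is algebraic, not (as you anticipated) one of domains. Carry out your own product rule and Duhamel expansion: with $D\equiv\partial_s\rho_s^{1/4}|_{s=0}=-\frac14\frac{Z'}{Z}\rho^{1/4}-G$ and $G\equiv\frac14\int_0^1\rho^{\lambda/4}\,\partial_sU(a_s,a_s^\ast)|_{s=0}\,\rho^{(1-\lambda)/4}\,d\lambda$, the two outer terms of the derivative are $\rho^{-1/4}Df$ and $fD\rho^{-1/4}$, so that
\[
\partial_sV_s(f)|_{s=0}=\partial_sf(a_s,a_s^\ast)|_{s=0}-\tfrac12\tfrac{Z'}{Z}\,f-\left(\rho^{-1/4}Gf+fG\rho^{-1/4}\right).
\]
The Duhamel kernel thus enters conjugated by $\rho^{\mp 1/4}$ on the two sides --- equivalently a modular-twisted anticommutator $Kf+fK^\ast$ with $K=\frac14\int_0^1\rho^{-\lambda/4}\,\partial_sU|_{s=0}\,\rho^{\lambda/4}\,d\lambda$ --- and the normalization derivative contributes a separate additive term $-\frac12\frac{Z'}{Z}f$. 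This is \emph{not} the plain anticommutator $\{G,\,f-\frac12\frac{Z'}{Z}f\}$ asserted in the theorem; the two expressions agree only under an extra commutation assumption such as $[\rho,\partial_sU|_{s=0}]=0$. So your sentence ``assembling the two outer terms into the symmetric anticommutator combination \ldots reproduces the stated formula'' is precisely the step that does not go through as written: either you must record the twisted formula you actually obtain (and note that the theorem's formula needs to be read that way), or you must supply the missing hypothesis. A smaller point you should also make explicit: by the trace invariance you yourself establish, $Z_s\equiv Z$, hence $Z'=0$ and the entire $Z'/Z$ correction vanishes in this setting, so that term in the statement is only relevant when quasi-invariance of the trace fails.
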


To get to the representation of Lorenz group in higher space dimension
one needs to consider higher order quantisation of space-time in a form of product space of many independent harmonic oscillators $(A_i,A_i^\ast)_{i=0,1..,n}$ in which we can consider  the following CCR representation
\[(A_i,A_i^\ast)_{i=1,..,n}\mapsto S(\tau,\mathbf{x})\equiv\tau \frac{1}{\sqrt{n}}\sum_{i=1,..,n} A_i+\sum_{i=1,..,n} x_i A_i^\ast.\]
[So in infinite dimensional limit time component is given by a "Gaussian random variable".]
Then we have the following representation of the Minkowski scalar product
\[[S,S^\ast]=|\tau|^2 -|\mathbf{x}|^2.\]

\textbf{Remark} {\textit{ Given $n\in\mathbb{N}$, $n>1$, CCR pairs
$(A_i,A_i^\ast)_{i=0,1..,n}$, we can define 
\[\begin{split}
A_i(\boldsymbol{\gamma}, \boldsymbol{\kappa})\equiv \sum_{j=1,..,n} \left(\gamma_{ij} A_j + \kappa_{ij}A_j^\ast\right) \\
 A_i^\ast(\boldsymbol{\gamma}, \boldsymbol{\kappa})\equiv 
 \sum_{j=1,..,n} \left(\bar\gamma_{ij} A_j^\ast + \bar\kappa_{ij}A_j\right)
\end{split}
\]
with the CCR condition 
$[A_i(\boldsymbol{\gamma}, \boldsymbol{\kappa}),A_i^\ast(\boldsymbol{\gamma}, \boldsymbol{\kappa})]=id$ given by
\[
 \sum_{j=1,..,n} |\gamma_{ij} |^2 - \sum_{j=1,..,n} |\kappa_{ij} |^2 = 1
\]
Hence we can introduce a group of linear transformations \[\begin{split}
\mathbf{T}\equiv(T,\tilde T): (\boldsymbol{\gamma}, \boldsymbol{\kappa})\mapsto 
\mathbf{T}(\boldsymbol{\gamma}, \boldsymbol{\kappa})\equiv (T(\boldsymbol{\gamma}),\tilde T(\boldsymbol{\kappa}))  \\
 (\mathbf{T}(\boldsymbol{\gamma}))_i\equiv \sum_{j=1,..,n} \left(T_{ij}\gamma_{ij}\right), \qquad 
 (\mathbf{T}(\boldsymbol{\kappa}))_i \equiv \sum_{j=1,..,n} \left(
 \tilde T_{ij}\kappa_{ij}\right) 
 \end{split}
 \]
preserving the CCR condition, i.e. satisfying
\[
 \sum_{j=1,..,n} | T(\boldsymbol{\gamma})_{ij} |^2 - \sum_{j=1,..,n} |\tilde T(\boldsymbol{\gamma})_{ij} )|^2 = 1.
\]
%
%
}}
We can use the idea of the quasi-invariance of a state to obtain unitary representations of this extended group in a similar fashion.\\
More representations of groups in multicomponent systems are considered later on, (see also \cite{MZ}). 
%
\section{Modular Dynamics and Finite Speed of Propagation of Information}
It is sufficient to define modular dynamics for operators for creation  and anihilation operators. We construct the modular dynamics first on bounded operators defined by polynomials of the following modified creation  and anihilation operators
\[a_j^\sharp\equiv \frac{1}{1+\epsilon N_j^\frac12}A_j^\sharp.\]

For $\Lambda\subset\subset\mathbb{Z}^d$, $\alpha_{t ,\Lambda}$ is well defined  and for a given $j\in\Lambda$ and $k\notin\Lambda$,
we have
\[\begin{split}
    \alpha_{t ,\Lambda\cup\{k\} }(a_j^\sharp)-\alpha_{t ,\Lambda}(a_j^\sharp) &=  -\int_0^t ds  \frac{d}{ds}\alpha_{s ,\Lambda }\left(\alpha_{t-s ,\Lambda\cup\{k\}  }(a_j^\sharp)\right)\\
    &=  \int_0^t                                                              \sum_{O\ni k\atop O\neq\{k\}}\alpha_{s ,\Lambda  }\delta_{\Phi_O}\left(\alpha_{t-s ,\Lambda\cup\{k\}  }(a_j^\sharp)\right) ds
\end{split}
\]
Assuming $\Phi_O$ are bounded if $O$ is not one point set, this implies
the following bound involving operator norm
\[\begin{split}
   \| \alpha_{t ,\Lambda\cup\{k\} }(a_j^\sharp)-\alpha_{t ,\Lambda}(a_j^\sharp)\| \leq  \int_0^t   \sum_{O\ni k\atop O\neq\{k\}}\|\delta_{\Phi_O}\left(\alpha_{s ,\Lambda\cup\{k\}  }(a_j^\sharp)\right)\| ds
\end{split}
\]
To estimate the right hand side we use the following result where
\[
 c_\Phi\equiv 2 sup_{|diam (O)|\leq 2R}\sum_{O'\subset O_R }^{\tilde{}}\|\Phi_{O'}\| , 
 \]
where the summation with $\tilde{}$ runs over sets different than one point sets and $O_R$ denotes a set of points with distance from $O$ bounded by $R$.\\

\begin{theorem}(Finite speed of propagation of information estimate)\\
Assume the potential is of finite range $R\in(0,\infty)$ and 
$c_\Phi<\infty$.
There exist  constants $D,C,m\in\mathbb{R}^+$ such that for any $j\in\Lambda\subset\subset\mathbb{Z}^d$ and any $t\in\mathbb{R}^+$ we have
\[\|\delta_{\Phi_O}\left(\alpha_{t ,\Lambda   }(a_j^\sharp)\right)\| \leq De^{Ct-md(O,j)}\]
where $\Phi_O$ is a bounded part of potential localised in a set of size $2R$. 
The estimate remains valid for $t\in\mathbb{C}$, $|\mathfrak{Im}({t})|\leq 1$, with $\mathfrak{Re} (t)$ on the right hand side, provided
one point interaction $V(N_k)$ is at most linear or $a_j\in\mathcal{D}_{\varepsilon}$.
\end{theorem}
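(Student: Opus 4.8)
The plan is to run a Lieb--Robinson iteration adapted to the modular flow $\alpha_{t,\Lambda}$. Write $B(t)\equiv\alpha_{t,\Lambda}(a_j^\sharp)$, which is bounded with $\|B(t)\|=\|a_j^\sharp\|$ for real $t$, since $\alpha_{t,\Lambda}$ then acts by conjugation with unitaries. Let $H_\Lambda$ denote the modular Hamiltonian generating $\alpha_{t,\Lambda}$, so that $\mathcal{L}_\Lambda(\cdot)=i[H_\Lambda,\cdot]$; it decomposes into one-point diagonal terms $V(N_k)$ and the bounded multi-particle terms $\Phi_{O'}$ with $\mathrm{diam}(O')\ge 1$. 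By the relations \eqref{R1} the one-point part acts on $a_j^\sharp$ as multiplication by $e^{it(V(N_j+1)-V(N_j))}$, a bounded multiplier when $V$ is at most linear, so it contributes only an overall bounded factor and all genuine spatial propagation is carried by the bounded multi-particle terms. The target quantity is $C_O(t)\equiv\|\delta_{\Phi_O}(B(t))\|=\|[\Phi_O,B(t)]\|$.

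First I would derive a closed integral inequality for $C_O(t)$. Differentiating $f_O(t)\equiv[\Phi_O,B(t)]$ and using $\tfrac{d}{dt}B(t)=i[H_\Lambda,B(t)]$ together with the Jacobi identity gives $f_O'(t)=i[H_\Lambda,f_O(t)]+i[[\Phi_O,H_\Lambda],B(t)]$, where the first term generates the norm-preserving flow. Duhamel's formula then yields
\[
f_O(t)=\alpha_{t,\Lambda}(f_O(0))+i\int_0^t \alpha_{t-s,\Lambda}\!\left([[\Phi_O,H_\Lambda],B(s)]\right)ds .
\]
Finite range $R$ removes all but the finitely many $O'$ with $O'\cap O_R\neq\emptyset$ (the one-point contribution to $[\Phi_O,H_\Lambda]$ being a bounded operator localized in $O$ that only renormalizes $C_O$), and expanding the double commutator by the Leibniz rule bounds each term by $2\|\Phi_{O'}\|\,C_O(s)+2\|\Phi_O\|\,C_{O'}(s)$. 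Taking norms and using $c_\Phi<\infty$ produces
\[
C_O(t)\le C_O(0)+\int_0^t\sum_{O':\,O'\cap O_R\neq\emptyset} \kappa_{O,O'}\,C_{O'}(s)\,ds ,
\]
with coefficients $\kappa_{O,O'}$ controlled uniformly by $c_\Phi$.

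Next I would iterate this inequality. At $t=0$ one has $C_O(0)=\|[\Phi_O,a_j^\sharp]\|=0$ unless $j\in O_R$, i.e. $d(O,j)\le R$, because $a_j^\sharp$ is localized at $j$. Feeding the inequality into itself $n$ times generates a Dyson series indexed by chains $O=O_0,O_1,\dots,O_n$ with consecutive members within range $R$ and with $O_n$ reaching the site $j$; such a chain exists only if $n\ge d(O,j)/R$. Finite range bounds the number of chains of length $n$ by $\mathrm{const}^n$, while the nested time integrals contribute $t^n/n!$, so summing over $n\ge d(O,j)/R$ gives a tail $\sum_n (Ct)^n/n!$ dominated by $D\,e^{Ct}e^{-m\,d(O,j)}$ once $m$ is chosen so the per-step geometric factor $e^{-mR}$ absorbs the combinatorial constant. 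This is exactly the claimed estimate for real $t$.

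The hard part will be the analytic continuation to complex $t$ with $|\mathfrak{Im}(t)|\le 1$, because there $\alpha_{t,\Lambda}$ is no longer a $\ast$-automorphism and the norm-preservation used in the Duhamel step fails. Here the single-site modular flow multiplies $a_j^\sharp$ by $e^{s(V(N_j+1)-V(N_j))}$ with $s=\mathfrak{Im}(t)$, which stays bounded precisely when $V$ is at most linear, or, for faster-growing $V$, after replacing $a_j^\sharp$ by an element of $\mathcal{D}_\varepsilon$ whose damping factors $e^{-\varepsilon V(N)N^\delta}$ keep $\alpha_{\pm is}(a_j^\sharp)$ bounded in every $\mathbb{L}_p$. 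The clean route is to establish the real-$t$ bound first, observe that $t\mapsto\delta_{\Phi_O}(\alpha_{t,\Lambda}(a_j^\sharp))$ is analytic and uniformly bounded on the strip $|\mathfrak{Im}(t)|\le 1$ under either hypothesis, and then apply a three-lines / Phragm\'en--Lindel\"of argument; the factor $e^{C\,\mathfrak{Re}(t)}$ survives because only the real part drives the exponential time growth, the imaginary strip contributing merely a bounded multiplicative constant absorbed into $D$.
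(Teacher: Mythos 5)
Your strategy --- a Lieb--Robinson iteration via Duhamel, chain combinatorics, and a $t^n/n!$ tail summed over $n\gtrsim d(O,j)/R$ --- is the same in spirit as the paper's, but your treatment of the unbounded one-point potential is a genuine gap, and it is exactly the point where the paper's variant departs from the textbook argument you follow. The paper's Duhamel step compares $\alpha_{t,\Lambda}$ with the dynamics $\alpha_{t,\Lambda\setminus O_R}$ defined so as to retain \emph{all} one-point terms $V(N_k)$ (also inside $O_R$) and to drop only the bounded multi-particle terms meeting $O_R$; the unbounded terms then cancel in the difference of generators, the iteration involves only the $\Phi_{O'}$ controlled by $c_\Phi$, and the real-$t$ estimate follows with no growth restriction on $V$, which is what the theorem asserts. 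In your scheme the one-point part does enter, through $[[\Phi_O,V_O],B(s)]$ with $V_O\equiv\sum_{k\in O}V(N_k)$, and the claim that this is ``a bounded operator localized in $O$ that only renormalizes $C_O$'' fails twice. First, by \eqref{R1} one has $[a_k,V(N_k)]=(1+\varepsilon N_k^{1/2})^{-1}\left(V(N_k+1)-V(N_k)\right)A_k$, which is bounded only if $V$ has bounded increments, i.e.\ is at most linear; the real-$t$ statement carries no such hypothesis, so your proof covers strictly less than the theorem (note also that for real $t$ the one-site multiplier $e^{it(V(N+1)-V(N))}$ is unitary for \emph{any} $V$, so the restriction you invoke there is not where the real difficulty sits). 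Second, even when $[\Phi_O,V_O]$ is bounded, with your definition $C_O(t)=\|[\Phi_O,B(t)]\|$ this term can only be estimated by $2\|[\Phi_O,V_O]\|\,\|a_j^\sharp\|$, an additive constant with no decay in $d(O,j)$, which destroys the exponential factor you are trying to propagate; to make ``renormalizes'' literal you would need either the interaction picture with respect to the on-site part (expanding only in the conjugated interaction terms --- in effect what the paper's comparison dynamics accomplishes) or the sup-over-local-observables commutator function, for which the on-site contribution is proportional to $C_O(s)$ itself and can be absorbed by Gr\"onwall into $e^{Ct}$.

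The complex-$t$ part is also not delivered by your interpolation step. The three-lines/Phragm\'en--Lindel\"of theorem propagates bounds from the two boundary lines of a strip inward, not from the central line outward: interpolating your decaying bound on $\mathbb{R}$ against mere uniform boundedness on $\mathfrak{Im}(t)=\pm1$ yields at best a decay rate $m\left(1-|\mathfrak{Im}(t)|\right)$, which degenerates exactly at $|\mathfrak{Im}(t)|=1$, where the theorem still claims the full estimate; moreover the uniform bound on the strip, uniform in $\Lambda$ and independent of $d(O,j)$, is itself something you would have to prove. The paper instead keeps the same iteration for complex time, using the hypotheses ($V$ at most linear, or $a_j\in\mathcal{D}_\varepsilon$) to ensure the multipliers appearing at each step remain bounded, with only $\mathfrak{Re}(t)$ driving the exponential growth.
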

 \begin{proof}  If the interaction if of range $R\in(0,\infty)$, we note that when $dist(O,j)>2R$ and $j\in\Lambda\setminus O_R$, we have 
\[\delta_{\Phi_O}\left(\alpha_{s ,\Lambda\setminus  O_R  }(a_j^\sharp)\right)=0,\]
where  $O_R\equiv\{l: dist(l,O)\leq 2R\}$.
Hence
\[ 
   \|\delta_{\Phi_O}\left(\alpha_{t ,\Lambda }(a_j^\sharp)\right)\| = \|\delta_{\Phi_O}\left(\alpha_{t ,\Lambda  }(a_j^\sharp) - \alpha_{t ,\Lambda\setminus O_R}(a_j^\sharp)\right)\|
   \]
We have 
\[\begin{split}
   \alpha_{t ,\Lambda  }(a_j^\sharp) - \alpha_{t ,\Lambda\setminus O_R}(a_j^\sharp) = -\int_0^t ds \frac{d}{ds}
  \alpha_{t-s ,\Lambda\setminus O_R}(\alpha_{s ,\Lambda  }(a_j^\sharp) ) \\
  -\int_0^t ds  
  \alpha_{t-s ,\Lambda\setminus O_R}\left(\sum_{O'\subset O_R }^{\tilde{}}\delta_{\Phi_{O'}}\left( \alpha_{s ,\Lambda }(a_j^\sharp) \right)\right)
\end{split}
\]
where  $\tilde{}$ over the sum indicates summation over the bounded part of the potential, since $\alpha_{\tau,\Lambda'}$ is defined outside the set $\Lambda'$ by using the one point potential only and this is cancelled out in the derivation with respect to $s$. (Here we may need to consider first bounded approximation of the one point potential.) Hence we get
  \[\begin{split}
   \|\delta_{\Phi_O}\left(\alpha_{t ,\Lambda }(a_j^\sharp)\right)\| \leq 2\|\delta_{\Phi_O}\left(\int_0^t ds \alpha_{t-s ,\Lambda\setminus O_R } \sum_{O'\subset O_R }^{\tilde{}}\delta_{\Phi_{O'}}\left( \alpha_{s ,\Lambda }(a_j^\sharp) \right)\right)\|\\
   \leq 2\|\Phi_O\| \sum_{O'\subset O_R }^{\tilde{}}
   \int_0^t ds \| \delta_{\Phi_{O'}} \left(\alpha_{s ,\Lambda}(a_j^\sharp)\right) \|
\end{split}
 \]
 Since by our assumption
 \[
 c_\Phi\equiv 2 sup_{|diam (O)|\leq 2R}\sum_{O'\subset O_R }^{\tilde{}}\|\Phi_{O'}\|<\infty, 
 \]
we get the following bound
\[ 
   \|\delta_{\Phi_O}\left(\alpha_{t ,\Lambda }(a_j^\sharp)\right)\|  
   \leq  c_\Phi\sum_{O'\subset O_R }^{\tilde{}}
   \int_0^t ds \| \delta_{\Phi_{O'}} \left(\alpha_{s ,\Lambda}(a_j^\sharp)\right) \|
 \]
We iterate this bound , each time getting finite constant multiplier $c_{\Phi}$, a finite sum of at most $2^{(2R)^d} $ terms (which is a maximal number of $O'\subseteq O_R$) and iterated integral.
The iteration is terminated when one of the sets ${O'}_R$ contains $j$. The minimal number of steps to reach $j$ is at least $n\equiv [dist(O,j)/2R]$ (where the square bracket indicates the integer part). Thus, with $C\equiv 2^{(2R)^d} c_\Phi $, we obtain the following bound 
\[ 
   \|\delta_{\Phi_O}\left(\alpha_{t ,\Lambda }(a_j^\sharp)\right)\|  
   \leq  C^n  \frac{t^n}{n!} e^{Ct}\|a_j^\sharp \|
   \]
   At this point one needs to use additional assumptions on $a_j^\sharp$.
Using Stirling-de Moivre bound  
\[n!\geq n^{n+1/2} e^{-n}\]
we get 
\[ 
   \|\delta_{\Phi_O}\left(\alpha_{t ,\Lambda }(a_j^\sharp)\right)\|  
   \leq   \exp\{n(\log (Cet) -\log n)\} e^{Ct}\|a_j^\sharp \|
   \]
which for any fixed $t\in\mathbb{R}^+$ and $n \geq Cet \exp (m(2R)^d)$,  yields
\[ 
   \|\delta_{\Phi_O}\left(\alpha_{t ,\Lambda }(a_j^\sharp)\right)\|  
   \leq    e^{Ct-mn}\|a_j^\sharp \|
   \]
    \end{proof}
    
   Our result generalises that of \cite{LR} (\cite{Mat2}) who consider similar result for a system of bounded spin on a lattice.
   
\section{Convergence of $\mathbb{L}_p$ Norms}

In this section we provide a brief discussion of convergence of the sequence of states
$\omega^{(\Lambda)}$, $\Lambda\subset\subset\mathbb{Z}^d$ and corresponding $\mathbb{L}_p$ norms.
Let  $\Lambda_0\subset\Lambda$ and $\partial_{2R}\Lambda_0\equiv \{k\notin\Lambda_0: dist(k,\Lambda_0)\leq 2R\}$. We will use the following interpolation of the the potential \[\Phi_s\equiv \{\Phi_O, O\subset \Lambda\setminus\partial_{2R}\Lambda_0, s\Phi_{O'},  O'\cap \partial_{2R}\Lambda_0\neq \emptyset \}_{\sim},\]
where $\sim$ signifies that we exclude one point potential,  and denote by $\rho_{\Lambda,s}\equiv\rho_{\Lambda,\Lambda_0,s}$ the corresponding density matrix localised in $\Lambda$.
Note that for $s=0$, we have \[\rho_{\Lambda,s=0}=\rho_{\Lambda\setminus\partial_{2R}\Lambda_0}\rho_{\Lambda_0}\rho_{o,\partial_{2R}\Lambda_0}\]
where the density matrices on the right hand side commute and the corresponding state is a product state.
For $p\in \mathbb{N}$ and $f$ localised in $\Lambda_0$, we have
\[
Tr_\Lambda \rho_{\Lambda,s=0}f=Tr_{\Lambda_0} \rho_{\Lambda_0}f
\]
and 
\[
\|f\|_{p,\Lambda,s=0}= \|f\|_{p,\Lambda_0}.
\]
Next we have
\[\begin{split}
Tr_{\Lambda }\left(\left(\rho_{\Lambda }\right)^\frac{1}{4p}f^\ast\left( \rho_{\Lambda }\right)^\frac{2}{4p}f \left( \rho_{\Lambda }\right)^\frac{1}{4p}\right)^{2p}
&-Tr_{\Lambda }\left(\left(\rho_{\Lambda,s=0}\right)^\frac{1}{4p}f^\ast\left(\rho_{\Lambda,s=0}\right)^\frac{2}{4p}f \left(\rho_{\Lambda,s=0}\right)^\frac{1}{4p}\right)^{2p}  \\
&=\int_0^1ds \frac{d}{ds} Tr_{\Lambda}\left(\left(\rho_{\Lambda ,s}\right)^\frac{1}{4p}f^\ast\left( \rho_{\Lambda ,s}\right)^\frac{2}{4p}f \left( \rho_{\Lambda ,s}\right)^\frac{1}{4p}\right)^{2p}
\end{split}
\]
Next we note that
\[
\frac{d}{ds} \left(\rho_{\Lambda ,s}\right)^\frac{1}{4p} = \left(\frac{d}{ds} \exp\{-\frac{1}{4p}U_{\Lambda }(\Phi_s)\}\right)\frac{1}{Z_{\Lambda,s}^\frac{1}{4p}}
-\frac{1}{4p}\left(\rho_{\Lambda ,s}\right)^\frac{1}{4p}\cdot
\frac{1}{Z_{\Lambda,s}}\frac{d}{ds}Z_{\Lambda,s}
\]
For the derivative of the first factor on the right hand side we have
\[\begin{split}
\frac{d}{ds} \exp\{-\frac{1}{4p}U_{\Lambda }(\Phi_s)\}=\lim_{h\to 0}\frac{1}{h}\left(\exp\{-\frac{1}{4p}U_{\Lambda }(\Phi_{s+h})\}-\exp\{-\frac{1}{4p}U_{\Lambda }(\Phi_{s})\}\right)\\
=\lim_{h\to 0}\frac{1}{h}\int_0^1 d\tau \frac{d}{d\tau} \exp\{-\frac{\tau}{4p}U_{\Lambda }(\Phi_{s+h})\} \exp\{-\frac{1-\tau}{4p}U_{\Lambda }(\Phi_{s})\}\\
=-\frac{1}{4p}\int_0^1 d\tau   \exp\{-\frac{\tau}{4p}U_{\Lambda }(\Phi_{s+h})\} \lim_{h\to 0}\frac{1}{h}\left( U_{\Lambda }(\Phi_{s+h})- U_{\Lambda }(\Phi_{s})\right)\exp\{-\frac{1-\tau}{4p}U_{\Lambda }(\Phi_{s})\}\\
= -\frac{1}{4p} \sum_{O\subset \partial_{2R}\Lambda }^\sim \int_0^1 d\tau   \left(\rho_{\Lambda ,s}\right)^\frac{\tau}{4p}  (\Phi_O) \left(\rho_{\Lambda ,s}\right)^{-\frac{\tau}{4p} } \rho_{\Lambda ,s}^\frac{1}{4p}\\
\equiv 
-\frac{1}{4p} \sum_{O\subset \partial_{2R}\Lambda }^\sim \int_0^1 d\tau   \left(\alpha_{\Lambda ,s}(-i\frac{\tau}{4p})  (\Phi_O)  \right) \rho_{\Lambda ,s}^\frac{1}{4p} 
\end{split}
\]
where $\alpha_{\Lambda ,s}(-i\frac{\tau}{4p})$ denotes the automorphism corresponding to $\Phi_{s}$ at time $-i\frac{\tau}{4p}$, and similarly for the second power of the density. Hence, we also get
\[\begin{split}
   \left| \frac{1}{Z_{\Lambda,s}}\frac{d}{ds}Z_{\Lambda,s}\right| &= \left|-\sum_{O\subset \partial_{2R}\Lambda_0}^\sim \int_0^1 d\tau  \frac{1}{Z_{\Lambda,s}} Tr_\Lambda\left(\rho_{\Lambda ,s}\right)^{\tau } (\Phi_O) \left(\rho_{\Lambda ,s}\right)^{- \tau } \rho_{\Lambda ,s} \right|
\\
&= \left|-\sum_{O\subset \partial_{2R}\Lambda_0}^\sim \int_0^1 d\tau    \omega_{\Lambda,s}  \left(\Phi_O\right)\right|
\leq \sum_{O\subset \partial_{2R}\Lambda_0}^\sim \|\Phi_O\|
\end{split}
\]
 
Using the above, we obtain 
\[\begin{split}
&\frac{d}{ds} Tr_{\Lambda }\left(\left(\rho_{\Lambda ,s}\right)^\frac{1}{4p}f^\ast\left( \rho_{\Lambda ,s}\right)^\frac{2}{4p}f \left( \rho_{\Lambda ,s}\right)^\frac{1}{4p}\right)^{2p} \leq \\
&\qquad\qquad\qquad\qquad\qquad  \leq Tr_{\Lambda }\left(\left(\rho_{\Lambda ,s}\right)^\frac{1}{4p}f^\ast\left( \rho_{\Lambda ,s}\right)^\frac{2}{4p}f \left( \rho_{\Lambda ,s}\right)^\frac{1}{4p}\right)^{2p} 
\left( \sum_{O\subset \partial_{2R}\Lambda_0}^\sim \left( \|\Phi_O\| +
\sup_{\Lambda,s,\tau\in[0,1]} 
\|\alpha_{\Lambda ,s}(-i\frac{\tau}{4p})  (\Phi_O) \|
\right)\right)
\end{split}
\]
and hence we arrive at the following bound
\begin{equation} \label{Compactness}
    \| f\|_{\Lambda,p}\leq \| f\|_{\Lambda_0,p}e^{C|\partial_{2R}\Lambda_0|}
\end{equation} 
with a constant 
\[
C\leq \sum_{O\subset \partial_{2R}\Lambda_0}^\sim \left( \|\Phi_O\| +
\sup_{\Lambda,s,\tau\in[0,1]} \|\alpha_{\Lambda ,s}(-i\frac{\tau}{4p})  (\Phi_O) \|
\right)
\]
which according to the considerations of the previous subsection is finite
under suitable assumptions on bounded part of the potential.\\

The above bounds \eqref{Compactness} provide compactness of the set of states
$\omega_{\Lambda}$, $\Lambda\subset\subset\mathbb{Z}^d$,
and we have the following possibility of defining  $\mathbb{L}_p(\omega)$ norms associated to a state $\omega\equiv \lim_{\Lambda_k\to\mathbb{Z}^d}\omega_{\Lambda_k}$ for some subsequence $\Lambda_k\subset\Lambda_{k+1}$,
\[
\|f\|_{\omega,p}\equiv \limsup_{\Lambda\to\mathbb{Z}^d} \|f\|_{\omega_{\Lambda},p}
\]
Under additional assumptions on the interactions it is possible to use the ideas utilised above to prove convergence of the $\|f\|_{\omega_{\Lambda},p}$ as $\Lambda\to \infty$.\\
Note that for positive $f$ the symmetric $\mathbb{L}_{\omega_\Lambda,1}$ coincides with $\omega_\Lambda(f)$, so the problem of convergence is the same for both. On the other hand for 
the symmetric $\mathbb{L}_{\omega_\Lambda,2}$ , given the convergence of the sequence of state and 
the modular operator, we get convergence for corresponding norms.
Given $\mathbb{L}_{\omega,1}$ and $\mathbb{L}_{\omega,2}$ one can use interpolation theory to get all the intermediate norms and spaces and then by duality one can define the norms and $\mathbb{L}_{\omega,p}$ spaces for $p\in(2,\infty)$.\\

\section{Noncommutative Dirichlet forms and Markov Generators}
To give the definition of Dirichlet forms (\cite{Cip3,CZ,P1,MZ,SQV}), we first describe the admissible function:
\begin{definition}{\label{admissible}}
An analytic function $\eta:D\rightarrow\mathbb{C}$ on a domain $D$ containing the strip $Im\:z\in [-1/4,1/4]$ is said to be admissible function if the following holds:
\begin{enumerate}
    \item $\eta(t)\geq 0$ for $t\in\mathbb{R}$,
    \item $\eta(t+i/4)+\eta(t-i/4)\geq 0$ for $t\in\mathbb{R}$,
    \item there exists $M>0$ and $p>1$ such that the bound 
    \[|\eta(t+is)|\leq M(1+|t|)^{-p}\]
    holds uniformly in $s\in [-1/4,1/4]$.
\end{enumerate}
\end{definition}
A nice example of such function is as a gaussian smoothing of the following
\begin{equation}\label{Ch}
    \eta(t)\equiv \frac{e^{i \kappa t}}{cosh(2n\pi t)}.
\end{equation}

For a collection of other examples see \cite{CZ}, \cite{P1}.



Given a family of local elements $X_j\in\mathcal{D}$, $j\in\mathbb{Z}^d$, we have discussed above, we can define derivations in directions $\alpha_t(X_j)$ and, on a dense domain containing $\mathcal{D}_\varepsilon$,   define the following Dirichlet form in $\mathbb{L}_{\omega,2}$
\[\mathcal{E}_j(f)\equiv\int_{\mathbb{R}} \left(\nu_j\langle \delta_{\alpha_t(X_j) } (f), \delta_{\alpha_t(X_j)} (f)\rangle_{\omega}+\mu_j\langle \delta_{\alpha_t(X_j^\ast) } (f), \delta_{\alpha_t(X_j^\ast)} (f)\rangle_{\omega}\right) \eta(t)dt
\]
for 
$\nu_j,\mu_j$
and a finite $\Lambda\subset\mathbb{Z}^d$
\[
\mathcal{E}_\Lambda(f)\equiv \sum_{j\in \Lambda} \mathcal{E}_j(f)\]
Some other Dirichlet form can be defined 
, see e.g.\cite{CM}, as follows
\[
\tilde{\mathcal{E}}_\Lambda(f)\equiv \sum_{j\in \Lambda} \left(\nu_j\langle  \delta_{E_j } (f), \delta_{E_j} (f)\rangle_{\omega}+\mu_j\langle \delta_{E_j^\ast } (f), \delta_{E_j^\ast} (f)\rangle_{\omega}\right)
\tag{$\tilde{\mathcal{E}}$}
\]
where $E_j$ are the eigenvectors of the modular operator $\alpha_{\omega}(\pm\tfrac{i}2)$, 
associated to the state $\omega$ at time $\pm\tfrac{i}2$, such that

\[
\alpha_{\omega}(\pm\tfrac{i}2)(E_j)
=e^{\pm\xi_j} E_j,\] 
for some $\xi_j\in\mathbb{R}$ , and $\nu_j,\mu_j\in(0,\infty)$.

Now, the Markov generator corresponding to the first, respectively the second, Dirichlet form is formally give by
\[\langle f,-\tilde{\mathfrak{L}}_\Lambda f\rangle=\tilde{\mathcal{E}}_\Lambda(f) 
,\quad \text{resp}.\quad 
\langle f,-\mathfrak{L}_\Lambda f\rangle=\mathcal{E}_\Lambda(f).\]
Using operation $\star$ of taking the adjoint (with respect to the scalar product), we have
\[
-\mathfrak{L}_j (f) = \int   \left(\nu_j \delta^\star_{\alpha_t(X_j)}\delta_{\alpha_t(X_j)}(f) +  \mu_j \delta^\star_{\alpha_t(X_j^\ast)}\delta_{\alpha_t(X_j^\ast)}(f) \right)\eta(t) dt .
\]

For the case of infinite number of eigenvectors 
\[
X_j=\left(\int\alpha_t(B_{j,l})\eta(t)dt\right)_{l=1,..,K}, \quad j\in\mathbb{Z}^d, 
\]
the tilded generator is as follows
\[ 
-\tilde{\mathfrak{L}} (f^\ast f) =    \sum_{j\in\mathbb{Z}^d} \left(\kappa\delta^\star_{X_j}\delta_{X_j}(f^\ast f) +   \mu \delta^\star_{X_j^\ast}\delta_{ X_j^\ast}(f^\ast f)\right).  
 \]
The corresponding Markovian form is defined by
\[\Gamma_{1,\Lambda}(f)=\frac{1}{2}\left(\mathfrak{L}_\Lambda(f^\ast f)-f^\ast\mathfrak{L}_\Lambda( f)-\mathfrak{L}_\Lambda(f^\ast )f\right)\]
and in our setup is given by 
\[\begin{split}2\Gamma_{1,\Lambda}(f)&=  \sum_{j\in\Lambda} \left\{\quad \int   \left(\nu_j\delta^\star_{\alpha_t(X_j)}\delta_{\alpha_t(X_j)}(f^\ast f) +  \mu_j \delta^\star_{\alpha_t(X_j^\ast)}\delta_{\alpha_t(X_j^\ast)}(f^\ast f) \right)\eta(t) dt \right. \\
&
-f^\ast\left(\int   \left(\nu_j\delta^\star_{\alpha_t(X_j)}\delta_{\alpha_t(X_j)}(f) +   \mu_j\delta^\star_{\alpha_t(X_j^\ast)}\delta_{\alpha_t(X_j^\ast)}(f) \right)\eta(t) dt)\right) 
\\
&\left. -\int   \left(\nu_j\delta^\star_{\alpha_t(X_j)}\delta_{\alpha_t(X_j)}(f^\ast) +   \mu_j\delta^\star_{\alpha_t(X_j^\ast)}\delta_{\alpha_t(X_j^\ast)}(f^\ast) \right)\eta(t) dt f \quad \right\} 
\end{split}\]
and similarly in the tilded case.\\

Using Proposition \ref{AdjointPro} after lengthy computations (provided for the reader convenience in an Appendix), one gets (for every $j$) a result of the following form.
\begin{proposition}\label{Gamma1Proposition}
   \[  
      \Gamma_{1}(f) =-\frac{1}{2}\int \left( \left|\delta_{\alpha_{t-i/4}(X^\ast)}(f)\right|^2+
    \left|\delta_{\alpha_{t-i/4}(X)}(f))\right|^2 \right) \left(\eta(t+i/4) 
   +\eta({t-i/4})\right) dt
    \]
    and respectively   
\[
\tilde{\Gamma}_1(f) =  \sum_{j\in\mathbb{Z}^d} \left(\kappa e^{-\xi}   |\delta_{X_j}(f) |^2  
+   \mu  e^\xi| \delta_{X_j^{\ast}} (f) |^2 \right)
\]
In the infinite dimensional case to secure the dense domain it is necessary to have the finite speed of propagation of information (as even for localised $B_{j,l}$, in general the eigenvectors are not localised).
\end{proposition}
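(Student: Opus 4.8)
The plan is to start from the definition $2\Gamma_{1,\Lambda}(f)=\mathfrak{L}_\Lambda(f^\ast f)-f^\ast\mathfrak{L}_\Lambda(f)-\mathfrak{L}_\Lambda(f^\ast)f$ and reduce everything to a single site $j$ and a single direction: since the generator is a sum of pieces of the form $\delta^\star_Y\delta_Y$ with $Y=\alpha_t(X)$ (and separately $Y=\alpha_t(X^\ast)$) integrated against $\eta(t)\,dt$, it suffices to compute, for one such $Y$,
\[
C(t)\equiv\delta^\star_Y\delta_Y(f^\ast f)-f^\ast\delta^\star_Y\delta_Y(f)-\delta^\star_Y\delta_Y(f^\ast)f .
\]
First I would expand $\delta_Y(f^\ast f)=\delta_Y(f^\ast)f+f^\ast\delta_Y(f)$ by the ordinary Leibniz rule (the derivation property of $\delta_Y$), and then apply $\delta^\star_Y$ using the modified Leibniz rule of Proposition \ref{AdjointPro}. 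Subtracting the two ``pure second order'' terms cancels the contributions $\delta^\star_Y\delta_Y(f^\ast)f$ and $f^\ast\delta^\star_Y\delta_Y(f)$, and after inserting the explicit adjoint formula $\delta^\star_Y(g)=i\bigl(g\,\alpha_{-i/2}(Y^\ast)-\alpha_{i/2}(Y^\ast)\,g\bigr)$ the anomalous modular term combines with the remaining cross terms. Using $Y^\ast=\alpha_t(X^\ast)$ and $\alpha_{\pm i/2}(Y^\ast)=\alpha_{t\pm i/2}(X^\ast)$, the outcome collapses to a sum of two bilinear expressions,
\[
C(t)=-\,\delta_{\alpha_t(X)}(f^\ast)\,\delta_{\alpha_{t-i/2}(X^\ast)}(f)-\delta_{\alpha_{t+i/2}(X^\ast)}(f^\ast)\,\delta_{\alpha_t(X)}(f).
\]

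The key step is then to recognise each bilinear term as a genuine squared modulus. I would use the relation $\bigl(\delta_{\alpha_s(X)}(f)\bigr)^\ast=\delta_{\alpha_{\bar s}(X^\ast)}(f^\ast)$, so that $|\delta_{\alpha_{t-i/4}(X^\ast)}(f)|^2=\delta_{\alpha_{t+i/4}(X)}(f^\ast)\,\delta_{\alpha_{t-i/4}(X^\ast)}(f)$ and similarly with $X$ and $X^\ast$ interchanged. The modular times appearing in $C(t)$ differ from these symmetric ones by $\pm i/4$, so I would invoke the analyticity and polynomial decay of $\eta$ (Definition \ref{admissible}, condition 3) to deform the contour of the $t$-integral by $\pm i/4$; this turns each bilinear term into $|\delta_{\alpha_{t-i/4}(X^\sharp)}(f)|^2$ at the price of replacing $\eta(t)$ by $\eta(t\pm i/4)$. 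Adding the $\nu$- and $\mu$-pieces (with $\nu_j=\mu_j$) symmetrises the weight into $\eta(t+i/4)+\eta(t-i/4)$, which is nonnegative by condition 2 of admissibility, consistent with $\Gamma_1\geq 0$, and yields the stated expression. For the tilded form the same computation applies but is far shorter: with $X_j=E_j$ an eigenvector, $\alpha_{\pm i/2}(E_j)=e^{\pm\xi_j}E_j$, the modified Leibniz rule collapses to its simplified version from the Remark after Proposition \ref{AdjointPro}, there is no $t$-integral to deform, and the bilinear terms become $e^{\pm\xi_j}|\delta_{E_j^\sharp}(f)|^2$ directly, giving $\tilde\Gamma_1$.

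The main obstacle I anticipate is the contour deformation for \emph{unbounded} $X$: one must justify the analytic continuation $s\mapsto\delta_{\alpha_s(X)}(f)$ and the vanishing of the horizontal-strip boundary contributions, controlling the shifted operators $\alpha_{t\pm i/4}(X^\sharp)$ on a suitable dense domain. Here the space $\mathcal{D}_\varepsilon$ of Section 2.1 is the natural choice, since for superlinear one-point potentials $\alpha_{\pm is}(A^\sharp)$ need not lie in the relevant $\mathbb{L}_p$. In the infinite-volume limit the further difficulty is that the eigenvectors $E_j$ (and the $\alpha_{t-i/4}(X_j)$) are not localised, so that the sum over $j\in\mathbb{Z}^d$ together with the $t$-integral defines an operator on a dense domain only after invoking the finite speed of propagation of information estimate of Section 4; this is precisely the caveat recorded in the statement.
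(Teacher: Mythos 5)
Your proposal is correct and follows essentially the same route as the paper's own Appendix proof: the ordinary Leibniz rule combined with the modified Leibniz rule of Proposition \ref{AdjointPro} to cancel the second-order terms and isolate the cross terms, a contour shift of the $t$-integral by $\pm i/4$ (justified by admissibility of $\eta$) to exhibit the squares $|\delta_{\alpha_{t-i/4}(X^\sharp)}(f)|^2$ and symmetrise the weight into $\eta(t+i/4)+\eta(t-i/4)$, and the eigenvector shortcut (no contour shift needed) for $\tilde\Gamma_1$, with the same domain caveat via finite speed of propagation. Note only that your computation yields the expression with a plus sign, which agrees with formula (A.2) of the paper's Appendix and with positivity of $\Gamma_1$; the overall minus sign in the displayed statement of Proposition \ref{Gamma1Proposition} is an internal sign inconsistency of the paper (stemming from whether $\mathfrak{L}$ or $-\mathfrak{L}$ equals $\int\delta^\star\delta\,\eta\,dt$), not a defect of your argument.
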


\section{Examples of models} 
In this section we provide few explicit examples of models with interesting phenomena.
\subsection{Mean Field Models}
\begin{example}(Mean Field Model){\label{ex1}} 
With $[X_j,X_k^\ast]=\delta_{j,k}$, define
\[
\mathbf{X}_\Lambda \equiv \frac1{\sqrt{|\Lambda|}}\sum_{k\in\Lambda} X_k
\]
Then the following CCR holds 
 \[\left[\mathbf{X}_\Lambda, \mathbf{X}_\Lambda^\ast \right] 
 = id .\]
    Consider the following quadratic Hamiltonian
    \[U_\Lambda=  \mathbf{X}_\Lambda^\ast \mathbf{X}_\Lambda=\frac1{|\Lambda|}\sum_{j,k\in\Lambda} X_j^\ast X_k \]
 %
Then the corresponding modular dynamics of $\mathbf{X}_\Lambda$ and $\mathbf{X}_\Lambda^\ast$ associated  to $U_\Lambda$ is given by
\[ \alpha_{t,\Lambda} (\mathbf{X}_\Lambda)=e^{i\beta t}\mathbf{X}_\Lambda\qquad and \qquad \alpha_{t,\Lambda} (\mathbf{X}_\Lambda^\ast)=e^{-i\beta t}\mathbf{X}_\Lambda^\ast  \]
 and we have the following result.
   
\begin{proposition} \label{Pro 7.1}
The Dirichlet form in the directions of $\mathbf{X}_\Lambda$ and $\mathbf{X}_\Lambda^\ast$ with respect to $U_\Lambda$ is given by
\begin{equation}{\label{DF1}}
    \mathcal{E}_\Lambda(f)= \hat{\eta} (0) \left( \langle \delta_{\mathbf{X}_\Lambda} (f), \delta_{\mathbf{X}_\Lambda} (f)
\rangle_{\omega}+\langle \delta_{ \mathbf{X}_\Lambda^\ast} (f), \delta_{\mathbf{X}_\Lambda^\ast} (f)
\rangle_{\omega} \right). 
\end{equation}
where $\hat{\eta} (0)=\int_{\mathbb{R}} \eta(t)e^{ist}dt$ .
This form has the dense domain $D(\mathcal{E}_\Lambda)\supset \mathcal{D}_\Lambda$. 
\end{proposition}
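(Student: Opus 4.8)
The plan is to substitute the explicit modular eigenvalues into the general definition of $\mathcal{E}_\Lambda$ from Section 6 and to exploit that these eigenvalues are \emph{unimodular} phases, so that the $t$-dependence disappears from the integrand and the $\eta$-integral collapses to its zeroth Fourier mode $\hat\eta(0)=\int_\mathbb{R}\eta(t)\,dt$. First I would record the two inputs already in place: the CCR $[\mathbf{X}_\Lambda,\mathbf{X}_\Lambda^\ast]=id$ and the modular law $\alpha_{t,\Lambda}(\mathbf{X}_\Lambda)=e^{i\beta t}\mathbf{X}_\Lambda$, $\alpha_{t,\Lambda}(\mathbf{X}_\Lambda^\ast)=e^{-i\beta t}\mathbf{X}_\Lambda^\ast$ (which itself follows from $U_\Lambda=\mathbf{X}_\Lambda^\ast\mathbf{X}_\Lambda$ being the number operator of the single collective mode together with the commutation relations \eqref{R1}). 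Specialising the form $\mathcal{E}_j$ to the one collective direction $\mathbf{X}_\Lambda$ with $\nu_\Lambda=\mu_\Lambda=1$ gives
\[
\mathcal{E}_\Lambda(f)=\int_\mathbb{R}\Bigl(\langle\delta_{\alpha_{t,\Lambda}(\mathbf{X}_\Lambda)}(f),\delta_{\alpha_{t,\Lambda}(\mathbf{X}_\Lambda)}(f)\rangle_\omega+\langle\delta_{\alpha_{t,\Lambda}(\mathbf{X}_\Lambda^\ast)}(f),\delta_{\alpha_{t,\Lambda}(\mathbf{X}_\Lambda^\ast)}(f)\rangle_\omega\Bigr)\eta(t)\,dt.
\]

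The key step is the phase cancellation. Since $\delta_X$ is linear in its direction $X$, we have $\delta_{e^{i\beta t}\mathbf{X}_\Lambda}(f)=e^{i\beta t}\delta_{\mathbf{X}_\Lambda}(f)$; and since $\langle\cdot,\cdot\rangle_\omega$ is conjugate-linear in its first slot and linear in its second (immediate from $\langle f,g\rangle_\omega=Tr(\rho^{1/2}f^\ast\rho^{1/2}g)$), the scalar factors combine as $\overline{e^{i\beta t}}\,e^{i\beta t}=|e^{i\beta t}|^2=1$. Hence the first summand equals $\langle\delta_{\mathbf{X}_\Lambda}(f),\delta_{\mathbf{X}_\Lambda}(f)\rangle_\omega$ independently of $t$, and the identical argument with $|e^{-i\beta t}|^2=1$ shows the second summand equals the $t$-independent $\langle\delta_{\mathbf{X}_\Lambda^\ast}(f),\delta_{\mathbf{X}_\Lambda^\ast}(f)\rangle_\omega$. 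Pulling these constants out of the integral leaves the factor $\int_\mathbb{R}\eta(t)\,dt=\hat\eta(0)$, which is precisely \eqref{DF1}. I would emphasise that the two terms are handled separately, so no mixed frequency $e^{\pm2i\beta t}$ ever appears.

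It then remains to settle convergence and the domain. The integral $\int_\mathbb{R}\eta(t)\,dt$ converges absolutely by admissibility, namely item 3 of Definition \ref{admissible}, which gives $|\eta(t)|\le M(1+|t|)^{-p}$ with $p>1$. For the domain, $\mathbf{X}_\Lambda$ and $\mathbf{X}_\Lambda^\ast$ are finite linear combinations of the $X_k$ and $X_k^\ast$, hence lie in $\mathcal{D}_\Lambda\subset\mathcal{D}$, so the Proposition of Section 2 establishing $\mathcal{D}\cup\mathcal{A}\subset D(\delta_X)$ yields $\mathcal{D}_\Lambda\subset D(\delta_{\mathbf{X}_\Lambda})\cap D(\delta_{\mathbf{X}_\Lambda^\ast})$; thus both scalar products are finite on $\mathcal{D}_\Lambda$ and $D(\mathcal{E}_\Lambda)\supset\mathcal{D}_\Lambda$, which is dense. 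The one genuinely non-algebraic point, and the step I would want to verify carefully, is the interchange of the $t$-integral with the scalar product (equivalently, weak $\mathbb{L}_2(\omega)$-integrability of $t\mapsto\delta_{\alpha_{t,\Lambda}(\mathbf{X}_\Lambda)}(f)$): this is exactly where the uniform boundedness of the unimodular factors combines with the decay of $\eta$ to license the manipulation, and everything else is a direct computation.
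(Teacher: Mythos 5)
Your proposal is correct and follows essentially the same route as the paper's proof: both exploit that the modular dynamics multiplies $\mathbf{X}_\Lambda$ and $\mathbf{X}_\Lambda^\ast$ by unimodular phases $e^{\pm i\beta t}$, which cancel in the sesquilinear scalar product, so the $\eta$-integral collapses to $\hat\eta(0)$. Your additional remarks on absolute convergence of $\int\eta(t)\,dt$ and on the domain via the Section~2 proposition are details the paper leaves implicit, but they do not change the argument.
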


Proof: 
We compute the derivations

\[\delta_{\alpha_t \left(\mathbf{X}_\Lambda\right)} (f) = i [\alpha_t \left(\mathbf{X}_\Lambda\right), f] = {ie}^{i \beta  t} [\left(\mathbf{X}_\Lambda\right), f]\]
\[(\delta_{\alpha_t \left(\mathbf{X}_\Lambda\right)} (f))^{\ast} = - i [f^{\ast}, \alpha_t \left(\mathbf{X}_\Lambda\right)^{\ast}]
= - e^{- i \beta t} i [f^{\ast}, \left(\mathbf{X}_\Lambda\right)^{\ast}]\]

Inserting this into the formula for the Dirichlet for we arrive at \eqref{DF1}$\blacksquare$

The associated Markovian generator  is given by
\begin{equation}{\label{L}}
 \begin{split}
-\mathfrak{L}_\Lambda(f)= 
\hat{\eta} (0) \left(-e^{-\frac{\beta}{2} }[\mathbf{X}_\Lambda,f] \mathbf{X}_\Lambda^{\ast}+e^{\frac{\beta}{2} } \mathbf{X}_\Lambda^{\ast}[\mathbf{X}_\Lambda,f]-e^{\frac{\beta}{2} }[\mathbf{X}_\Lambda^\ast,f] \mathbf{X}_\Lambda+e^{-\frac{\beta}{2} }\mathbf{X}_\Lambda [\mathbf{X}_\Lambda^\ast,f]  \right)
 \end{split}   
\end{equation}

 i.e. the corresponding Dirichlet operator is the generator of quantum O-U semigroup which maps symmetric polynomials in creation/anihilation operators into itself.  By \cite{CFL} the Poincar{\'e} inequality holds and by \cite{CaS} the Log Sobolev inequality is satisfied. The limiting theory can be described in the framework of \cite{RW}.
\end{example}


\begin{example}(Mean Field Model 2){\label{ex1Bis}} 
With $[X_j,X_k^\ast]=\delta_{j,k}$, an integer $n>1$ and $\varepsilon\in[0,1]$, define
\[
\mathbf{X}_{n,\Lambda} \equiv \frac1{|\Lambda|^\varepsilon}\sum_{k\in\Lambda} X_k^n
\]
Then we have
 \[\left[\mathbf{X}_{n,\Lambda}, \mathbf{X}_{n,\Lambda}^\ast \right] = \frac1{ |\Lambda|^{2\varepsilon}}   \sum_{k\in\Lambda} [X_k^n,\sum_{j\in\Lambda} X_j^{\ast n}] =\frac1{ |\Lambda|^{2\varepsilon}}  \sum_{k\in\Lambda} [X_k^n,  X_k^{\ast n}]= \frac1{ |\Lambda|^{2\varepsilon}}  \sum_{k\in\Lambda}P_n(N_k);\]
with some  polynomial $P_n$.
\\
%
    Consider the following quadratic Hamiltonian
    \[U_\Lambda=  \mathbf{X}_\Lambda^\ast \mathbf{X}_\Lambda=\frac1{|\Lambda|}\sum_{j,k\in\Lambda} X_j^\ast X_k \]
(which commutes with $\sum_{k\in \Lambda}U_k$).
Now we obtain 

\[\begin{split}
[U_\Lambda,\mathbf{X}_{n,\Lambda}]&\equiv -i\delta_{U_\Lambda}(\mathbf{X}_{n,\Lambda})=
\frac1{|\Lambda|^{\frac12+\varepsilon}}\sum_{j \in\Lambda}\left[ X_j^\ast  ,  X_j^n\right]
\frac1{\sqrt{|\Lambda|}} \sum_{ k\in\Lambda}X_k\\
&=-n \frac1{|\Lambda|^{\frac12}}  \mathbf{X}_{n-1,\Lambda}
\mathbf{X}_{\Lambda}
\end{split}
\]
and
\[\begin{split}
(-i\delta_{U_\Lambda})^2(\mathbf{X}_{n,\Lambda}) 
&=-i\delta_{U_\Lambda}\left(-n \frac1{|\Lambda|^{\frac12}}  \mathbf{X}_{n-1,\Lambda}
\mathbf{X}_{\Lambda}\right)\\
&= n(n-1) \frac1{|\Lambda| }  \mathbf{X}_{n-2,\Lambda}
\mathbf{X}_{\Lambda}^2 + n \frac1{|\Lambda|^{\frac12}}  \mathbf{X}_{n-1,\Lambda}
\mathbf{X}_{\Lambda}
\end{split}
\]
Using the formula
\[
\delta_{U_\Lambda}(\mathbf{X}_{\Lambda}^m) = -m \mathbf{X}_{\Lambda}^m
\]
we have
\[\begin{split}
(-i\delta_{U_\Lambda})^3(\mathbf{X}_{n,\Lambda}) 
&= (-i\delta_{U_\Lambda})\left(n(n-1) \frac1{|\Lambda| }  \mathbf{X}_{n-2,\Lambda}
\mathbf{X}_{\Lambda}^2 + n \frac1{|\Lambda|^{\frac12}}  \mathbf{X}_{n-1,\Lambda}
\mathbf{X}_{\Lambda} \right)\\
&=\left(-n(n-1)(n-2)\frac1{|\Lambda|^{\frac3{2}} }  \mathbf{X}_{n-3,\Lambda} 
\mathbf{X}_{\Lambda}^3  -3n(n-1)  \frac1{|\Lambda| }  \mathbf{X}_{n-2,\Lambda} 
\mathbf{X}_{\Lambda}^2 - n \frac1{|\Lambda|^{\frac12}}  \mathbf{X}_{n-1,\Lambda}
\mathbf{X}_{\Lambda} \right)
\end{split}
\]
By induction , for $k = n-1$, we get 
\[ 
\delta_{U_\Lambda}^{n-1}(\mathbf{X}_{n,\Lambda})
 =\sum_{l=1}^{n-1}(-i)^l 
 a_{k,l} 
 \frac1{|\Lambda|^{\frac{l}2}}  \mathbf{X}_{n-l,\Lambda}
\mathbf{X}_{\Lambda}^l 
\]
with some coefficients $a_{k,l}$.
For $k>n-1$ we will not produce different operators,
just the coefficient will be changing according to the following formula
\[ 
 a_{k+1,l}=-(l-1) a_{k,l-1} -(n-l) a_{k,l},
\]
and thus asymptotically they will be growing at most as $2^k n^k$. Additionally each term $\mathbf{X}_{n-l,\Lambda}
\mathbf{X}_{\Lambda}^l$ comes accompanied by a factor 
$\frac1{|\Lambda|^{\frac{l}2}}$
\begin{lemma} \label{Lem 7.1}
The modular dynamics of $\mathbf{X}_{n,\Lambda}$ and $\mathbf{X}_{n,\Lambda}^\ast$ associated  to $U_\Lambda$ given by
\[ 
\alpha_{t,\Lambda} (\mathbf{X}_{n,\Lambda}^\sharp)=\sum_{k=0}^\infty \frac{1}{k!}(-it)^k\delta_{U_\Lambda}^{k}(\mathbf{X}_{n,\Lambda}^\sharp)  \]
is well defined. 
\end{lemma}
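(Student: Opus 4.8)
The plan is to recognise the right-hand side as the formal exponential $e^{-it\delta_{U_\Lambda}}(\mathbf{X}_{n,\Lambda}^\sharp)$ and to prove its convergence by exhibiting a finite-dimensional space of operators that is invariant under $\delta_{U_\Lambda}$ and on which $\delta_{U_\Lambda}$ acts with at most geometric growth. For fixed finite $\Lambda$ I would introduce the span
\[
\mathcal{V}_n\equiv\mathrm{span}\left\{\mathbf{X}_{n-l,\Lambda}\,\mathbf{X}_\Lambda^{\,l}:\ l=0,1,\dots,n-1\right\},
\]
an at most $n$-dimensional space of polynomials in the $X_k^\sharp$ (with $\mathbf{X}_{n,\Lambda}$ the seed, $l=0$), and show that $\delta_{U_\Lambda}(\mathcal{V}_n)\subset\mathcal{V}_n$.

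The invariance is the computational heart and follows from two relations already available, namely the eigenrelation $\delta_{U_\Lambda}(\mathbf{X}_\Lambda^{m})=-m\,\mathbf{X}_\Lambda^{m}$ and the lowering relation $-i\delta_{U_\Lambda}(\mathbf{X}_{m,\Lambda})=-m\,|\Lambda|^{-1/2}\mathbf{X}_{m-1,\Lambda}\mathbf{X}_\Lambda$, combined with the Leibniz rule for the derivation $\delta_{U_\Lambda}$. Applying these to a generic monomial $\mathbf{X}_{n-l,\Lambda}\mathbf{X}_\Lambda^{\,l}$ produces only a diagonal contribution (from differentiating $\mathbf{X}_\Lambda^{\,l}$, staying in sector $l$) and a raising contribution (from differentiating $\mathbf{X}_{n-l,\Lambda}$, moving to sector $l+1$); the raising terminates at $l=n-1$ because $\mathbf{X}_{1,\Lambda}\mathbf{X}_\Lambda^{\,n-1}$ is proportional to the pure eigenvector $\mathbf{X}_\Lambda^{\,n}$. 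Writing $\delta_{U_\Lambda}^{\,k}(\mathbf{X}_{n,\Lambda})=\sum_{l=1}^{n-1}(-i)^l a_{k,l}|\Lambda|^{-l/2}\mathbf{X}_{n-l,\Lambda}\mathbf{X}_\Lambda^{\,l}$ then reproduces exactly the bidiagonal recursion $a_{k+1,l}=-(l-1)a_{k,l-1}-(n-l)a_{k,l}$ stated before the lemma.

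From the recursion I would extract the quantitative bound by a one-line induction: setting $A_k\equiv\max_l|a_{k,l}|$, since $(l-1)+(n-l)=n-1$ one gets $A_{k+1}\le(n-1)A_k$, hence $|a_{k,l}|\le(n-1)^k$, uniformly in $|\Lambda|$ (consistent with, and sharper than, the claimed $2^k n^k$). Each of the finitely many monomials $\mathbf{X}_{n-l,\Lambda}\mathbf{X}_\Lambda^{\,l}$ has finite norm $c_{n,l,\Lambda}$ in every $\mathbb{L}_p(\omega_o)$ (the Gaussian weight $\rho_{o,\Lambda}$ taming the polynomial growth) and maps the finite-particle domain $\mathcal{D}_\Lambda$ into itself. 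Consequently
\[
\left\|\frac{(-it)^k}{k!}\,\delta_{U_\Lambda}^{\,k}(\mathbf{X}_{n,\Lambda})\right\|_{\omega_o,p}\le\frac{|t|^k}{k!}\,(n-1)^k\sum_{l=1}^{n-1}c_{n,l,\Lambda},
\]
and summation over $k$ gives a bound $\le C_{n,\Lambda}\,e^{(n-1)|t|}<\infty$, so the series converges absolutely in $\mathbb{L}_p(\omega_o)$ (equivalently, strongly on $\mathcal{D}_\Lambda$). Because $\mathcal{V}_n$ is finite-dimensional the limit again lies in $\mathcal{V}_n$, i.e.\ it is a genuine finite polynomial in the $X_k^\sharp$; since the majorant is entire of exponential type in $t$, the definition extends to complex $t$ and identifies the series with the modular automorphism. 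The case $\mathbf{X}_{n,\Lambda}^\ast$ is identical after taking adjoints, using that $U_\Lambda$ is self-adjoint so that $\delta_{U_\Lambda}(B^\ast)=(\delta_{U_\Lambda}B)^\ast$ and the conjugate span is preserved.

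The step I expect to be the main obstacle is not the combinatorics of the recursion but the correct handling of unboundedness: one must fix from the outset the topology in which ``well defined'' is meant. My resolution is to reduce everything to the finite-dimensional $\mathcal{V}_n$, where all norms are equivalent, and then to transport the geometric bound either to $\mathbb{L}_p(\omega_o)$ or to strong convergence on $\mathcal{D}_\Lambda$; the only genuinely delicate verification is that repeated application of $\delta_{U_\Lambda}$ never leaks out of $\mathcal{V}_n$, which requires checking the boundary sectors ($l=0$, where no diagonal term is generated, and $l=n-1$, where the raising collapses onto the eigenvector $\mathbf{X}_\Lambda^{\,n}$).
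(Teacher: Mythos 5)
Your proposal is correct and follows essentially the same route as the paper: the paper's (implicit) proof consists precisely of the computations preceding the lemma, namely the invariance of the finite span of monomials $\mathbf{X}_{n-l,\Lambda}\mathbf{X}_{\Lambda}^{l}$ under $\delta_{U_\Lambda}$, the coefficient recursion $a_{k+1,l}=-(l-1)a_{k,l-1}-(n-l)a_{k,l}$, and the at-most-geometric growth of the $a_{k,l}$, from which absolute convergence of the exponential series follows. Your write-up is in fact slightly sharper than the paper's (the bound $(n-1)^k$ improves the stated $2^k n^k$) and more explicit about the topology in which the series converges, but the underlying argument is the same.
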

Hence we conclude with the following result.
\begin{theorem}
The Dirichlet form in the directions of $\mathbf{X}_\Lambda$ and $\mathbf{X}_\Lambda^\ast$ with respect to $U_{\Lambda}$ is given by
\begin{equation}{\label{DF1}}
    \mathcal{E}_\Lambda(f)= \int_\mathbb{R} \left(\langle \delta_{\alpha_t(\mathbf{X}_{n,\Lambda})}(f),\delta_{\alpha_t(\mathbf{X}_{n,\Lambda})}(f)\rangle  + \langle \delta_{\alpha_t(\mathbf{X}_{n,\Lambda}^\ast)}(f),\delta_{\alpha_t(\mathbf{X}_{n,\Lambda}^\ast)}(f)\rangle\right)\eta(t)dt. 
\end{equation}
is well defined on the dense domain $D(\mathcal{E}_\Lambda)\supset \mathcal{D}_\Lambda$ and closable, and hence defines a Markov generator.  
\end{theorem}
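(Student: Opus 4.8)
The plan is to reduce the statement to a single structural fact about the finite-volume modular dynamics of $\mathbf{X}_{n,\Lambda}$, namely that $\alpha_{t,\Lambda}(\mathbf{X}_{n,\Lambda})$ is a \emph{finite} sum of modular eigenoperators with purely oscillatory time dependence. First I would record that the operators $Y_l\equiv\mathbf{X}_{n-l,\Lambda}\mathbf{X}_\Lambda^l$, $l=0,\dots,n$, span a finite-dimensional subspace $V\subset\mathcal{D}_\Lambda$ that is invariant under $\delta_{U_\Lambda}$: combining the Leibniz rule with the relations $\delta_{U_\Lambda}(\mathbf{X}_\Lambda^l)=-l\,\mathbf{X}_\Lambda^l$ and $\delta_{U_\Lambda}(\mathbf{X}_{m,\Lambda})\propto|\Lambda|^{-1/2}\mathbf{X}_{m-1,\Lambda}\mathbf{X}_\Lambda$ computed above gives $\delta_{U_\Lambda}(Y_l)=\lambda_l\,Y_l+b_l\,Y_{l+1}$ with explicit scalars $b_l$ and $\lambda_l\in\{0,-1,\dots,-n\}$. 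Thus, in the ordered basis $Y_0,\dots,Y_n$, the restriction $M\equiv\delta_{U_\Lambda}|_V$ is triangular with distinct real diagonal $\lambda_0,\dots,\lambda_n$; having $n+1$ distinct eigenvalues, $M$ is diagonalizable, so there is a basis of eigenoperators $E_l\in\mathcal{D}_\Lambda$ with $\delta_{U_\Lambda}E_l=\lambda_l E_l$, and, writing $\mathbf{X}_{n,\Lambda}=\sum_l\gamma_l E_l$, the series of Lemma~\ref{Lem 7.1} collapses to the finite sum $\alpha_{t,\Lambda}(\mathbf{X}_{n,\Lambda})=\sum_{l=0}^n\gamma_l\,e^{-i\lambda_l t}E_l$ (and analogously for $\mathbf{X}_{n,\Lambda}^\ast$).

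Two consequences then drive the argument. The analytic continuation is explicit, $\alpha_{t\pm i/4,\Lambda}(\mathbf{X}_{n,\Lambda})=\sum_l\gamma_l\,e^{-i\lambda_l t}e^{\pm\lambda_l/4}E_l$, so each $\alpha_{t\pm i/4,\Lambda}(\mathbf{X}_{n,\Lambda}^\sharp)$ is a polynomial in $\mathcal{D}_\Lambda$ whose $\mathbb{L}_4(\omega)$-norm is bounded uniformly in $t\in\mathbb{R}$, all polynomial moments of the finite-volume Gibbs state being finite; by the derivation-domain Proposition of Section~2 this places $\mathcal{D}_\Lambda$ inside the domain of every $\delta_{\alpha_t(\mathbf{X}_{n,\Lambda}^\sharp)}$, so for $f\in\mathcal{D}_\Lambda$ the integrand in \eqref{DF1} is well defined for all $t$. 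More importantly, expanding $\delta_{\alpha_t(\mathbf{X}_{n,\Lambda})}(f)=\sum_l\gamma_l e^{-i\lambda_l t}\delta_{E_l}(f)$ shows the integrand $\langle\delta_{\alpha_t(\mathbf{X}_{n,\Lambda})}(f),\delta_{\alpha_t(\mathbf{X}_{n,\Lambda})}(f)\rangle_\omega$ to be the finite trigonometric sum $\sum_{l,l'}\bar\gamma_l\gamma_{l'}e^{i(\lambda_l-\lambda_{l'})t}\langle\delta_{E_l}(f),\delta_{E_{l'}}(f)\rangle_\omega$, hence bounded uniformly in $t$ by some $C_f<\infty$. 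Therefore $|\mathcal{E}_\Lambda(f)|\le C_f\int_\mathbb{R}|\eta(t)|\,dt<\infty$ by the decay bound of Definition~\ref{admissible}(3), which gives that $\mathcal{E}_\Lambda$ is finite on, and so densely defined with domain containing, $\mathcal{D}_\Lambda$.

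For closability and the Markov property I would invoke the abstract theory. Because an admissible $\eta$ is real and nonnegative on $\mathbb{R}$, polarizing \eqref{DF1} yields a Hermitian, nonnegative sesquilinear form, so $\mathcal{E}_\Lambda(f,g)=\langle f,-\mathfrak{L}_\Lambda g\rangle_\omega$ defines a nonnegative symmetric operator $-\mathfrak{L}_\Lambda$ on $\mathcal{D}_\Lambda$; its Friedrichs extension provides a closed nonnegative form extending $\mathcal{E}_\Lambda$, which is precisely closability. That the closure is a Dirichlet form---equivalently that $e^{t\mathfrak{L}_\Lambda}$ is positivity- and unit-preserving---is exactly what the second admissibility condition $\eta(t+i/4)+\eta(t-i/4)\ge 0$ secures, through the $\Gamma_1$ representation of Proposition~\ref{Gamma1Proposition} together with the Beurling--Deny--Cipriani correspondence and Park's construction; this correspondence then returns the associated Markov generator.

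The main obstacle is the passage from Lemma~\ref{Lem 7.1} to a $t$-uniform bound on the integrand. The crude power-series estimate only yields $\|\delta_{U_\Lambda}^k(\mathbf{X}_{n,\Lambda})\|\lesssim(2n)^k$ on $V$, hence the exponential bound $\|\alpha_{t,\Lambda}(\mathbf{X}_{n,\Lambda})\|\lesssim e^{2n|t|}$, which is worthless against the merely polynomial decay ($p>1$) of an admissible $\eta$. Only the diagonalizability of $M$ together with the reality of its spectrum---that is, the genuinely oscillatory, almost-periodic character of $\alpha_{t,\Lambda}(\mathbf{X}_{n,\Lambda})$---upgrades this to the uniform bound that integrability requires, and checking that the diagonal of $M$ consists of distinct entries is the one model-specific computation the proof cannot sidestep. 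With that in place, well-definedness is automatic and closability together with Markovianity are inherited from the general framework.
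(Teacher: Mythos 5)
Your proof is correct, and it is essentially a completion of the paper's argument rather than a parallel to it: the paper offers no explicit proof of this theorem at all, only the computations preceding Lemma \ref{Lem 7.1} (invariance of the span of the operators $\mathbf{X}_{n-l,\Lambda}\mathbf{X}_\Lambda^l$ under $\delta_{U_\Lambda}$, the coefficient recursion $a_{k+1,l}=-(l-1)a_{k,l-1}-(n-l)a_{k,l}$, and the growth bound $2^k n^k$), which establish convergence of the exponential series defining $\alpha_{t,\Lambda}(\mathbf{X}_{n,\Lambda}^\sharp)$ and nothing more; the theorem is then stated as a consequence. You follow the same skeleton --- the same finite-dimensional invariant subspace, domains of the derivations via the Section 2 proposition, closability via the associated symmetric nonnegative operator and its Friedrichs extension, Markovianity via the second admissibility condition and the $\Gamma_1$ representation --- but you supply the decisive step the paper skips: the matrix of $\delta_{U_\Lambda}$ on $V$ is triangular with distinct diagonal entries, hence diagonalizable with real spectrum, so $\alpha_{t,\Lambda}(\mathbf{X}_{n,\Lambda}^\sharp)$ is a finite sum of purely oscillatory eigenterms and is uniformly bounded in $t$. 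You are also right that this is not a cosmetic refinement: the paper's own estimate yields only $e^{2n|t|}$ growth of the dynamics, which cannot be integrated against an admissible $\eta$ with merely polynomial decay, so without the spectral observation the finiteness of $\mathcal{E}_\Lambda(f)$ in \eqref{DF1} simply does not follow. Your key claim is moreover confirmed by the paper's own computations: writing $Y_l\equiv|\Lambda|^{-l/2}\mathbf{X}_{n-l,\Lambda}\mathbf{X}_\Lambda^l$, one finds $[U_\Lambda,Y_l]=-l\,Y_l-(n-l)\,Y_{l+1}$, triangular with distinct diagonal entries $0,-1,\dots,-n$ (the paper's recursion interchanges the two coefficients relative to this, but under either reading the diagonal consists of distinct real numbers, which is all your argument needs). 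In short, your route buys an actual proof of well-definedness of the $t$-integral where the paper's preliminary bound buys only Lemma \ref{Lem 7.1}.
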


 \begin{remark}
In the limit $\Lambda\to \mathbb{Z}^d$ this form for $n>1$ is becoming trivial. 
 It is an interesting open question whether or not the limits $t\to \infty$ and $\Lambda\to \mathbb{Z}^d$
 are interchangeable. 
 \end{remark}
\end{example}

\subsection{Non-diagonal Dirichlet Forms}
In this part we consider the dissipative dynamics defined  by a Dirichlet form with nonlocal derivations. In particular for an infinite product space we provide an example in which the generator has no spectral gap. 
\begin{example}[Z-type fields]{\label{ex2}} 
For   $\mathbf{\kappa}\equiv\{\kappa_j \in\mathbb{C}:  \sum_j|\kappa_j|<\infty\}$, define 
\begin{equation}{\label{1}}
    Z_{\mathbf{\kappa}}=\sum_j\kappa_j A_j 
\end{equation}
the series being convergent in any $\mathbb{L}_{p,\omega_0}$  for $p\in[1,\infty)$. Later on we assume that $\mathbf{\kappa}$ is not equal to a zero vector. 
Given two absolutely convergent sequences $\mathbf{\kappa}$ and $\mathbf{\xi}$, we have the following CCR relation 
\begin{equation} \label{ZCR1}
[Z_{\mathbf{\kappa}},Z_{\mathbf{\xi}}^\ast]=\sum_j \kappa_j\bar\xi_j.
\end{equation}
The modular dynamics (associated to $\omega_0$) of $Z_{\mathbf{\kappa}}$  is given 
\[ \alpha_t (Z_{\mathbf{\kappa}})=e^{i\beta t} Z_{\mathbf{\kappa}}\:\: and\:\:   \alpha_t (Z_{\mathbf{\kappa}}^\ast)=e^{-i\beta t} Z_{\mathbf{\kappa}}^\ast
\]
Given a translation $(T_j\mathbf{\kappa})\equiv (\mathbf{\kappa}_{l-j})_{l\in {\mathbb{Z}}^d}$, we define translation of $Z_{\mathbf{\kappa}}$ by 
$ Z_{T_j\mathbf{\kappa}}$. 

\begin{theorem} \label{Thm 7.2}
Suppose $\mathbf{\kappa},\mathbf{\xi}\neq \boldsymbol{\theta}$.
The Dirichlet form associated to the directions of $Z_{T_j\mathbf{\kappa}}$ and $Z_{T_j\mathbf{\xi}}^\ast$, $j\in\mathbb{Z}^d$, with respect to the state $\omega_0$ is given by 
\[\mathcal{E} (f)= \hat{\eta} (0)\sum_{ j \in\mathbb{Z}^d} \left( \langle \delta_{Z_{ T_j\mathbf{\kappa}}} (f), \delta_{Z_{T_j\mathbf{\kappa}}} (f)
\rangle_{\omega}+\langle \delta_{ Z_{T_j\mathbf{\xi}}^\ast} (f), \delta_{Z_{T_j\mathbf{\xi}}^\ast} (f)
\rangle_{\omega} \right).\]
with a dense domain
$D(\mathcal{E})\supset \mathcal{D}_\varepsilon$
is closable and hence defines a Markov generator.\\
\bigskip

\noindent {Remark}: Similar conclusion holds for the Dirichlet forms associated to the Gibbs states
$\omega$.  
\end{theorem}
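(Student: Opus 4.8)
The plan is to imitate the mean field computation of Proposition~\ref{Pro 7.1}, exploiting that on the product state $\omega_0$ the modular dynamics moves the $Z$-type fields only by a phase. First I would evaluate the time-evolved derivations. By Example~\ref{ex2} one has $\alpha_t(Z_{T_j\mathbf{\kappa}})=e^{i\beta t}Z_{T_j\mathbf{\kappa}}$ and $\alpha_t(Z_{T_j\mathbf{\xi}}^\ast)=e^{-i\beta t}Z_{T_j\mathbf{\xi}}^\ast$, so linearity of the commutator gives
\[
\delta_{\alpha_t(Z_{T_j\mathbf{\kappa}})}(f)=e^{i\beta t}\,\delta_{Z_{T_j\mathbf{\kappa}}}(f),\qquad
\delta_{\alpha_t(Z_{T_j\mathbf{\xi}}^\ast)}(f)=e^{-i\beta t}\,\delta_{Z_{T_j\mathbf{\xi}}^\ast}(f).
\]
Since $\langle\cdot,\cdot\rangle_{\omega}$ is antilinear in its first and linear in its second argument and $t$ is real, in each scalar product the phase is multiplied by its conjugate and cancels, so the integrand is independent of $t$. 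Factoring it out leaves $\int_{\mathbb{R}}\eta(t)\,dt=\hat\eta(0)$, which produces the stated formula term by term; summing over $j\in\mathbb{Z}^d$ gives $\mathcal{E}$. The hypothesis $\mathbf{\kappa},\mathbf{\xi}\neq\boldsymbol{\theta}$ only serves to keep the directions, hence the resulting generator, non-trivial.

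Next I would verify the dense-domain claim $\mathcal{D}_\varepsilon\subset D(\mathcal{E})$. Each single derivation is well defined on $\mathcal{D}_\varepsilon$ by the domain criterion of Section~2: for $\omega_0$ the field $Z_{T_j\mathbf{\kappa}}$ lies in every $\mathbb{L}_p(\omega_0)$ (Example~\ref{ex2}) and $\alpha_{\pm i/4}(Z_{T_j\mathbf{\kappa}})=e^{\mp\beta/4}Z_{T_j\mathbf{\kappa}}\in\mathbb{L}_4(\omega_0)$, and likewise for $Z_{T_j\mathbf{\xi}}^\ast$. For convergence of the sum over $j$, take $f\in\mathcal{D}_\varepsilon$ localised in a finite $\Lambda_0$; then $[A_l,f]=0$ for $l\notin\Lambda_0$ by the CCR, so
\[
\delta_{Z_{T_j\mathbf{\kappa}}}(f)=i\sum_{l\in\Lambda_0}\kappa_{l-j}\,[A_l,f],
\]
and Cauchy--Schwarz over the finite set $\Lambda_0$ together with $\ell^1\subset\ell^2$ yields
\[
\sum_{j\in\mathbb{Z}^d}\|\delta_{Z_{T_j\mathbf{\kappa}}}(f)\|_{\omega}^2
\leq \|\mathbf{\kappa}\|_{\ell^2}^2\sum_{l\in\Lambda_0}\|[A_l,f]\|_{\omega}^2<\infty,
\]
each $\|[A_l,f]\|_{\omega}$ being finite by the same criterion; the identical bound for $\mathbf{\xi}$ shows $\mathcal{E}(f)<\infty$.

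The main work, and the step I expect to be the principal obstacle, is closability. Each $\delta_{Z_{T_j\mathbf{\kappa}}}$ is densely defined with densely defined adjoint (Proposition~\ref{AdjointPro}), hence closable, so every single-site form $\mathcal{E}_j$ is a closable non-negative form. The finite partial sums $\mathcal{E}_{\Lambda_n}=\sum_{j\in\Lambda_n}\mathcal{E}_j$ are then closable, and as $\Lambda_n\uparrow\mathbb{Z}^d$ their closures form an increasing sequence of closed non-negative forms; by the monotone convergence theorem for quadratic forms the limit $\mathcal{E}_\infty$ is closed, and by the previous paragraph it agrees with $\mathcal{E}$ on $\mathcal{D}_\varepsilon$, so $\mathcal{E}_\infty$ is a closed extension of $\mathcal{E}$ and $\mathcal{E}$ is closable. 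The delicate point is to control the infinite sum so that $\mathcal{D}_\varepsilon$ stays a common core in the limit: for $\omega_0$ this is exactly the $\ell^1$-decay estimate above, whereas for the Gibbs state $\omega$ of the Remark, where $\alpha_t$ no longer acts by a phase, one would instead invoke the finite speed of propagation estimate of Section~4 to bound the tails uniformly in the volume.

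Finally, the closed form $\bar{\mathcal{E}}$ corresponds to a unique non-negative self-adjoint operator $-\mathfrak{L}$; that $\mathfrak{L}$ generates a completely positive, unit-preserving semigroup is the content of the non-commutative Beurling--Deny correspondence, for which the admissibility conditions (1)--(2) on $\eta$ in Definition~\ref{admissible} provide precisely the required Markov (Dirichlet) property. This identifies $\mathfrak{L}$ as a Markov generator and completes the proof.
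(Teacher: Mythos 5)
Your proposal is correct in substance, but it reaches closability by a genuinely different mechanism than the paper. The paper's proof is very short: it observes that by summability of $\mathbf{\kappa}$ and $\mathbf{\xi}$ the form is dominated by a multiple of the form built from the derivations $\delta_{A_j}$, $\delta_{A_j^\ast}$, for which the domain results of Section 2 (well-defined adjoints on $\mathcal{D}_\varepsilon$) already hold; it then defines the pre-Markov generator $\sum_j\left(\delta^\star_{Z_{T_j\mathbf{\kappa}}}\delta_{Z_{T_j\mathbf{\kappa}}}+\delta^\star_{Z_{T_j\mathbf{\xi}}^\ast}\delta_{Z_{T_j\mathbf{\xi}}^\ast}\right)$ directly on the dense domain $\mathcal{D}_\varepsilon$ and invokes the Friedrichs extension, which simultaneously delivers closability of the form and the self-adjoint generator. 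You instead close the form itself: single-site derivations are closable because their adjoints are densely defined, finite-volume forms are therefore closable, and the monotone convergence theorem for increasing sequences of closed forms gives a closed extension of $\mathcal{E}$. Both routes are legitimate; the paper's buys the generator in one stroke but implicitly requires that the infinite sum defining the pre-generator converges on $\mathcal{D}_\varepsilon$ (again by summability), whereas your route never needs the generator to be a priori well defined on $\mathcal{D}_\varepsilon$, at the price of invoking the form-convergence machinery and of checking monotonicity of the closures (which does hold: if $a\leq b$ on a common core then $\bar{a}\leq\bar{b}$). Your explicit derivation of the $\hat{\eta}(0)$ factorisation and the phase cancellation is also more detailed than the paper, which leaves this to the analogous computation in Proposition \ref{Pro 7.1}. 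Both treatments of the Gibbs-state remark coincide (finite speed of propagation).

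One small quantitative slip, harmless to the conclusion: your displayed bound
\[
\sum_{j\in\mathbb{Z}^d}\|\delta_{Z_{T_j\mathbf{\kappa}}}(f)\|_{\omega}^2
\leq \|\mathbf{\kappa}\|_{\ell^2}^2\sum_{l\in\Lambda_0}\|[A_l,f]\|_{\omega}^2
\]
does not follow from Cauchy--Schwarz as stated; applying Cauchy--Schwarz over $\Lambda_0$ and then summing in $j$ produces an extra factor $|\Lambda_0|$ (alternatively, splitting $|\kappa_{l-j}|=|\kappa_{l-j}|^{1/2}\cdot|\kappa_{l-j}|^{1/2}$ yields the constant $\|\mathbf{\kappa}\|_{\ell^1}^2$). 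Since $\Lambda_0$ is finite and $\mathbf{\kappa}\in\ell^1$, finiteness of $\mathcal{E}(f)$ on $\mathcal{D}_\varepsilon$ is unaffected.
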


 \begin{proof} 
Thanks to summability of $\mathbf{\kappa}$ and $\mathbf{\xi}$ our Dirichlet form can be dominated by a multiple of the one defined with derivations in directions 
$A_j$ and $A_j^\ast$. For the latter Dirichlet form we already showed
that the domain contains a dense set $\mathcal{D}_\varepsilon$ on which the adjoint operators are well defined and so one can define a 
pre-Markov generator 
on the same dense domain. The Friedrichs extension completes the construction of the Markov generator. In case of a Gibbs state different than a product state the Dirichlet form generally the integral with $\eta$ does not factorises, however, as we discussed earlier in section 2, all the arguments go through thanks to finite speed of propagation of information and the fact that $\mathcal{D}_\varepsilon$ is a subset of $\mathbb{L}_{p,\omega}$ for $p\in[1,\infty)$.  

%
\end{proof}

\end{example}

\begin{example} 
Here we consider an example involving infinite set of CCRs which are not independent in general. 
Given $j\sim k, j,k\in \mathbb{Z}^d$ and $\kappa_j, \varepsilon_k\in\mathbb{C}$ define
\[
Z_{j,k}:=\kappa_j A_j+\varepsilon_k A_k
\]
and consider the following   Hamiltonian
     \[ H_\Lambda=\sum_{j\sim k\atop j,k\in\Lambda} Z_{j,k}^\ast Z_{j,k}=\sum_{j\sim k \atop j,k\in\Lambda}(\bar{\kappa_j} A_j^\ast+\bar{\varepsilon_k}A_k^\ast) (\kappa_j A_j+\varepsilon_k A_k).
     \]
We note that the following derivation 
\[\delta_{\mathbf{H}}(f)\equiv \lim_{\Lambda\to\mathbb{Z}^d}\delta_{H_\Lambda}(f).
\]
 is evidently well defined on all local polynomials in creators and anihilators $A_j^\sharp$, $j\in\mathbb{Z}^d$.\\
 In particular,
 for a fixed $l,m\in\Lambda$,  we have 
\[i{\mathbf{H}}(Z_{l,m})  =
[Z_{l,m},\sum_{j\sim k}Z_{j,k}^\ast Z_{j,k}]= \sum_{j\sim k}[Z_{l,m}, Z_{j,k}^\ast Z_{j,k}]=\sum_{j\sim k}[Z_{l,m}, Z_{j,k}^\ast ]Z_{j,k}\]
\[= \sum_{j\sim k}(\delta_{l,j}|\kappa_j|^2 +\delta_{l,k}\kappa_k\bar{\varepsilon_k}+\delta_{m,j}\varepsilon_j\bar{\kappa_j} +\delta_{m,k} |\varepsilon_k|^2)Z_{j,k}\]
Thus, for later application we note that
 the infinite dimensional hamiltonian evolution
$\alpha_t (Z_{l,m}) \equiv \alpha_{t, \mathbf{H}}(Z_{l,m})\equiv e^{- { t\beta  \delta_{\mathbf{H}}} } {Z_{l,m} }$ is well defined and satisfies the following relation
\[
\frac{d}{{dt}} \alpha_t (Z_{l,m}) 
%
=i \beta  (\sum_{j\sim k}(\delta_{l,j}
|\kappa_j|^2 +\delta_{l,k}\kappa_k\bar{\varepsilon_k}+\delta_{m,j}\varepsilon_j\bar{\kappa_j} +\delta_{m,k} |\varepsilon_k|^2)\alpha_t(Z_{j,k}) \]


$\,$\\

\noindent \textbf{Remark}: 
To check that corresponding hamiltonian dynamics 
\[\alpha_t (f) = e^{- { t  \delta_{\mathbf{H}}} } f = \lim_{\Lambda\to\mathbb{Z}^d} e^{- { t  \delta_{\mathbf{H}_\Lambda}} } f  =\lim_{\Lambda\to\mathbb{Z}^d} e^{-itH_\Lambda}fe^{itH_\Lambda} \]
is well defined on local polynomials in creators and annihilators,
note that we have
\[\begin{split}
\frac{d}{dt} \alpha_{t,\Lambda} (A_l) &=
-i\alpha_{t,\Lambda} ([H_\Lambda,A_l])
    \\
  &=   -i \sum_{m\in\Lambda\atop m\sim l}\left(   (|\kappa_l|^2+|\varepsilon_l|^2) \alpha_{t,\Lambda}(A_l)+(\bar{\kappa_l}\varepsilon_m  +\bar{\varepsilon_l}\kappa_m) \alpha_{t,\Lambda}(A_m)  
  \right).
\end{split}\]
From that should be clear (via arguments given in the previous sections) that the infinite dimensional limit can be performed and that we have finite speed of propagation of interaction in the system.\\
To elucidate this a bit, note that the above equation can be solved in the algebra if 
mollified 
by dividing by a power of $(1+\varepsilon N_\Lambda)^{-1}$ with $N_\Lambda\equiv \sum_{j\in\Lambda}N_j$.
This is because 
\[ \delta_{H_\Lambda}(N_\Lambda)=0.\]
Consequently for mollified problem one can use iteration scheme in operator norm, otherwise one needs to study convergence in $L_p(\omega_0)$ spaces, $p\in(1,\infty)$.
\vspace{0.55cm}

\textbf{Remark} Note that one has a generalisation of the last relation directly to the infinite dimensions in the sense of commutation of derivations as follows
\[
[\delta_{\mathbf{H}},\delta_{\mathbf{N}}] = 0
\]
where \[\delta_{\mathbf{N}}(f)\equiv \lim_{\Lambda\to\mathbb{Z}^d}\delta_{\mathbf{N}_\Lambda}(f)\]
with the limit on the right hand side on local polynomials in creators and annihilators (weak or strong in a Hilbert space) or in operator norm on the algebra generated by the mollified local polynomials (with local  mollification provided by $N_\mathcal{O}$ with $\mathcal{O}\subset\subset\mathbb{Z}^d$). 
\hfill{$\circ$}\\

With the well defined hamiltonian dynamics for which finite speed of propagation of information property holds we have the following result.
\begin{theorem} \label{Prop_3.2}
    The following Dirichlet form is well defined on the set of local polynomials 
    \[
    \mathcal{E}(f)\equiv \sum_{j\sim k\atop j,k\in\mathbb{Z}^d}\int \left(\langle \delta_{\alpha_{t,\mathbf{H}}(Z_{jk})}(f), \delta_{\alpha_{t,\mathbf{H}}(Z_{jk})}(f)\rangle_{\omega_0} + \langle \delta_{\alpha_{t,\mathbf{H}}(Z_{jk}^\ast)}(f), \delta_{\alpha_{t,\mathbf{H}}(Z_{jk}^\ast)}(f)\rangle_{\omega_0}\right) \eta(t)dt
    \]
    and its closure defines a Markov generator in $\mathbb{L}_2(\omega_0)$.
\end{theorem}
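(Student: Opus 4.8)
The plan is to dominate $\mathcal{E}$ by the free form in the directions $A_n,A_n^\ast$ (already shown closable with core $\mathcal{D}_\varepsilon$ in Example \ref{ex2}), exploiting that $\mathbf{H}$ is a \emph{number--conserving} quadratic Hamiltonian. Since $[A_l,H_\Lambda]=\sum_n h_{ln}A_n$ with single--particle matrix
\[
h_{ln}=\delta_{ln}\sum_{m\sim l}\big(|\kappa_l|^2+|\varepsilon_l|^2\big)+\mathbf{1}_{l\sim n}\big(\bar\kappa_l\varepsilon_n+\bar\varepsilon_l\kappa_n\big),
\]
which is Hermitian (as the computation preceding the statement shows), the hamiltonian evolution preserves the linear span of the annihilators and acts there through the unitary $U(t)\equiv e^{-ith}$ on $\ell^2(\mathbb{Z}^d)$. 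Hence no creators are produced and
\[
\alpha_{t,\mathbf{H}}(Z_{jk})=\sum_n c^{(jk)}_n(t)\,A_n,\qquad c^{(jk)}_n(t)=\kappa_jU_{jn}(t)+\varepsilon_kU_{kn}(t).
\]

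First I would establish finiteness on a local polynomial $f$ supported in a finite set $S$. Because $[A_n,f]=0=[A_n^\ast,f]$ for $n\notin S$, only finitely many coefficients enter, so $\delta_{\alpha_t(Z_{jk})}(f)=i\sum_{n\in S}c^{(jk)}_n(t)[A_n,f]$. The decisive estimate is that, by unitarity of $U(t)$ together with summability of $\boldsymbol\kappa,\boldsymbol\varepsilon$ and finite range of $\sim$,
\[
\sum_{j\sim k}|c^{(jk)}_n(t)|^2\le 2\sum_{j\sim k}\big(|\kappa_j|^2|U_{jn}(t)|^2+|\varepsilon_k|^2|U_{kn}(t)|^2\big)\le C,
\]
uniformly in $t$ and $n$, since $\sum_j|U_{jn}(t)|^2=1$. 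A Cauchy--Schwarz step over the finite set $S$ then gives $\sum_{j\sim k}\|\delta_{\alpha_t(Z_{jk})}(f)\|_{\omega_0}^2\le C|S|\sum_{n\in S}\|[A_n,f]\|_{\omega_0}^2$ uniformly in $t$, and likewise for the starred directions. Integrating against $\eta$ and using $\int_{\mathbb{R}}|\eta(t)|\,dt<\infty$ (admissibility condition (3), $p>1$) yields $\mathcal{E}(f)\le C'\,\mathcal{E}_0(f)<\infty$, where $\mathcal{E}_0$ is the free form; in particular $D(\mathcal{E})\supset\mathcal{D}_\varepsilon$.

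For the generator I would pass through the pre--Dirichlet operator
\[
-\mathfrak{L}(f)=\sum_{j\sim k}\int\Big(\delta^\star_{\alpha_t(Z_{jk})}\delta_{\alpha_t(Z_{jk})}(f)+\delta^\star_{\alpha_t(Z_{jk}^\ast)}\delta_{\alpha_t(Z_{jk}^\ast)}(f)\Big)\eta(t)\,dt
\]
on $\mathcal{D}_\varepsilon$. The adjoints $\delta^\star$ are given explicitly by Proposition \ref{AdjointPro} and map $\mathcal{D}_\varepsilon$ into $\mathbb{L}_2(\omega_0)$; it is here that the finite speed of propagation estimate is used, namely to guarantee that the infinite sum over $(j,k)$ of the second--order expressions $\delta^\star\delta$ converges on local elements, the directions $\alpha_t(Z_{jk})$ being themselves non--local. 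As $\mathfrak{L}$ is symmetric, non--negative and densely defined with respect to $\langle\cdot,\cdot\rangle_{\omega_0}$, its Friedrichs extension furnishes a self--adjoint $-\mathfrak{L}\ge 0$ whose form is the closure of $\mathcal{E}$, which is the closability assertion. Finally, admissibility condition (2), $\eta(t+i/4)+\eta(t-i/4)\ge0$, feeds into the $\Gamma_1$ identity of Proposition \ref{Gamma1Proposition} to give $\Gamma_1\le0$, i.e. the Beurling--Deny/Markovian condition; by the noncommutative Dirichlet form theory of \cite{Cip1,Cip3} the closed form then generates a completely positive, identity--preserving semigroup on $\mathbb{L}_2(\omega_0)$.

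The main obstacle is exactly the uniform--in--$t$ bound in the second step. The naive finite speed of propagation estimate only controls $\|\delta_{\alpha_t(Z_{jk})}(f)\|$ by $De^{C|t|-m\,d(\{j,k\},S)}$, and the factor $e^{C|t|}$ is not integrable against a merely polynomially decaying $\eta$. What rescues the argument is that the number--conserving quadratic structure makes the single--particle propagator genuinely unitary, so individual coefficients stay $O(1)$ and the spatial sum is bounded independently of $t$; finite speed of propagation is then relegated to its proper role of securing the dense domain of the generator in the thermodynamic limit.
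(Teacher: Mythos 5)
Your proposal is correct in substance, and it is considerably more detailed than the paper's own treatment: the paper states this theorem with no displayed proof, resting entirely on the preceding remarks (the linear evolution equation for $\alpha_{t,\Lambda}(A_l)$, the well-definedness of $\delta_{\mathbf{H}}$ on local polynomials, the commutation $[\delta_{\mathbf{H}},\delta_{\mathbf{N}}]=0$ together with the mollification remark, and the assertion that finite speed of propagation lets the argument of the $Z$-type fields theorem (Theorem \ref{Thm 7.2}) --- domination by the free form, dense domain $\mathcal{D}_\varepsilon$, Friedrichs extension --- go through). Your genuinely new ingredient is to upgrade the paper's linear ODE to the observation that the single-particle matrix $h$ is Hermitian, so the propagator $U(t)=e^{-ith}$ is unitary on $\ell^2(\mathbb{Z}^d)$, whence $\sum_{j\sim k}|c^{(jk)}_n(t)|^2$ is bounded uniformly in $t$ and $n$ (note that what your estimate actually uses is boundedness of the sequences $\boldsymbol\kappa,\boldsymbol\varepsilon$ and the finite coordination number, not summability). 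This fixes a real lacuna: the finite speed of propagation bound of Section 4 carries a factor $e^{C|t|}$, which is not integrable against an admissible $\eta$ decaying only polynomially, so an appeal to that estimate alone cannot give finiteness of $\mathcal{E}(f)$ on local polynomials; your uniform-in-$t$ bound does, and it correctly demotes finite speed of propagation to the domain question for the generator, consistent with the caveat at the end of Proposition \ref{Gamma1Proposition}.

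Two points need tightening. First, a sign: the Markovian carr\'e du champ condition is $\Gamma_1\ge 0$, which is what admissibility condition (2) delivers through the identity (A.2) of the Appendix (the minus sign in the statement of Proposition \ref{Gamma1Proposition} is inconsistent with (A.2)); asserting $\Gamma_1\le 0$ would negate the Kadison--Schwarz property rather than prove it. Second, and more substantively, both Park's criterion \cite{P1} and the computation behind (A.2) are formulated for translates $\alpha_t(X)$ under the \emph{modular} group of the state defining the scalar product, whereas $\alpha_{t,\mathbf{H}}$ is a Hamiltonian dynamics distinct from the modular group of $\omega_0$ (which is generated by $\beta\mathbf{N}$). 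Your argument survives precisely because $[\delta_{\mathbf{H}},\delta_{\mathbf{N}}]=0$ and $\alpha_{t,\mathbf{H}}$ preserves the span of the annihilators: each direction $\alpha_{t,\mathbf{H}}(Z_{jk})=\sum_n c^{(jk)}_n(t)A_n$ is therefore still an eigenvector of the modular group of $\omega_0$, with $\alpha^{\omega_0}_{s}\bigl(\alpha_{t,\mathbf{H}}(Z_{jk})\bigr)=e^{i\beta s}\alpha_{t,\mathbf{H}}(Z_{jk})$, so for each fixed $t$ the integrand is a Dirichlet form of the eigenvector type $(\tilde{\mathcal{E}})$, and positivity of $\eta$ makes $\mathcal{E}$ a positive superposition of such forms. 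You should state this explicitly, since without it neither the adjoint formulas of Proposition \ref{AdjointPro} nor the contour shifts in the Appendix apply verbatim to $\mathbf{H}$-translates; with it, your proof is complete and, unlike the paper's sketch, self-contained.
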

\end{example}
\vspace{0.55cm}

A variation in the similar direction is provided in the following example where we use the mixed representation of CCRs.
\begin{example}[Y-type fields]{\label{ex3}} 
    For some $\mathbf{\kappa}$, $\mathbf{\xi}\neq \boldsymbol{\theta}$,  
    \[Y_{\mathbf{\kappa},\mathbf{\xi}}=Z_\mathbf{\kappa}-Z_\mathbf{\xi}^\ast.\]
    Then
    \[[Y_{\mathbf{\kappa},\mathbf{\xi}},Y_{\mathbf{\kappa},\mathbf{\xi}}^\ast]=\left(|\mathbf{\kappa}|_2^2-|\xi|_2^2\right)\mathbf{id} 
 \]
 where $|\mathbf{\kappa}|_2^2\equiv \sum_j |\kappa_j|^2$ and similarly for $|\mathbf{\xi}|_2^2$.
 We also note that
    \[
    [Y_{\mathbf{\kappa},\mathbf{\xi}},  N_l]=  \kappa_l A_l-\bar\xi_l A_l^\ast
    \]
Using the above, we have the following fact.    
\begin{lemma}
The modular dynamics of $Y_{\mathbf{\kappa},\mathbf{\xi}}$ and $Y_{\mathbf{\kappa},\mathbf{\xi}}^\ast$ with respect to the infinite product state is given 
\[ \alpha_t (Y_{\mathbf{\kappa},\mathbf{\xi}}) = e^{i\beta t}Z_{\mathbf{\kappa}} -   e^{-i\beta t}Z_{\mathbf{\xi}}^\ast 
\]
\end{lemma}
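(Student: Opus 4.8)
The plan is to exploit the fact that the modular automorphism $\alpha_t$ (associated to $\omega_0$) is linear, indeed a $\ast$-automorphism, together with its action on the building blocks $Z_{\mathbf{\kappa}}$ and $Z_{\mathbf{\xi}}^\ast$ already recorded in Example \ref{ex2}. Since $Y_{\mathbf{\kappa},\mathbf{\xi}} = Z_{\mathbf{\kappa}} - Z_{\mathbf{\xi}}^\ast$ by definition, once we know that $\alpha_t(Z_{\mathbf{\kappa}}) = e^{i\beta t} Z_{\mathbf{\kappa}}$ and $\alpha_t(Z_{\mathbf{\xi}}^\ast) = e^{-i\beta t} Z_{\mathbf{\xi}}^\ast$, linearity immediately gives
\[
\alpha_t(Y_{\mathbf{\kappa},\mathbf{\xi}}) = \alpha_t(Z_{\mathbf{\kappa}}) - \alpha_t(Z_{\mathbf{\xi}}^\ast) = e^{i\beta t} Z_{\mathbf{\kappa}} - e^{-i\beta t} Z_{\mathbf{\xi}}^\ast,
\]
which is exactly the assertion. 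Thus the entire content reduces to re-deriving, or simply recalling, the two scalar phases.

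First I would compute the single-site dynamics. For the product state $\omega_0$ the modular dynamics \eqref{MO} is generated by $U=\sum_l N_l$, and using the commutation rules \eqref{R1} with $h(N)=e^{it\beta N}$ one gets $A_j e^{it\beta N_j} = e^{it\beta} e^{it\beta N_j} A_j$, whence $\alpha_t(A_j) = e^{-it\beta N_j} A_j e^{it\beta N_j} = e^{i\beta t} A_j$, and by conjugation $\alpha_t(A_j^\ast)=e^{-i\beta t}A_j^\ast$. Equivalently one may differentiate: because $U$ is $\alpha_t$-invariant we have $\tfrac{d}{dt}\alpha_t(B) = \alpha_t(-i\beta[U,B])$, and since $[U,A_j]=-A_j$ and $[U,A_j^\ast]=A_j^\ast$, the linear ODEs $\tfrac{d}{dt}\alpha_t(A_j^\sharp)=\pm i\beta\,\alpha_t(A_j^\sharp)$ integrate to the same phases. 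Multiplying by the coefficients $\kappa_j$, respectively $\bar\xi_j$, and summing reproduces the action of $\alpha_t$ on $Z_{\mathbf{\kappa}}$ and $Z_{\mathbf{\xi}}^\ast$.

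The only genuine technical point, and where I expect to spend the most care, is justifying that $\alpha_t$ may be taken inside the infinite sums defining $Z_{\mathbf{\kappa}}=\sum_j\kappa_j A_j$ and $Z_{\mathbf{\xi}}^\ast=\sum_j\bar\xi_j A_j^\ast$; this is not automatic since the $A_j^\sharp$ are unbounded. However, it is precisely the convergence already established for the $Z$-type fields: by the absolute summability $\sum_j|\kappa_j|,\ \sum_j|\xi_j|<\infty$ the series converge in every $\mathbb{L}_p(\omega_0)$, $p\in[1,\infty)$, and $\alpha_t$ extends to a continuous map (in fact isometric on $\mathbb{L}_2(\omega_0)$, since the modular group preserves $\omega_0$) on these spaces, so it commutes with the limit of partial sums. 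Applying $\alpha_t$ termwise, extracting the common phases $e^{\pm i\beta t}$, and resumming then yields the stated formula, with the two summands $Z_{\mathbf{\kappa}}$ and $Z_{\mathbf{\xi}}^\ast$ carrying their phases independently; this is the claimed form of $\alpha_t(Y_{\mathbf{\kappa},\mathbf{\xi}})$.
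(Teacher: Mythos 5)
Your proof is correct and follows essentially the same route as the paper: the paper deduces the lemma from the commutation of $Y_{\mathbf{\kappa},\mathbf{\xi}}$ with the number operators $N_l$ generating the modular flow, which is exactly your single-site computation $\alpha_t(A_j^\sharp)=e^{\pm i\beta t}A_j^\sharp$ assembled by linearity of $\alpha_t$. Your extra care in justifying the interchange of $\alpha_t$ with the infinite sums (absolute summability of $\mathbf{\kappa},\mathbf{\xi}$ and convergence in $\mathbb{L}_p(\omega_0)$) is a point the paper leaves implicit, relying on the convergence already established for the $Z$-type fields.
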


\noindent Consider the following corresponding Dirichlet form
\[ \mathcal{E}(f)=\sum_{j\in\mathbb{Z}^d }   \int \langle \delta_{ \alpha_t Y_{T_j\mathbf{\kappa},T_j\mathbf{\xi}}} (f), \delta_{\alpha_t Y_{T_j\mathbf{\kappa},T_j\mathbf{\xi}}} (f)\rangle  {\eta}(t)dt
\]
We have the following representation of this Dirichlet form.
\begin{theorem}
The Dirichlet form in the directions of $Y_{T_j\mathbf{\kappa},T_j\mathbf{\xi}}$ and $Y_{T_j\mathbf{\kappa},T_j\mathbf{\xi}}^\ast$, $j\in\mathbb{Z}^d$, associated to the product state $\omega_0$  is well defined on a dense set including $\mathcal{D}_\varepsilon$ and  is given by
\[\begin{split}
\mathcal{E}(f)=&\sum_{j\in\mathbb{Z}^d  }  \hat{\eta}(0)  \langle \delta_{ Z_{T_j\mathbf{\kappa} }} (f), \delta_{Z_{T_j\mathbf{\kappa}}} (f)\rangle_{\omega_0}
+\sum_{j\in\mathbb{Z}^d   }  \hat{\eta}(0)  \langle \delta_{ Z_{T_j\mathbf{\xi}}} (f), \delta_{Z_{T_j\mathbf{\xi}}} (f)\rangle_{\omega_0} \\
&+\sum_{j\in\mathbb{Z}^d  }  \sqrt{2 \pi}\hat{\eta}(-2\beta)  \langle \delta_{ Z_{T_j\mathbf{\kappa}}} (f), \delta_{Z^\ast_{T_j\mathbf{\xi}}} (f)\rangle_{\omega_0} 
+\sum_{j\in\mathbb{Z}^d   }  \sqrt{2 \pi}\hat{\eta}(2\beta)  \langle \delta_{ Z^\ast_{T_j\mathbf{\xi}}} (f), \delta_{Z_{T_j\mathbf{\kappa}}} (f)\rangle_{\omega_0} 
\end{split}
\] 
On the domain $\mathcal{D}_\varepsilon$ it defines a pre-Markov generator. 
\end{theorem}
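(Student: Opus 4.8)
The plan is to reduce the claim to a single-site computation for each fixed $j\in\mathbb{Z}^d$, expand the integrand by linearity of the derivation together with the explicit modular flow of $Y_{T_j\mathbf{\kappa},T_j\mathbf{\xi}}$, and then carry out the $t$-integration term by term, reading off the values of $\hat\eta$ at the frequencies $0$ and $\pm 2\beta$ produced by the phase factors. Well-definedness and the pre-Markov property will follow, as in the $Z$-type case (Theorem \ref{Thm 7.2}), from the absolute summability of $\mathbf{\kappa},\mathbf{\xi}$, which lets one dominate the whole form by a fixed multiple of the form built from the elementary directions $A_l,A_l^\ast$.

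First I would fix $j$ and use the preceding Lemma, $\alpha_t(Y_{T_j\mathbf{\kappa},T_j\mathbf{\xi}})=e^{i\beta t}Z_{T_j\mathbf{\kappa}}-e^{-i\beta t}Z_{T_j\mathbf{\xi}}^\ast$, together with the linearity of $X\mapsto\delta_X(f)$ to write
\[
\delta_{\alpha_t(Y_{T_j\mathbf{\kappa},T_j\mathbf{\xi}})}(f)=e^{i\beta t}\,\delta_{Z_{T_j\mathbf{\kappa}}}(f)-e^{-i\beta t}\,\delta_{Z_{T_j\mathbf{\xi}}^\ast}(f).
\]
Substituting this into $\langle\,\cdot\,,\,\cdot\,\rangle_{\omega_0}$ and using that the scalar product is antilinear in its first and linear in its second argument, the integrand splits into two diagonal terms, $\langle\delta_{Z_{T_j\mathbf{\kappa}}}(f),\delta_{Z_{T_j\mathbf{\kappa}}}(f)\rangle_{\omega_0}$ and $\langle\delta_{Z_{T_j\mathbf{\xi}}^\ast}(f),\delta_{Z_{T_j\mathbf{\xi}}^\ast}(f)\rangle_{\omega_0}$, which carry no net phase, and two off-diagonal terms carrying phases $e^{\mp 2i\beta t}$.

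Second I would integrate against $\eta(t)\,dt$. The diagonal terms are $t$-independent and integrate to $\hat\eta(0)=\int_{\mathbb{R}}\eta(t)\,dt$; the cross terms give $\int_{\mathbb{R}}e^{\mp 2i\beta t}\eta(t)\,dt=\sqrt{2\pi}\,\hat\eta(\mp 2\beta)$, which is exactly the source of the constants $\sqrt{2\pi}\,\hat\eta(\pm 2\beta)$ in the statement. Summing over $j$ then yields the asserted four-term formula. The $t$-integrations are legitimate because admissibility of $\eta$ (the decay bound in Definition \ref{admissible}(3)) ensures $\eta\in L^1(\mathbb{R})$ and the existence of $\hat\eta$.

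For well-definedness on a dense domain I would argue as in Theorem \ref{Thm 7.2}: since $\sum_l|\kappa_l|,\sum_l|\xi_l|<\infty$, each of $\delta_{Z_{T_j\mathbf{\kappa}}}$ and $\delta_{Z_{T_j\mathbf{\xi}}^\ast}$ is controlled by the derivations $\delta_{A_l},\delta_{A_l^\ast}$, and the series defining $Z_{T_j\mathbf{\kappa}},Z_{T_j\mathbf{\xi}}$ converge in every $\mathbb{L}_p(\omega_0)$. On $\mathcal{D}_\varepsilon$ the adjoints $\delta^\star$ are available by Proposition \ref{AdjointPro}, and $\mathcal{D}_\varepsilon$ is dense in $\mathbb{L}_p(\omega_0)$; hence the form is densely defined and one obtains a pre-Markov generator via $\langle f,-\mathfrak{L}f\rangle=\mathcal{E}(f)$ on $\mathcal{D}_\varepsilon$. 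The main obstacle is the infinite sum over $j\in\mathbb{Z}^d$ of the genuinely off-diagonal cross terms, which are absent in the diagonal $Z$-type case: I would bound each cross term by Cauchy--Schwarz through the geometric mean of the two diagonal terms, and combine this with the $L^1$-decay of $\eta$ to interchange $\sum_j$ and $\int dt$, thereby reducing the whole estimate to the already-established bound for the form built from $A_l,A_l^\ast$.
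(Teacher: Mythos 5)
Your proposal is correct and follows essentially the same route as the paper: the preceding lemma's modular flow, linearity of $X\mapsto\delta_X(f)$, sesquilinear expansion into two diagonal terms and two cross terms with phases $e^{\mp 2i\beta t}$, term-by-term $t$-integration producing $\hat\eta(0)$ and $\sqrt{2\pi}\,\hat\eta(\mp 2\beta)$, and the density/pre-Markov argument imported from the $Z$-type theorem via absolute summability of $\mathbf{\kappa},\mathbf{\xi}$, domination by the form built from $A_l,A_l^\ast$, and the adjoints of Proposition \ref{AdjointPro} on $\mathcal{D}_\varepsilon$. Note only that your (correct) expansion gives the second diagonal term as $\langle\delta_{Z^\ast_{T_j\mathbf{\xi}}}(f),\delta_{Z^\ast_{T_j\mathbf{\xi}}}(f)\rangle_{\omega_0}$ and minus signs on the two cross terms, exactly as in the paper's parallel example with $Y_{j,k}=A_j^n-A_k^{\ast m}$, so the missing $\ast$ and the plus signs in the stated formula are typos in the statement rather than gaps in your argument.
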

 \end{example}

\begin{example} 
Let 
\[ W_{j,k} \equiv W_{j,k}^{(n,m)} \equiv A_j^{\ast n} A_k^m . \]
Then, using modular dynamics corresponding to $\mathbf{N}_\Lambda\equiv \sum_{l\in\Lambda}N_l$ with $j,k\in \Lambda$, we have
\[\alpha_t (W_{j,k})=\lim_{\Lambda\to\mathbb{Z}^d}e^{-i\beta t\mathbf{N}_\Lambda}W_{j,k}e^{i\beta t \mathbf{N}_\Lambda}=\alpha_t (A_j^{\ast n})\alpha_t ( A_k^m ) = e^{i(m-n) \beta t} W_{j,k}\]
  
\begin{theorem}
The Dirichlet form  with derivations generated by directions $W_{j,k}^{(n,m)}$   given as follows
 \[
 \mathcal{E}(f)= \hat\eta(0)\sum_{j,k\in \mathbb
 {Z}^d\atop j\sim k} \left(\|\delta_{W_{j,k}}(f)\|_{2,\omega}^2 +\|\delta_{W_{j,k}^\ast}(f)\|_{2,\omega}^2\right)
 \]
 is well defined on the algebra of local polynomials in creators and annihilators.
 Its closure defines a Markov generator.
\end{theorem}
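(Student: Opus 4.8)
The plan is to reduce this to the computation already carried out for the $Z$-type fields in the proof of Theorem \ref{Thm 7.2}, since the modular dynamics of $W_{j,k}$ is again diagonal, and then to handle separately the domain and summability issues arising from the higher polynomial degree $n+m$ of the monomials $W_{j,k}=A_j^{\ast n}A_k^m$. First I would invoke the modular dynamics computed just above the statement, namely $\alpha_t(W_{j,k})=e^{i(m-n)\beta t}W_{j,k}$ and $\alpha_t(W_{j,k}^\ast)=e^{-i(m-n)\beta t}W_{j,k}^\ast$. Because the derivation is linear in its direction, $\delta_{\alpha_t(W_{j,k})}(f)=e^{i(m-n)\beta t}\delta_{W_{j,k}}(f)$, and likewise for the adjoint direction. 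As $(m-n)\beta t\in\mathbb{R}$, the phase has unit modulus, so in the sesquilinear scalar product the two phases cancel:
\[
\langle \delta_{\alpha_t(W_{j,k})}(f),\, \delta_{\alpha_t(W_{j,k})}(f)\rangle_{\omega}=\langle \delta_{W_{j,k}}(f),\, \delta_{W_{j,k}}(f)\rangle_{\omega},
\]
and the integrand becomes $t$-independent. Integrating against $\eta$ then pulls out the single constant $\int_{\mathbb{R}}\eta(t)\,dt=\hat\eta(0)$, yielding the stated closed form of $\mathcal{E}$. This is the short computational core of the theorem.

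Next I would establish well-definedness on local polynomials, where the essential input is the locality of $W_{j,k}$. By the CCR independence at distinct sites \eqref{CCR}, $W_{j,k}$ commutes with any operator supported away from $\{j,k\}$, so for a polynomial $f$ localised in a finite $\Lambda_0$ the commutator $[W_{j,k},f]$ vanishes unless $\{j,k\}\cap\Lambda_0\neq\emptyset$. Combined with the neighbour constraint $j\sim k$ this leaves only finitely many nonzero terms in the sum. Each surviving term is finite: $W_{j,k}$ is a fixed monomial in the $A^\sharp$'s, and its analytic continuation $\alpha_{\pm i/4}(W_{j,k})=e^{\mp(m-n)\beta/4}W_{j,k}$ is merely a bounded scalar multiple of it, so the hypotheses of the derivation Proposition of Section 2 hold with $W_{j,k},\alpha_{\pm i/4}(W_{j,k})\in\mathbb{L}_4(\omega_0)$ (the product state $\omega_0$ having finite moments of all orders). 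Hence $\delta_{W_{j,k}}(f)\in\mathbb{L}_2(\omega_0)$ for every local $f$, and $\mathcal{E}$ is a finite, non-negative quadratic form on the algebra of local polynomials.

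For closability and the identification of the Markov generator I would follow verbatim the strategy used for the $Z$- and $Y$-type examples. The form is manifestly of Beurling--Deny/Dirichlet type, being a sum of squared derivation norms weighted by the admissible $\eta$; it is non-negative, and it can be dominated by a form built from the elementary derivations in directions $A_j^\sharp$, on which the adjoints $\delta^\star_{W_{j,k}}$ are already defined on the dense set $\mathcal{D}_\varepsilon$. Closability then provides a Friedrichs extension whose generator is the sought Markov generator. The one genuinely new difficulty --- and the point I expect to be the main obstacle --- is the infinite-volume control of the sum over neighbour pairs when passing from local $f$ to the closure: since the $W_{j,k}$ have degree $n+m$, the commutators are higher-degree and hence less integrable than in the linear $Z$-field case, so one must verify that the $\mathbb{L}_p(\omega)$-bounds and the finite-speed-of-propagation estimates of Section 4 still dominate $[W_{j,k},f]$ uniformly. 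Because $W_{j,k}$ remains a finite-range object and $\mathcal{D}_\varepsilon\subset\mathbb{L}_p(\omega)$ for all $p\in[1,\infty)$, the same machinery applies, but the real work lies in re-deriving those integrability bounds for the products $A_j^{\ast n}A_k^m$ rather than for the linear fields.
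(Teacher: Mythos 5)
Your proposal is correct and coincides with the paper's (largely implicit) argument: the paper states this theorem without a separate proof, relying on the diagonal modular dynamics $\alpha_t(W_{j,k})=e^{i(m-n)\beta t}W_{j,k}$ computed immediately above it, so that the phase cancels in the sesquilinear form and $\hat\eta(0)=\int_{\mathbb{R}}\eta(t)\,dt$ factors out, exactly as you show, with well-definedness and closability handled the same way as in the $Z$-type field theorem (locality giving finitely many nonvanishing terms for local $f$, the $\mathbb{L}_4$ criterion of the Section 2 proposition for each term, and the Friedrichs extension). Your closing observation that $\alpha_{\pm i/4}(W_{j,k})=e^{\mp(m-n)\beta/4}W_{j,k}$ is only a scalar multiple of $W_{j,k}$, so no unbounded multipliers arise, is precisely the point that makes the paper's scheme go through for these higher-degree monomials.
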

 
\noindent Note that for selfadjoint operators
\[ W_{j,k} \equiv A_j^{\ast n} A_k^m+A_k^{\ast m} A_j^n= W_{j,k}^\ast.\]
and  the same modular dynamics as before
 we have 
 \[\alpha_t (W_{j,k})=e^{i(m-n)\beta t}A_j^{\ast n} A_k^m+e^{-i(m-n) \beta t}A_k^{\ast m} A_j^n \]
and hence a more complicated
 Dirichlet form  given as follows  

 \[\mathcal{E}(f)= \sum_{j,k}2(\hat{\eta}(0) \langle \delta_{A_j^{\ast n} A_k^m} (f), \delta_{A_j^{\ast n} A_k^m} (f)
\rangle_{\omega} + \hat{\eta}(2(n-m)\beta ) \langle \delta_{A_j^{\ast n} A_k^m} (f), \delta_{A_k^{\ast m} A_j^n} (f)
\rangle_{\omega}\]
\[+\hat{\eta}(2(m-n)\beta ) \langle \delta_{A_k^{\ast m} A_j^n} (f), \delta_{A_j^{\ast n} A_k^m} (f)
\rangle_{\omega}+ \hat{\eta}(0) \langle \delta_{A_k^{\ast m} A_j^n} (f), \delta_{A_k^{\ast m} A_j^n} (f)
\rangle_{\omega})\]  
  In the special case when $m=n$, the time dependence factorises and for each term we get the same multiplier $\hat{\eta}(0)$. Moreover in  the selfadjoint case $ W_{j,k}  = W_{j,k}^\ast$ the  Dirichlet form is not ergodic, i.e. it vanishes on nonzero elements.
  \\
  When $m=1=n$, we have
\[ W_{j,k} \equiv A_j^\ast A_k + A_k^\ast A_j =W_{j,k}^\ast\]
and they are invariant with respect to modular dynamics.
In general such operators with  different indices do not commute,  but in this special case we have the following nice commutation relations
\begin{equation} 
\begin{split}
[W_{j,k},W_{n,m}^\ast]&=[A_j^\ast A_k + A_k^\ast A_j,A_n^\ast A_m + A_m^\ast A_n]\\
&=\delta_{jm}A_n^\ast A_k +\delta_{kn}A_j^\ast A_m +\delta_{jn}A_m^\ast A_k +\delta_{km}A_j^\ast A_n \\
&\quad 
+\delta_{km}A_n^\ast A_j +\delta_{jn}A_k^\ast A_m +\delta_{kn}A_m^\ast A_j +\delta_{jm}A_k^\ast A_n  \\
&=
\delta_{jm} W_{k,n} 
+\delta_{kn} W_{j,m}
+\delta_{jn}W_{k,m}
+\delta_{km} W_{j,n}
\end{split}
\end{equation}
The Dirichlet form is not ergodic and to change that one need to add terms with derivations related to
\[ W_{j,k} \equiv i(A_j^\ast A_k - A_k^\ast A_j)\]
\end{example}

\begin{example} 
Consider
\[ Z_{j,k} \equiv A_j^n- A_k^m \]
With notation ${A_j^\ast} ^n\equiv ({A_j^\ast})^n$, we have
\[
[Z_{j,k},Z_{j,k}^\ast]=[A_j^n- A_k^m,{A_j^\ast}^n- {A_k^\ast}^m]=[A_j^n,{A_j^\ast}^n]+[A_k^m, {A_k^\ast}^m]
\]

Using formula \eqref{R1} (or lemma 5.1 from \cite{CZ}),

\[[A_j^n,{A_j^\ast}^n]
=(N+n)(N+n-1)\cdots(N+2)(N+1)-N(N-1)(N-2)\cdots(N-(n-1))\]
Hence

\[[Z_{j,k},Z_{j,k}^\ast]=(N+n)(N+n-1)\cdots(N+2)(N+1)-N(N-1)(N-2)\cdots(N-(n-1))\]
\[+(N+m)(N+m-1)\cdots(N+2)(N+1)-N(N-1)(N-2)\cdots(N-(m-1))\]
\begin{lemma} 
The modular dynamics in the directions of $Z_{j,k}$ and $Z_{j,k}^\ast$ with respect to the product state of a system of quantum harmonic oscillators is given by
\[ \alpha_t (Z_{j,k})=e^{in\beta t} A_j^n-e^{im\beta t}A_k^m\:\: and\:\:   \alpha_t (Z_{j,k}^\ast)=e^{-in\beta t}{A_j^\ast}^n-e^{-im\beta t}
{A_k^\ast}^m
\]
\end{lemma}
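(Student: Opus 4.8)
The plan is to reduce the computation to the single-site generators of the modular group and then propagate through powers and the linear combination. Since the reference state is the infinite product state $\omega_0$ of quantum harmonic oscillators, the modular Hamiltonian is $U=\sum_l N_l$, and the modular flow \eqref{MO} factorises over lattice sites: because the operators $A_l^\sharp$ at distinct sites commute (see \eqref{CCR}), the conjugation $B\mapsto e^{-it\beta U}Be^{it\beta U}$ acts on a monomial supported at site $j$ (respectively $k$) only through $e^{-it\beta N_j}\,\cdot\,e^{it\beta N_j}$ (respectively through $N_k$). It therefore suffices to analyse $\alpha_t$ on $A^n$ and $(A^\ast)^n$ at a single site.

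First I would record the elementary single-site identities. Taking $h(N)=e^{it\beta N}$ in \eqref{R1} gives $e^{it\beta N}A=e^{-i\beta t}A\,e^{it\beta N}$ and $e^{it\beta N}A^\ast=e^{i\beta t}A^\ast e^{it\beta N}$; conjugating then yields
\[
\alpha_t(A)=e^{-it\beta N}A\,e^{it\beta N}=e^{i\beta t}A,\qquad
\alpha_t(A^\ast)=e^{-it\beta N}A^\ast e^{it\beta N}=e^{-i\beta t}A^\ast,
\]
in agreement with the $Z$-type computation of Example \ref{ex2}.

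Next I would pass to powers. The cleanest route is to test on the eigenbasis $\{e_p\}$ of $N$: using \eqref{Creation}--\eqref{Ann}, $A^n e_p$ is a scalar multiple of $e_{p-n}$, so
\[
e^{-it\beta N}A^n e^{it\beta N}e_p
=e^{it\beta p}\,e^{-it\beta(p-n)}\,A^n e_p
=e^{in\beta t}A^n e_p,
\]
whence $\alpha_t(A^n)=e^{in\beta t}A^n$, and the analogous computation with $(A^\ast)^n e_p\propto e_{p+n}$ gives $\alpha_t((A^\ast)^n)=e^{-in\beta t}(A^\ast)^n$. Equivalently, one may invoke multiplicativity of $\alpha_t$, which being conjugation by a unitary satisfies $\alpha_t(A^n)=(\alpha_t(A))^n=e^{in\beta t}A^n$. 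By linearity of $\alpha_t$ one then obtains
\[
\alpha_t(Z_{j,k})=\alpha_t(A_j^n)-\alpha_t(A_k^m)=e^{in\beta t}A_j^n-e^{im\beta t}A_k^m,
\]
and the adjoint relation follows in the same way, which is the assertion.

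The algebra here is routine; the only point that genuinely needs care is that $A^n$ is unbounded, so one must check that $\alpha_t$ is well defined on these monomials. This is supplied by the domain discussion of Section 2, since the monomials lie in the relevant $\mathbb{L}_p(\omega_0)$ spaces and in the closure of $\mathcal{D}$. Moreover, because $\omega_0$ is a product state there is no interaction and hence no finite-speed-of-propagation obstruction to overcome---the modular flow is explicit at each site---so I would expect the verification of well-definedness on the chosen domain, rather than the computation itself, to be the only substantive step.
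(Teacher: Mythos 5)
Your proposal is correct and arrives at exactly the stated formula, but it takes a slightly different route than the paper. The paper argues infinitesimally: it computes the commutators $[Z_{j,k},\sum_{l}N_l]=nA_j^n-mA_k^m$ and $[Z_{j,k}^\ast,\sum_{l}N_l]=-nA_j^{\ast n}+m A_k^{\ast m}$, thereby exhibiting each monomial as an eigenvector of the derivation that generates the modular flow, and then reads off $\alpha_t(A_j^n)=e^{in\beta t}A_j^n$, $\alpha_t(A_k^{\ast m})=e^{-im\beta t}A_k^{\ast m}$ by exponentiating this eigenvalue relation. You instead compute the conjugation $e^{-it\beta N}A^{n}e^{it\beta N}$ directly, after reducing to a single site via the product structure: either by testing on the number eigenbasis, or by combining multiplicativity of $\alpha_t$ with the single-site identity $\alpha_t(A)=e^{i\beta t}A$ obtained from \eqref{R1}; linearity then gives the claim. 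Both arguments rest on the same ladder-operator facts, so the difference is one of packaging, but it is not empty: your direct conjugation avoids differentiating the flow on unbounded operators, and acting on the eigenbasis makes well-definedness of $\alpha_t$ on the monomials $A^n$ transparent --- a domain point you rightly flag and which the paper's proof leaves implicit. Conversely, the paper's commutator mechanism is the one that scales to situations where the modular Hamiltonian is not a sum of number operators and no explicit eigenbasis computation is available (compare the example with the quadratic Hamiltonian $H_\Lambda=\sum_{j\sim k}Z_{j,k}^\ast Z_{j,k}$, where the flow mixes lattice sites and one genuinely works with the eigenvector-of-the-derivation structure), so it is the more portable template within this paper.
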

Proof: We know that for $j,k\in\Lambda$ we have
\[[Z_{j,k},  \sum_{j\in\Lambda}N_j]=[ A_j^n- A_k^m,  \sum_{j\in\Lambda}N_j]=[ A_j^n,  \sum_{j\in\Lambda}N_j]-[ A_k^m,  \sum_{j\in\Lambda}N_j]\]
\[=nA_j^n-mA_k^m\]

And
\[[Z_{j,k}^\ast,  \sum_{j\in\Lambda}N_j]=-n{A_j^\ast}^n+m{A_k^\ast}^m
\]

Then $ \alpha_t (Z_{j,k})=\alpha_t (A_j^n)-\alpha_t (A_k^m)=e^{in\beta t} A_j^n-e^{im\beta t}A_k^m$ and similarly  $ \alpha_t (Z_{j,k}^\ast)=\alpha_t ({A_j^\ast}^n)-\alpha_t ({A_k^\ast}^m)=e^{-in\beta t} {A_j^\ast}^n-e^{-im\beta t}{A_k^\ast}^m$.

\begin{theorem}
The Dirichlet form in the directions of $Z_{j,k}$ and $Z_{j,k}^\ast$ with respect to the product state of a system of quantum harmonic oscillators given by 
\[\mathcal{E}_\Lambda(f)=\sum_{j\sim k}\hat{\eta}(0)\langle \delta_{A_j^n} (f), \delta_{A_j^n} (f)
\rangle_{\omega}+\hat{\eta}(0)\langle \delta_{A_k^m} (f), \delta_{A_k^m} (f)
\rangle_{\omega}+\hat{\eta}((m-n)\beta)\langle \delta_{A_j^n} (f), \delta_{A_k^m} (f)
\rangle_{\omega}+\hat{\eta}((n-m)\beta)\langle \delta_{A_k^m} (f), \delta_{A_j^n} (f)
\rangle_{\omega}+\]
\[\hat{\eta}(0)\langle \delta_{A_j^{\ast n}} (f), \delta_{A_j^{\ast n}} (f)
\rangle_{\omega}+\hat{\eta}(0)\langle \delta_{A_k^{\ast m}} (f), \delta_{A_k^{\ast m}} (f)
\rangle_{\omega}-\hat{\eta}((n-m)\beta)\langle \delta_{A_j^{\ast n}} (f), \delta_{A_k^{\ast m}} (f)
\rangle_{\omega}-\hat{\eta}((m-n)\beta)\langle \delta_{A_k^{\ast m}} (f), \delta_{A_j^{\ast n}} (f)
\rangle_{\omega}\]

with a dense domain
$D(\mathcal{E}_\Lambda)\supset \mathcal{D}$
is closable and hence defines a Markov generator.
    
\end{theorem}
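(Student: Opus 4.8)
The plan is to reduce the statement to a direct computation built on the modular dynamics already recorded in the preceding lemma, followed by the domain and closability argument used for the earlier examples. First I would substitute the modular dynamics $\alpha_t(Z_{j,k}) = e^{in\beta t}A_j^n - e^{im\beta t}A_k^m$ and $\alpha_t(Z_{j,k}^\ast) = e^{-in\beta t}A_j^{\ast n} - e^{-im\beta t}A_k^{\ast m}$ into the defining integral
\[
\mathcal{E}_\Lambda(f) = \sum_{j\sim k}\int_\mathbb{R}\left(\langle \delta_{\alpha_t(Z_{j,k})}(f), \delta_{\alpha_t(Z_{j,k})}(f)\rangle_\omega + \langle \delta_{\alpha_t(Z_{j,k}^\ast)}(f), \delta_{\alpha_t(Z_{j,k}^\ast)}(f)\rangle_\omega\right)\eta(t)\,dt .
\]
Since the map $X\mapsto\delta_X$ is linear in the direction $X$, one has $\delta_{\alpha_t(Z_{j,k})}(f) = e^{in\beta t}\delta_{A_j^n}(f) - e^{im\beta t}\delta_{A_k^m}(f)$, and correspondingly for the adjoint, so each integrand splits into a diagonal and a mixed part.

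Then I expand each inner product by sesquilinearity. The two diagonal contributions carry the modulus-one factor $|e^{ik\beta t}|^2 = 1$, so integration against $\eta$ produces the coefficient $\hat\eta(0)=\int_\mathbb{R}\eta(t)\,dt$; the two mixed contributions carry phases $e^{\pm i(m-n)\beta t}$, so integration produces $\hat\eta(\pm(m-n)\beta)$, where $\hat\eta(s)\equiv\int_\mathbb{R}\eta(t)e^{ist}\,dt$. Collecting the eight resulting terms (four from $Z_{j,k}$ and four from $Z_{j,k}^\ast$) yields exactly the claimed formula. The key analytic point is that all coefficients $\hat\eta(0),\hat\eta(\pm(m-n)\beta)$ are finite, which is guaranteed by admissibility of $\eta$, in particular the decay bound $|\eta(t)|\le M(1+|t|)^{-p}$ with $p>1$ of Definition \ref{admissible}(3).

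For the domain and closability I would argue exactly as for Theorem \ref{Thm 7.2}. Each direction $A_j^n,A_k^m$ and its adjoint lies in $\mathcal{D}$, and being polynomials in $A_j^\sharp$ they satisfy $X^\ast X, XX^\ast\in\mathbb{L}_2(\omega)$ against the product state; hence by the first Proposition of Section~2 every derivation above is defined on a domain containing $\mathcal{D}$, so $\mathcal{E}_\Lambda$ is finite on the dense set $\mathcal{D}$. As $\Lambda$ is finite, $\mathcal{E}_\Lambda$ is a finite sum of non-negative symmetric quadratic forms of the type $\int \langle \delta_{\alpha_t(X)}(f), \delta_{\alpha_t(X)}(f)\rangle_\omega\,\eta(t)\,dt$; non-negativity of the integrand together with the admissibility conditions (1)--(2) places the form within the Cipriani--Beurling--Deny framework of Section~6 (compare Proposition \ref{Gamma1Proposition}), so it is closable and its closure is Markovian, defining a Markov generator via $\langle f,-\mathfrak{L}_\Lambda f\rangle = \mathcal{E}_\Lambda(f)$ by Friedrichs extension.

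The obstacle in the general theory is the interplay of unboundedness with the modular group, but here it is benign: for the product harmonic-oscillator state the modular automorphism acts on $A_j^\sharp$ by the bounded scalar phases $e^{\pm ik\beta t}$, unlike the genuinely unbounded multipliers $e^{s(V(N+1)-V(N))}$ of the $V(N)$ case in Section~2.1, so no replacement of $\mathcal{D}$ by $\mathcal{D}_\varepsilon$ is forced and the only real care is convergence of the mixed integrals, already handled by admissibility. The single point that deserves explicit checking is the sign and phase bookkeeping in the cross terms (which carry the factors $\hat\eta(\pm(m-n)\beta)$), and this is settled once the sesquilinearity convention for $\langle\cdot,\cdot\rangle_\omega$ and the adjoint formulas of Proposition \ref{AdjointPro} are applied consistently.
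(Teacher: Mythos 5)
Your proposal is correct and follows essentially the same route as the paper's own proof: insert the modular dynamics from the preceding lemma, expand $\delta_{\alpha_t(Z_{j,k}^\sharp)}$ by linearity of $X\mapsto\delta_X$ and the scalar products by sesquilinearity, and integrate against $\eta$ so that the unimodular diagonal phases give $\hat{\eta}(0)$ while the mixed phases $e^{\pm i(m-n)\beta t}$ give $\hat{\eta}(\pm(m-n)\beta)$, with the domain and closability statement handled exactly as in Theorem \ref{Thm 7.2} via the Section 2 proposition on derivations and the Friedrichs extension. Your closing caveat about the cross-term bookkeeping is apt: if conjugate-linearity in the first slot is applied consistently, the difference structure of $Z_{j,k}=A_j^n-A_k^m$ forces a minus sign on all four cross terms (as indeed appears in the $Z_{j,k}^\ast$ line), so the plus signs on the $\delta_{A_j^n},\delta_{A_k^m}$ cross terms are a sign slip in the paper's own display rather than a defect of your argument.
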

Proof: We compute the derivations in the direction of $\alpha_t (Z_{j,k})$

\[\delta_{\alpha_t (Z_{j,k})} (f) = i [\alpha_t (Z_{j,k}), f] = i e^{in\beta t}[A_j^n, f]- i e^{im\beta t}[A_k^m, f]\]
\[(\delta_{\alpha_t (Z_{j,k})} (f))^{\ast} = - i [f^{\ast}, \alpha_t (Z_{j,k})^{\ast}]
=- ie^{-in\beta t} [f^{\ast},A_j^{\ast n}]+ i e^{-im\beta t}[f^{\ast},A_k^{\ast m}]\]

And hence

\[\int\langle \delta_{\alpha_t (Z_{j,k})} (f), \delta_{\alpha_t (Z_{j,k})} (f)
\rangle_{\omega}\eta(t)dt = \frac{1}{Z} {Tr} (e^{- \beta \sum_{j\in\Lambda}N_j / 2} (\delta_{\alpha_t
(X_j)} (f))^{\ast} e^{-\beta \sum_{j\in\Lambda}N_j / 2} \delta_{\alpha_t (X_j)} (f))\]
\[=\frac{1}{Z} {Tr} (- e^{- \beta \sum_{j\in\Lambda}N_j / 2} (- ie^{-in\beta t} [f^{\ast},A_j^{\ast n}]+ i e^{-im\beta t}[f^{\ast},A_k^{\ast m}])
e^{-\beta \sum_{j\in\Lambda}N_j / 2} ( i e^{in\beta t}[A_j^n, f]- i e^{im\beta t}[A_k^m, f]))\]
\[=\hat{\eta}(0)\langle \delta_{A_j^n} (f), \delta_{A_j^n} (f)
\rangle_{\omega}+\hat{\eta}(0)\langle \delta_{A_k^m} (f), \delta_{A_k^m} (f)
\rangle_{\omega}+\hat{\eta}((m-n)\beta)\langle \delta_{A_j^n} (f), \delta_{A_k^m} (f)
\rangle_{\omega}+\hat{\eta}((n-m)\beta)\langle \delta_{A_k^m} (f), \delta_{A_j^n} (f)
\rangle_{\omega}\]

Similarly

\[\delta_{\alpha_t (Z_{j,k}^\ast)} (f) = i [\alpha_t (Z_{j,k}^\ast), f] = i e^{-in\beta t}[A_j^{\ast n}, f]- i e^{-im\beta t}[A_k^{\ast m}, f]\]
\[(\delta_{\alpha_t (Z_{j,k}^\ast)} (f))^{\ast} = - i [f^{\ast}, \alpha_t (Z_{j,k}^\ast)^{\ast}]
=- ie^{in\beta t} [f^{\ast},A_j^n]+ i e^{im\beta t}[f^{\ast},A_k^m]\]

And hence

\[\int\langle \delta_{\alpha_t (Z_{j,k}^\ast)} (f), \delta_{\alpha_t (Z_{j,k}^\ast)} (f)
\rangle_{\omega}\eta(t)dt\]

\[=\hat{\eta}(0)\langle \delta_{A_j^{\ast n}} (f), \delta_{A_j^{\ast n}} (f)
\rangle_{\omega}+\hat{\eta}(0)\langle \delta_{A_k^{\ast m}} (f), \delta_{A_k^{\ast m}} (f)
\rangle_{\omega}-\hat{\eta}((n-m)\beta)\langle \delta_{A_j^{\ast n}} (f), \delta_{A_k^{\ast m}} (f)
\rangle_{\omega}-\hat{\eta}((m-n)\beta)\langle \delta_{A_k^{\ast m}} (f), \delta_{A_j^{\ast n}} (f)
\rangle_{\omega}\]

Then

\[\mathcal{E}_\Lambda (f) = \sum_{j\sim k} \int^{\infty}_{- \infty} \left(\langle \delta_{\alpha_t(Z_{j,k}) )} (f), \delta_{\alpha_t(Z_{j,k})} (f)\rangle_{\omega}+\langle \delta_{\alpha_t(Z_{j,k}^\ast) )} (f), \delta_{\alpha_t(Z_{j,k}^\ast)} (f)\rangle_{\omega}\right) \eta(t)dt\]

\[=\hat{\eta}(0)\langle \delta_{A_j^n} (f), \delta_{A_j^n} (f)
\rangle_{\omega}+\hat{\eta}(0)\langle \delta_{A_k^m} (f), \delta_{A_k^m} (f)
\rangle_{\omega}+\hat{\eta}((m-n)\beta)\langle \delta_{A_j^n} (f), \delta_{A_k^m} (f)
\rangle_{\omega}+\hat{\eta}((n-m)\beta)\langle \delta_{A_k^m} (f), \delta_{A_j^n} (f)
\rangle_{\omega}+\]
\[\hat{\eta}(0)\langle \delta_{A_j^{\ast n}} (f), \delta_{A_j^{\ast n}} (f)
\rangle_{\omega}+\hat{\eta}(0)\langle \delta_{A_k^{\ast m}} (f), \delta_{A_k^{\ast m}} (f)
\rangle_{\omega}-\hat{\eta}((n-m)\beta)\langle \delta_{A_j^{\ast n}} (f), \delta_{A_k^{\ast m}} (f)
\rangle_{\omega}-\hat{\eta}((m-n)\beta)\langle \delta_{A_k^{\ast m}} (f), \delta_{A_j^{\ast n}} (f)
\rangle_{\omega}\]

\end{example}

\begin{example} 
Consider 
    \[Y_{j,k}= A_j^n- A_k^{\ast m}\]
    then  
    \[[Y_{j,k},Y_{j,k}^\ast]=[A_j^n- A_k^{\ast m},A_j^{\ast n}- A_k^{m}]=[A_j^n,A_j^{\ast n}]+[A_k^{\ast m},A_k^{m}]\]
    \[=nA_j^{n-1}A_j^{\ast n-1} -mA_k^{m-1}A_k^{\ast m-1} \]
    and
    \[[Y_{j,k},\sum_jN_j]=[A_j^n- A_k^{\ast m},\sum_jN_j]=nA_j^n+mA_k^{\ast m}\]

    \begin{lemma} 
The modular dynamics of $Y_{j,k}$ and $Y_{j,k}^\ast$ with respect to the infinite product state $\omega_0$ of a system of quantum harmonic oscillators is given 
\[ \alpha_t (Y_{j,k})=e^{in\beta t}A_j^n-e^{-im\beta t}A_k^{\ast m}\:\: and\:\:   \alpha_t (Y_{j,k}^\ast)=e^{-in\beta t} A_j^{\ast n}- e^{im\beta t}A_k^m\]
\end{lemma}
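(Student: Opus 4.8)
The plan is to exploit the $\ast$-linearity of the modular automorphism $\alpha_t$ together with the eigen-operator structure already used in the preceding examples. Since $\omega_0$ is the infinite product state with density $\rho_{o,\Lambda}=\prod_{l\in\Lambda}\frac1{Z_o}e^{-\beta N_l}$, its modular dynamics is generated site-wise by $\sum_l N_l$ and $\alpha_t$ is linear; hence it suffices to compute $\alpha_t$ separately on the two monomials $A_j^n$ and $A_k^{\ast m}$ and then subtract, and likewise on $A_j^{\ast n}$ and $A_k^m$ for the adjoint.

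First I would record the relevant commutators with the number operator. Using \eqref{R1} (equivalently $[A,N]=A$ and $[A^\ast,N]=-A^\ast$, extended inductively), one gets $[A_j^n,\sum_l N_l]=nA_j^n$ and $[A_k^{\ast m},\sum_l N_l]=-mA_k^{\ast m}$, so that $[Y_{j,k},\sum_l N_l]=nA_j^n+mA_k^{\ast m}$, consistent with the commutator displayed above the statement. This identifies $A_j^n$ and $A_k^{\ast m}$ as eigen-operators of the modular generator with eigenvalues $n\beta$ and $-m\beta$ respectively.

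Next, writing $\frac{d}{dt}\alpha_t(B)=i\beta\,\alpha_t([B,\sum_l N_l])$ (which follows from $\alpha_t(B)=e^{-it\beta\sum_l N_l}B\,e^{it\beta\sum_l N_l}$ and the fact that $\sum_l N_l$ commutes with its own exponential), the two monomials decouple into the scalar equations $\frac{d}{dt}\alpha_t(A_j^n)=in\beta\,\alpha_t(A_j^n)$ and $\frac{d}{dt}\alpha_t(A_k^{\ast m})=-im\beta\,\alpha_t(A_k^{\ast m})$, with initial data $A_j^n$ and $A_k^{\ast m}$. Integrating gives $\alpha_t(A_j^n)=e^{in\beta t}A_j^n$ and $\alpha_t(A_k^{\ast m})=e^{-im\beta t}A_k^{\ast m}$, and by linearity $\alpha_t(Y_{j,k})=e^{in\beta t}A_j^n-e^{-im\beta t}A_k^{\ast m}$. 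Applying the identical computation to $A_j^{\ast n}$ and $A_k^m$, which carry eigenvalues $-n\beta$ and $m\beta$, yields $\alpha_t(Y_{j,k}^\ast)=e^{-in\beta t}A_j^{\ast n}-e^{im\beta t}A_k^m$, as claimed.

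The only point requiring care, and the nominal obstacle, is the passage to the infinite-volume limit defining $\alpha_t$ with respect to $\omega_0$. However, because $A_j^n$ and $A_k^{\ast m}$ are each localized at a single site and $\omega_0$ is a product state, the finite-volume automorphisms $\alpha_{t,\Lambda}$ act identically on these monomials for every $\Lambda\supset\{j,k\}$; the limit is therefore constant in $\Lambda$ and no convergence estimate beyond the single-site computation is needed. This makes the argument essentially immediate, in contrast with the interacting examples where the finite speed of propagation estimate of Section 4 must be invoked.
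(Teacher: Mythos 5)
Your proposal is correct and follows essentially the same route as the paper: the paper also computes the commutator $[Y_{j,k},\sum_j N_j]=nA_j^n+mA_k^{\ast m}$ just before the lemma and then concludes by linearity, $\alpha_t(Y_{j,k})=\alpha_t(A_j^n)-\alpha_t(A_k^{\ast m})=e^{in\beta t}A_j^n-e^{-im\beta t}A_k^{\ast m}$, with the adjoint case handled identically. Your write-up merely makes explicit two points the paper leaves implicit — the scalar ODE integration behind the eigen-operator relation and the triviality of the infinite-volume limit for single-site monomials under the product state — both of which are consistent with the paper's argument.
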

Proof:
So 
\[\alpha_t (Y_{j,k})=\alpha_t(A_j^n)-\alpha_t(A_k^{\ast m})=e^{in\beta t}A_j^n-e^{-im\beta t}A_k^{\ast m}\]
Similarly
\[\alpha_t (Y_{j,k}^\ast)=e^{-in\beta t} A_j^{\ast n}- e^{im\beta t}A_k^m\]

\begin{theorem}
The Dirichlet form in the directions of $Y_{j,k}$ and $Y_{j,k}^\ast$ with respect to the product state $\omega_0$ is given 
\[\mathcal{E}_\Lambda(f)=\sum_{j\sim k}\hat{\eta}(0)\langle \delta_{A_j^n} (f), \delta_{A_j^n} (f)
\rangle_{\omega}+\hat{\eta}(0)\langle \delta_{A_k^{\ast m}} (f), \delta_{A_k^{\ast m}} (f)
\rangle_{\omega}-\hat{\eta}((-m-n)\beta)\langle \delta_{A_j^n} (f), \delta_{A_k^{\ast m}} (f)
\rangle_{\omega}-\hat{\eta}((n+m)\beta)\langle \delta_{A_k^{\ast m}} (f), \delta_{A_j^n} (f)
\rangle_{\omega}\]
\[+\hat{\eta}(0)\langle \delta_{A_j^{\ast n}} (f), \delta_{A_j^{\ast n}} (f)
\rangle_{\omega}+\hat{\eta}(0)\langle \delta_{A_k^{ m}} (f), \delta_{A_k^{ m}} (f)
\rangle_{\omega}-\hat{\eta}((n+m)\beta)\langle \delta_{A_j^{\ast n}} (f), \delta_{A_k^{ m}} (f)
\rangle_{\omega}-\hat{\eta}((-m-n)\beta)\langle \delta_{A_k^{ m}} (f), \delta_{A_j^{\ast n}} (f)
\rangle_{\omega}\]

\end{theorem}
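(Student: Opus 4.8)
The plan is to mirror the computation carried out for the previous $Z_{j,k}=A_j^n-A_k^m$ example, the only structural novelty being that $Y_{j,k}$ couples a creation power to an annihilation power, so under the modular flow its two summands rotate with phases of opposite sign. First I would insert the modular dynamics supplied by the preceding lemma and use linearity of the derivation to write
\[
\delta_{\alpha_t(Y_{j,k})}(f) = i[\alpha_t(Y_{j,k}),f] = e^{in\beta t}\,\delta_{A_j^n}(f) - e^{-im\beta t}\,\delta_{A_k^{\ast m}}(f),
\]
and similarly $\delta_{\alpha_t(Y_{j,k}^\ast)}(f) = e^{-in\beta t}\,\delta_{A_j^{\ast n}}(f) - e^{im\beta t}\,\delta_{A_k^m}(f)$. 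The point of this step is that the entire $t$-dependence is now carried by scalar phases sitting outside the derivations, exactly as in the diagonalising trick used in the earlier examples.

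Next I would expand $\langle \delta_{\alpha_t(Y_{j,k})}(f), \delta_{\alpha_t(Y_{j,k})}(f)\rangle_{\omega_0}$ by sesquilinearity, remembering that the form is anti-linear in its first argument. This yields two diagonal terms carrying phase $1$ and two off-diagonal terms carrying phases $e^{\mp i(n+m)\beta t}$; the appearance of the frequency $(n+m)\beta$, in contrast to the $(n-m)\beta$ of the previous example, is precisely the signature of the mixed creation/annihilation structure of $Y_{j,k}$. Integrating term by term against $\eta(t)\,dt$ and recognising $\int e^{ist}\eta(t)\,dt = \hat\eta(s)$ turns each phase into a Fourier coefficient: the diagonal terms produce $\hat\eta(0)$, and the off-diagonal terms produce $-\hat\eta(\pm(n+m)\beta)$, with the minus sign inherited from the minus in $Y_{j,k}$. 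This reproduces the first line of the claimed formula, and the identical calculation for the $Y_{j,k}^\ast$ directions (with $A_j^n,A_k^{\ast m}$ replaced by $A_j^{\ast n},A_k^m$) gives the second line.

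Finally I would verify well-definedness on a dense domain. Because $A_j^n$ and $A_k^{\ast m}$ are unbounded, the derivations and their adjoints must be controlled exactly as in the first proposition of Section 2 and in Proposition \ref{AdjointPro}, so that the form is finite and symmetric on $\mathcal{D}$, or on $\mathcal{D}_\varepsilon$ if a superlinear one-point potential is present. The coefficients $\hat\eta(\pm(n+m)\beta)$ are finite by the polynomial decay bound in Definition \ref{admissible}. The hard part will not be the algebra, which is routine, but this domain bookkeeping for the off-diagonal inner products: one must ensure that $\delta_{A_j^n}(f)$ and $\delta_{A_k^{\ast m}}(f)$ pair to finite quantities on the chosen dense set, after which positivity of the form follows from condition (2) of Definition \ref{admissible} and the Friedrichs extension completes the construction of the Markov generator, as in the earlier examples.
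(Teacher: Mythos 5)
Your proposal is correct and follows essentially the same route as the paper: insert the modular dynamics $\alpha_t(Y_{j,k})$ from the lemma, expand the derivation by linearity so that the phases $e^{in\beta t}$, $e^{-im\beta t}$ sit outside, expand the inner product by sesquilinearity, and integrate term by term against $\eta(t)\,dt$ to convert the phases $e^{\mp i(n+m)\beta t}$ into the Fourier coefficients $\hat\eta(\mp(n+m)\beta)$, with the minus signs coming from the minus in $Y_{j,k}$. Your closing remarks on domain bookkeeping and positivity go slightly beyond what the paper records for this particular theorem, but they are consistent with its general framework and do not alter the argument.
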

Proof: 
We compute the derivations in the direction of $\alpha_t (Y_{j,k})$

\[\delta_{\alpha_t (Y_{j,k})} (f) = i [\alpha_t (Y_{j,k}), f] = i e^{in\beta t}[A_j^n, f]- i e^{-im\beta t}[A_k^{\ast m}, f]\]
\[(\delta_{\alpha_t (Y_{j,k})} (f))^{\ast} = - i [f^{\ast}, \alpha_t (Y_{j,k})^{\ast}]
=- ie^{-in\beta t} [f^{\ast},A_j^{\ast n}]+ i e^{im\beta t}[f^{\ast},A_k^{ m}]\]

And hence

\[\int\langle \delta_{\alpha_t (Y_{j,k})} (f), \delta_{\alpha_t (Y_{j,k})} (f)
\rangle_{\omega}\eta(t)dt = \frac{1}{Z} {Tr} (e^{- \beta \sum_{j\in\Lambda}N_j / 2} (\delta_{\alpha_t
(Y_{j,k})} (f))^{\ast} e^{-\beta \sum_{j\in\Lambda}N_j / 2} \delta_{\alpha_t (Y_{j,k})} (f))\]
\[=\frac{1}{Z} {Tr} (e^{- \beta \sum_{j\in\Lambda}N_j / 2} (- ie^{-in\beta t} [f^{\ast},A_j^{\ast n}]+ i e^{im\beta t}[f^{\ast},A_k^{ m}])
e^{-\beta \sum_{j\in\Lambda}N_j / 2} ( i e^{in\beta t}[A_j^n, f]- i e^{-im\beta t}[A_k^{\ast m}, f]))\]
\[=\hat{\eta}(0)\langle \delta_{A_j^n} (f), \delta_{A_j^n} (f)
\rangle_{\omega}+\hat{\eta}(0)\langle \delta_{A_k^{\ast m}} (f), \delta_{A_k^{\ast m}} (f)
\rangle_{\omega}-\hat{\eta}((-m-n)\beta)\langle \delta_{A_j^n} (f), \delta_{A_k^{\ast m}} (f)
\rangle_{\omega}-\hat{\eta}((n+m)\beta)\langle \delta_{A_k^{\ast m}} (f), \delta_{A_j^n} (f)
\rangle_{\omega}\]

Similarly

\[\delta_{\alpha_t (Y_{j,k}^\ast)} (f) = i [\alpha_t (Y_{j,k}^\ast), f] = i e^{-in\beta t}[A_j^{\ast n}, f]- i e^{im\beta t}[A_k^{ m}, f]\]
\[(\delta_{\alpha_t (Y_{j,k}^\ast)} (f))^{\ast} = - i [f^{\ast}, \alpha_t (Y_{j,k}^\ast)^{\ast}]
=- ie^{in\beta t} [f^{\ast},A_j^n]+ i e^{-im\beta t}[f^{\ast},A_k^{\ast m}]\]

And hence

\[\int\langle \delta_{\alpha_t (Y_{j,k}^\ast)} (f), \delta_{\alpha_t (Y_{j,k}^\ast)} (f)
\rangle_{\omega}\eta(t)dt\]

\[=\hat{\eta}(0)\langle \delta_{A_j^{\ast n}} (f), \delta_{A_j^{\ast n}} (f)
\rangle_{\omega}+\hat{\eta}(0)\langle \delta_{A_k^{ m}} (f), \delta_{A_k^{ m}} (f)
\rangle_{\omega}-\hat{\eta}((n+m)\beta)\langle \delta_{A_j^{\ast n}} (f), \delta_{A_k^{ m}} (f)
\rangle_{\omega}-\hat{\eta}((-m-n)\beta)\langle \delta_{A_k^{ m}} (f), \delta_{A_j^{\ast n}} (f)
\rangle_{\omega}\]

Then

\[\mathcal{E}_\Lambda (f) = \sum_{j\sim k} \int^{\infty}_{- \infty} \left(\langle \delta_{\alpha_t(Y_{j,k}) )} (f), \delta_{\alpha_t(Y_{j,k})} (f)\rangle_{\omega}+\langle \delta_{\alpha_t(Y_{j,k}^\ast) )} (f), \delta_{\alpha_t(Y_{j,k}^\ast)} (f)\rangle_{\omega}\right) \eta(t)dt\]

\[=\sum_{j\sim k}\hat{\eta}(0)\langle \delta_{A_j^n} (f), \delta_{A_j^n} (f)
\rangle_{\omega}+\hat{\eta}(0)\langle \delta_{A_k^{\ast m}} (f), \delta_{A_k^{\ast m}} (f)
\rangle_{\omega}-\hat{\eta}((-m-n)\beta)\langle \delta_{A_j^n} (f), \delta_{A_k^{\ast m}} (f)
\rangle_{\omega}-\hat{\eta}((n+m)\beta)\langle \delta_{A_k^{\ast m}} (f), \delta_{A_j^n} (f)
\rangle_{\omega}\]
\[+\hat{\eta}(0)\langle \delta_{A_j^{\ast n}} (f), \delta_{A_j^{\ast n}} (f)
\rangle_{\omega}+\hat{\eta}(0)\langle \delta_{A_k^{ m}} (f), \delta_{A_k^{ m}} (f)
\rangle_{\omega}-\hat{\eta}((n+m)\beta)\langle \delta_{A_j^{\ast n}} (f), \delta_{A_k^{ m}} (f)
\rangle_{\omega}-\hat{\eta}((-m-n)\beta)\langle \delta_{A_k^{ m}} (f), \delta_{A_j^{\ast n}} (f)
\rangle_{\omega}\]

\end{example}

\begin{remark} 
    In Examples 7.5 and 7.8, the operators  $Y_{j,k}$'s satisfy systems of CCRs but they are not necessarily commuting for different pairs of $j,k\in\Lambda$.
\end{remark}

\begin{remark} 
    The Markov generators corresponding to these Dirichlet forms are defined on a dense set and have Friedrich's extension to a self adjoint operator denoted by $-\mathfrak{L}$ , which generates a strongly continuous semigroup $P_t=e^{t\mathfrak{L}}.$
\end{remark}

 \begin{example}[\textbf{G}-Models]
 $\,$\\
For some $\kappa,\xi\neq 0$,  let $Y_{\kappa,\xi}=\kappa A + \xi A^\ast$ and 
\[\mathcal{N}\equiv Y_{\kappa,\xi}^{\ast }Y_{\kappa,\xi}.\]
Consider
    \[G_{\kappa,\xi}\equiv \frac12Y_{\kappa,\xi}^2 = \frac12\left( {\kappa}A- {\xi}A^\ast\right)^2.\]
    Then, with $\mathcal{R}\equiv |\kappa|^2-|\xi|^2$, we have
    \[
    [G_{\kappa,\xi},G_{\kappa,\xi}^\ast]
    %
       = \frac12\mathcal{R}^2 + \mathcal{R} \mathcal{N}
    \]
 and
    \[
       [G_{\kappa,\xi},  \mathcal{N}]=   2\mathcal{R} G_{\kappa,\xi}, \qquad 
       [G_{\kappa,\xi}^\ast,  \mathcal{N}]=   -2\mathcal{R} G_{\kappa,\xi}^\ast.   
    \]
    Please note that in this model $\mathcal{R}\equiv |\kappa|^2-|\xi|^2$
    can take on positive as well negative values.\\
    It provides another example of type \cite{ShS}.
    More on the corresponding algebra is discussed in \cite{MsZ1}, \cite{MsZ2}.
\\
To introduce infinite dimensional models we consider a product state $\omega_\theta$ associated to one particle interaction of the form $\mathcal{N}_j\equiv \mathcal{N}_j(\kappa_j,\xi_j)$, $j\in{\mathbb{Z}}^d$, as follows.
\[ \omega_\theta \equiv \lim_{\Lambda\to\mathbb{Z}^d} \frac1{Z_\Lambda}Tr_\Lambda e^{-\sum_{j\in\Lambda}\beta_j \mathcal{N}_j(\kappa_j,\xi_j)} 
\]
where $\beta_j\in(0,\infty)$ and $0<Z_\Lambda<\infty$ is the normalisation constant. Thus the constants $\mathcal{R}_j$ may change with $j\in{\mathbb{Z}^d}$ (and the integer lattice can be replaced by a 
triangulation  of a 
manifold).
Then the modular and Hamiltonian automorphisms' is defined by the following (non inner) densely defined derivation 
\[
ad_H(f)\equiv \lim_{\Lambda\to \mathbb{Z}^d} \sum_{j\in\Lambda} [\mathcal{N}_j,f]
\]
This setup includes a large family of eigenvectors of modular operator given as follows: For a finite set $J\subset\subset\mathbb{Z}^d$
\[  W_{J,\mathcal{O}}(\mathbf{\kappa},\mathbf{\xi},\mathbf{n})\equiv \prod_{j\in\mathcal{O}} G_{\kappa_j,\xi_j}^{n(j)}\prod_{k\in J\setminus \mathcal{O}} G_{\kappa_k,\xi_k}^{\ast n(k)}\]
where indices $j,k$ indicate independent (commuting) copies of $G_{\kappa,\xi}$, and $\mathbf{\kappa}\equiv (\kappa_j)_{j\in\mathbb{Z}^d},\mathbf{\xi}\equiv (\xi_j)_{j\in\mathbb{Z}^d}$, $\mathbf{n}\equiv\{ n(l)\in\mathbb{N}, l\in\mathbb{Z}^d\}$. 
    \\
    Each of $W_{J,\mathcal{O}}$ allows to introduce a densely defined Dirichlet form as follows
    \[ 
    \mathcal{E}_{J,\mathcal{O}}(f) \equiv \sum_{j\in\mathbb{Z}^d} \left(\|\delta_{W_{J+j,\mathcal{O}+j}}f\|_2 +\|\delta_{W_{J+j,\mathcal{O}+j}^\ast}f\|_2\right) 
    \]
    where $J+j,\mathcal{O}+j$ denote a shift of $J,\mathcal{O}$ by a vector $j\in\mathbb{Z}^d$.
    Note that in case when $|n(\mathcal{O})|=|n(J\setminus \mathcal{O})|$, the corresponding $W_{J,\mathcal{O}}$ is invariant with respect to modular operator.\\
    
    One can also consider Dirichlet forms with the following non-inner derivations 
    \[ \delta_{\mathcal{W}(J\mathcal{O})}\equiv \sum_{j\in\mathbb{Z}^d}  \delta_{W_{J+j,\mathcal{O}+j}}. \]
\end{example}
 
\section{No Spectral Gap Property}
We can prove, generalising the results of commutative case obtained in \cite{INZ} , that the quantum dissipative semigroups involving Dirichlet forms $\mathcal{E}$
defined with
$Z_{j,k} = A_j-A_k$ and $Y_{j,k} = A_j-A_k^\ast$, for nearest neighbours' pairs $(j,k)$,  decay to equilibrium only polynomially in time. 
We have the following result.
\begin{theorem}
The Poincar{\'e} inequality does not hold for Dirichlet forms $\mathcal{E}$  defined with $Z_{j,k} = A_j-A_k$ as well as with $Y_{j,k} = A_j-A_k^\ast$. \end{theorem}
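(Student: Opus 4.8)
The plan is to refute the Poincar\'e inequality $\|f-\omega(f)\|_{\omega,2}^2\le C\,\mathcal{E}(f)$ by producing, for each family, a \emph{Weyl sequence}: mean-zero local elements $f_n$ with $\|f_n\|_{\omega,2}=1$ but $\mathcal{E}(f_n)\to0$, so that $\inf_f \mathcal{E}(f)/\|f-\omega(f)\|_{\omega,2}^2=0$. The guiding principle, inherited from the commutative analysis of \cite{INZ}, is that each model carries a locally conserved collective field whose long-wavelength fluctuations make the bottom of the spectrum behave like that of the lattice Laplacian, which has no gap.

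First I would treat $Z_{j,k}=A_j-A_k$. Since $A_j,A_k$ share the same modular frequency, $\alpha_t(Z_{j,k})=e^{i\beta t}Z_{j,k}$ is a modular eigenvector, so the collective field $\Sigma=\sum_l A_l^\ast$ obeys $[\alpha_t(Z_{j,k}),\Sigma]=e^{i\beta t}[Z_{j,k},\Sigma]=0$ for every $t$; this is the conservation law. Testing on the one-particle elements $f_p=\sum_{l} e^{ip\cdot l}A_l^\ast$ (smoothly cut off to a box $\Lambda_n$), the derivations $\delta_{\alpha_t(Z_{j,k})}(f_p)=ie^{i\beta t}(e^{ip\cdot j}-e^{ip\cdot k})\,\mathbf 1$ are scalars while $\delta_{\alpha_t(Z_{j,k}^\ast)}(f_p)=0$, so the Dirichlet form computed above for this model collapses to
\[
\mathcal{E}(f_p)=\hat\eta(0)\sum_{j\sim k}\bigl|e^{ip\cdot j}-e^{ip\cdot k}\bigr|^2\;\sim\;\hat\eta(0)\,|\Lambda_n|\,|p|^2 ,
\]
whereas $\|f_p\|_{\omega_0,2}^2=\|A^\ast\|_{\omega_0,2}^2\,|\Lambda_n|$ because the product state annihilates all cross terms. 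The Rayleigh quotient is therefore $O(|p|^2)$, and choosing $p=p_n\to 0$ as $\Lambda_n\uparrow\mathbb Z^d$ drives it to $0$; these same slowly varying profiles are what feed the quantitative argument producing the polynomial (rather than exponential) decay rate.

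For $Y_{j,k}=A_j-A_k^\ast$ the candidate conserved quantity is the anti-Hermitian field $Q=\sum_l(A_l-A_l^\ast)$, for which a direct check gives $[Y_{j,k},Q]=[Y_{j,k}^\ast,Q]=0$, mirroring the null relation $[Y_{j,k},Y_{j,k}^\ast]=0$. The hard part, which I expect to be the genuine obstacle, is that $\alpha_t(Y_{j,k})=e^{i\beta t}A_j-e^{-i\beta t}A_k^\ast$ is \emph{not} a modular eigenvector, so the cancellation $[Y_{j,k},Q]=0$ is not transported through the $\eta$-averaged derivations: the relative phase $e^{2i\beta t}$ survives $\int(\cdot)\,\eta(t)\,dt$ and contributes terms weighted by $\hat\eta(\pm2\beta)$ beside the diagonal $\hat\eta(0)$. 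Concretely, on the one-particle amplitudes the form becomes a $2\times2$ band whose lowest eigenvalue is $2z\,\hat\eta(0)-|\gamma(p)|\,|\hat\eta(2\beta)+\hat\eta(-2\beta)|$ with $z=2d$ and $\gamma(p)=2\sum_{l}\cos p_l$, which touches $0$ only when the modular phases align. The route I would pursue is to exploit that the $Y_{j,k}$ generate a commutative $\ast$-subalgebra (because $[Y,Y^\ast]=0$), pass to that abelian sector, and match the estimate term-by-term with the lattice-Laplacian computation of \cite{INZ}, so that the conserved $Q$ again supplies a long-wavelength Weyl sequence. In both cases, once such a sequence exists the failure of Poincar\'e is immediate, and the quantitative version of the same construction delivers the polynomial decay asserted in the theorem.
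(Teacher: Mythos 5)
Your $Z$-case is correct and is essentially the paper's own proof: the paper tests the form on $F_n=\sum_{k\in\Lambda_n}A_k^\ast$, which is exactly your $f_p$ at $p=0$, and compares $\mathcal{E}(F_n)\lesssim |\partial\Lambda_n|$ with $\|F_n-\omega(F_n)\|_{2,\omega}^2\sim|\Lambda_n|$. Note only that your estimate $\mathcal{E}(f_p)\sim\hat\eta(0)|\Lambda_n||p|^2$ drops the $O(|\partial\Lambda_n|)$ contribution of bonds crossing the boundary of the cut-off box, which for $p_n\sim 1/n$ actually dominates; this does not affect the conclusion, since the Rayleigh quotient is still $O\bigl(|p_n|^2+|\partial\Lambda_n|/|\Lambda_n|\bigr)\to 0$.

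The $Y$-case is where your proposal has a genuine gap, and the obstacle you flag is real. First, the route you outline rests on a false algebraic claim: the $Y_{j,k}$ do \emph{not} generate a commutative $\ast$-subalgebra. Each single pair satisfies $[Y_{j,k},Y_{j,k}^\ast]=0$, but for different bonds $[Y_{j,k},Y_{l,m}^\ast]=\delta_{jl}-\delta_{km}$, which is nonzero whenever the bonds share one suitably placed endpoint (e.g. $j=l$, $k\neq m$); the paper states this non-commutativity explicitly in the remark following its $Y$-type examples. Second, and more decisively, your own band computation shows that \emph{no} Weyl sequence can exist for the $\eta$-averaged form: for admissible $\eta$ (real, non-negative, integrable, analytic) one has $|\hat\eta(\pm 2\beta)|<\hat\eta(0)$ strictly, hence by Cauchy--Schwarz each bond contributes at least $\bigl(\hat\eta(0)-|\hat\eta(2\beta)|\bigr)\bigl(\|\delta_{A_j}f\|^2+\|\delta_{A_k^\ast}f\|^2+\|\delta_{A_j^\ast}f\|^2+\|\delta_{A_k}f\|^2\bigr)$, so the averaged $Y$-form dominates a positive multiple of the Ornstein--Uhlenbeck form $\sum_l\bigl(\|\delta_{A_l}f\|^2+\|\delta_{A_l^\ast}f\|^2\bigr)$, which has a spectral gap (single-site gap of the quantum O--U semigroup, \cite{CFL}, tensorized over the product state). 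So under that interpretation the Poincar\'e inequality \emph{holds}, and no passage to an abelian sector or matching with \cite{INZ} can produce the desired degenerating sequence.

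What makes the $Y$-half of the theorem provable --- and what the paper implicitly does (its further example proposes $F_n=\sum_{l\sim l'}(A_l+A_{l'}^\ast)$ with the form written directly in terms of $\delta_{Y_{j,k}}$, $\delta_{Y_{j,k}^\ast}$) --- is to read $\mathcal{E}$ for the $Y$-model as the form built from the derivations $\delta_{Y_{j,k}},\delta_{Y_{j,k}^\ast}$ \emph{without} the modular time average, so that the cross terms carry the full weight $\hat\eta(0)$ and exact cancellation is possible. For that form, the charge you already identified, $Q_n=\sum_{l\in\Lambda_n}(A_l-A_l^\ast)$, commutes exactly with every $Y_{j,k}$ and $Y_{j,k}^\ast$ whose bond lies in the bulk of $\Lambda_n$, so $\mathcal{E}(Q_n)\lesssim|\partial\Lambda_n|$ while $\|Q_n-\omega(Q_n)\|_{2,\omega}^2\sim|\Lambda_n|$, and the surface-versus-volume argument of the $Z$-case applies verbatim. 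In short: your conserved quantity $Q$ was the right test object, but the missing step was to pair it with the un-averaged form rather than trying to push the cancellation through the $\hat\eta(\pm2\beta)$ cross terms, where it provably fails.
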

We remark that so far the only quantum example where Poincar\'e inequality failed was provided in \cite{CFL} for O-U generator with equal coefficients for both directions of the derivations. For that example however an equilibrium state does not exist, while in our example it does. The idea of the proof of the theorem is to show that no constant $m\in(0,\infty)$ exists such that
    \[
   m \|f-\omega(f)\|_2^2 \leq \mathcal{E}(f).
    \]
This is done by constructing a sequence of operators $f_n$ localised in boxes $\Lambda_n$ of size $2n+1$ with variance growing as the volume of the box and the Dirichlet form only as a surface of the box.\\
 \begin{proof} 
     For $Z_{j,k} =\frac12 (A_j- A_k)$,
consider an increasing  sequence of a finite set $\Lambda_n\equiv [-n,n]^d$ and corresponding sequence of the following operators
\[
F_{n}\equiv \sum_{k\in \Lambda_n} A_k^\ast .
\]
Then we have
\[ 
    \delta_{Z_{j,k}^\ast}(F_n) = 0 
    \]
    and
\[ 
    \delta_{Z_{j,k}}(F_n) = i[\frac12 (A_j- A_k), \sum_{l\in [-n,n]^d} A_l^\ast] = \frac{i}{2} \sum_{l\in [-n,n]^d} \left(\delta_{j,l} - \delta_{k,l}\right) \]
    with
\[    \sum_{l\in [-n,n]^d} \left(\delta_{j,l} - \delta_{k,l}\right)= 
        0  \text{ if } both\,j,k\,  are\,  outside\,  or\,  both\,  inside\, \Lambda_n\    
\] 
and otherwise $\sum_{l\in [-n,n]^d} |\left(\delta_{j,l} - \delta_{k,l}\right)|\leq 2d$.
Hence we get

\[
 \mathcal{E}(F_n)= \hat\eta(0)\sum_{j,k\in \mathbb
 {Z}} \left(\|\delta_{Z_{j,k}}(F_n)\|_{2,\omega}^2 +\|\delta_{Z_{j,k}^\ast}(F_n)\|_{2,\omega}^2\right)
 \leq Const |\partial \Lambda_n| 
 \]

On the other hand, for the product state,
\[
\| F_n-\langle F_n\rangle_\omega \|_{2,\omega}^2\sim \sum_{j \in \Lambda_n} \| A_j^\ast-\langle A_j^\ast \rangle_\omega \|_{2,\omega}^2\sim Vol( \Lambda_n) \| A_0^\ast-\langle A_0^\ast \rangle_\omega \|_{2,\omega}^2
\]
Since the boundary  $|\partial \Lambda_n|/Vol( \Lambda_n)\to_{n\to \infty}0$, the Poincare inequality cannot hold.
 
\end{proof}

\noindent
\textbf{Further examples}\\
Similar conclusion holds for Dirichlet form defined with some other derivations. 
\begin{example}
    For $Z_{j,k} =A_j^m- A_k^m$, one can consider a sequence of the following operators
\[
F_{n}\equiv \sum_{k\in \Lambda_n} A_k^\ast .
\]
for a finite set $\Lambda_n$.\\
Then we have
\[ 
    \delta_{Z_{j,k}}(F_n) = i[A_j^m- A_k^m, \sum_{l\in [-n,n]^d} A_l^\ast] = i m\sum_{l\in [-n,n]^d}\left(\delta_{j,l} A_j^{m-1} - \delta_{k,l} A_k^{m-1}\right) \]
\[    \sim \begin{cases}
        mA_j^{m-1} or - mA_k^{m-1}  & \text{if\, either }  j\;  or\,k\; is\; outside \, \Lambda_n\\
        0 & \text{if } both\,  are\,  outside\, \Lambda_n
    \end{cases}
\]
and 
\[ 
    \delta_{Z_{j,k}^\ast}(F_n)=0\]
Hence we conclude that
\[\mathcal{E}(F_n)\sim |\partial \Lambda_n| , \quad and \quad \|F_n-\langle F_n\rangle\|^2\sim |\Lambda_n|.\]
which precludes Poincare inequality.
Another important possibility of a sequence  of test operators is provided by
\[
F_{n}\equiv \sum_{k\in \Lambda_n} N_k .
\]
In this case we have
\[ 
    \delta_{Z_{j,k}^\sharp}(F_n) = sign(\sharp) i[{A_j^\sharp}^m- {A_k^\sharp}^m, \sum_{l\in [-n,n]} N_l] = sign(\sharp)i m \left(\delta_{j,l} {A_j^\sharp}^{m} - \delta_{k,l} {A_k^\sharp}^{m}\right) \]
where $sign(\sharp)$ equals $\pm 1$, 
(1 for A and -1 for $A^\ast$?), 
respectively. Thus we reach the same conclusion.
\end{example}

 \begin{example}
    For $Y_{j,k}=A_j+A_k^\ast$, 
consider a sequence
\[F_n=\sum_{l,l'\in\Lambda_n\atop l\sim l'}( A_l +A_{l'}^\ast )\]
\textbf{Remark}
One could generalise this example to include $Y_{\kappa,\xi}$, with $\kappa$ and $\xi$ being absolutely summable sequences satisfying the following mean zero condition
\[
\sum_{j\in\mathbb{Z}^d}\kappa_j=0, \qquad \sum_{j\in\mathbb{Z}^d}\xi_j=0
\]
\end{example} 

\section{Algebra of Invariant Derivations}
In the setup associated to the state $\omega_0$, consider the operators of the following form
\[
A(I,J)\equiv \prod_{i\in I} A_i \prod_{j\in J} A_j^\ast.
\]
One can show using e.g. \eqref{R1} that in case $|I|=|J|$
such operators are 
invariant with respect to the modular dynamics. 
(Note that for $I=J$, the operator  $A(I,J)$ is a polynomial in $N_j, j\in J$.) Choosing $I\cap J=\emptyset$, we can define a Dirichlet form as follows
\[
\mathcal{E}_{I,J}(f)\equiv \sum_{k\in\mathbb{Z}^d} \|\delta_{A(I+k,J+k)}f\|_{2,\omega_0}^2.
\]
We have the following general result.
\begin{theorem}
    The Dirichlet form $\mathcal{E}_{I,J}$ does not satisfy Poincare inequality.
\end{theorem}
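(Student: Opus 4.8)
\noindent The plan is to reproduce the volume-versus-surface dichotomy of the preceding no-gap proofs, using a test sequence adapted to the number operators, which play the role of the conserved charge that each $A(I+k,J+k)$ commutes with up to a scalar. The natural candidate is
\[
F_n\equiv \mathbf{N}_{\Lambda_n}=\sum_{l\in\Lambda_n}N_l,\qquad \Lambda_n\equiv[-n,n]^d.
\]
First I would compute the derivations. From the relations \eqref{R1} one has $[A_i,N_i]=A_i$ and $[A_j^\ast,N_j]=-A_j^\ast$, and since distinct sites commute this gives
\[
\delta_{A(I+k,J+k)}(F_n)=i\bigl(|(I+k)\cap\Lambda_n|-|(J+k)\cap\Lambda_n|\bigr)\,A(I+k,J+k).
\]
The decisive point is the cancellation forced by $|I|=|J|$: if $(I+k)\cup(J+k)\subset\Lambda_n$ the scalar prefactor equals $|I|-|J|=0$, and if $(I+k)\cup(J+k)\cap\Lambda_n=\emptyset$ it equals $0$ as well. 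Hence $\delta_{A(I+k,J+k)}(F_n)$ is nonzero only for those shifts $k$ for which $(I+k)\cup(J+k)$ straddles $\partial\Lambda_n$, and the number of such $k$ is at most $\mathrm{Const}\,|\partial\Lambda_n|$, the constant depending only on $\mathrm{diam}(I\cup J)$ and $|I\cup J|$.

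For each contributing $k$ the prefactor is bounded in modulus by $|I|$, while translation invariance of the product state together with the modular invariance of $A(I,J)$ (valid precisely because $|I|=|J|$, so that $\alpha_{-i/4}(A(I,J))=A(I,J)$) gives $\|A(I+k,J+k)\|_{2,\omega_0}^2=\|A(I,J)\|_{2,\omega_0}^2=\omega_0\!\bigl(A(I,J)^\ast A(I,J)\bigr)$, a finite constant (a polynomial moment against the one-site weight $e^{-\beta N}$). Summing over $k$ therefore yields
\[
\mathcal{E}_{I,J}(F_n)=\sum_{k\in\mathbb{Z}^d}\|\delta_{A(I+k,J+k)}(F_n)\|_{2,\omega_0}^2\le \mathrm{Const}\,|\partial\Lambda_n|.
\]
On the other hand, since $F_n$ is self-adjoint and commutes with the density matrix, and $\omega_0$ is a product state, independence across sites gives
\[
\|F_n-\omega_0(F_n)\|_{2,\omega_0}^2=\sum_{l\in\Lambda_n}\|N_l-\omega_0(N_l)\|_{2,\omega_0}^2=|\Lambda_n|\,\mathrm{Var}_{\omega_0}(N_0),
\]
with $\mathrm{Var}_{\omega_0}(N_0)\in(0,\infty)$ (the variance of a geometric distribution). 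Because $|\partial\Lambda_n|/|\Lambda_n|\to 0$ as $n\to\infty$, no constant $m>0$ can satisfy $m\,\|F_n-\omega_0(F_n)\|_{2,\omega_0}^2\le\mathcal{E}_{I,J}(F_n)$ for all $n$, so the Poincar\'e inequality fails.

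The main obstacle I anticipate is the bookkeeping of the $\mathbb{L}_2(\omega_0)$ norms rather than the counting: one must justify that the symmetric inner product reduces to the tracial form $\omega_0(A^\ast A)$. This is exactly where modular invariance of $A(I,J)$ is used, since it makes the modular half-shifts in $\langle\cdot,\cdot\rangle_{\omega_0}$ trivial, and one must also confirm convergence of the relevant polynomial moments of creation/annihilation operators against $e^{-\beta N}$, which holds because the Gibbs weight dominates every polynomial. One should also record that $F_n$ genuinely lies in the form domain, which is automatic here since $\delta_{A(I+k,J+k)}(F_n)$ is a scalar multiple of $A(I+k,J+k)\in\mathbb{L}_2(\omega_0)$ and only finitely many $k$ contribute. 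Everything else parallels the $Z_{j,k}$ and $Y_{j,k}$ arguments.
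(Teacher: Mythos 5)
Your proposal is correct and follows essentially the same route as the paper's own proof: the identical test sequence $F_n=\sum_{l\in\Lambda_n}N_l$, the same key cancellation $\delta_{A(I+k,J+k)}(F_n)=0$ whenever $I+k,J+k\subset\Lambda_n$ (forced by $|I|=|J|$), and the same surface-versus-volume comparison $\mathcal{E}_{I,J}(F_n)\lesssim|\partial\Lambda_n|$ against $\|F_n-\omega_0(F_n)\|_{2,\omega_0}^2\sim|\Lambda_n|$. You merely supply more detail than the paper (the explicit commutator prefactor, translation invariance, and modular invariance of $A(I,J)$ justifying the norm bounds), which is all sound.
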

 \begin{proof} 
One can prove this general result choosing a sequence of operators
\[ F_n \equiv \sum_{j\in\Lambda_n} N_j\]
and noticing that if $ I+k,J+k \subset \Lambda_n$, then
\[ \delta_{A(I+k,J+k)}(F_n) =0.\]
Hence 
\[ \mathcal{E}_{I,J}(F_n)\sim |\partial\Lambda_n|\]
while
\[ \|F_n-\omega_0(F_n)\|_{2,\omega_0}^2\sim |\Lambda_n|.\]
and hence the Poincaré inequality with any fixed positive constant would break down for large $\Lambda_n$.  
\end{proof} 

In fact in the present setup we have an infinite dimensional algebra which is invariant with respect to the modular operator. 
It includes operators $A(I,J)$, but also more general ones.  For example
denoting by
\[
A_I^{n(I)}\equiv \prod_{i\in I}A_i^{n(i)}
\]
with multi-index $n(I)\equiv(n(i)\in\mathbb{N})_{i\in I}$, and setting
$|n(I)|\equiv \sum_{i\in I} n(i)$, we can consider operators of the form
\[
P_{I,J}(N_I,N_J)\left(A_I^{n(I)} \left(A_J^{n(J)}\right)^\ast \right)
\]
with  $|n(I)|=|n(J)|$ and a 
polynomial $P_{I,J}(N_I,N_J)$.
\\
This provides a possibility to define an infinite number of Dirichlet forms which cannot satisfy Poincare inequality.
\\
Besides the inner derivations indicated above, one can consider also exterior derivations and an infinite dimensional Lie algebra of them.
For example the following limit provides 
non-inner derivations
well defined on a dense set of local elements in $\mathbb{L}_{2,\omega_0}$
\[
\delta_{I,J}(f)\equiv \lim_{\Lambda\to\mathbb{Z}^d}\sum_{k\in\Lambda} \delta_{A(I+k,J+k)}(f) .
\]
An interesting open problem is to provide a constructive examples 
of such algebra for Gibbs states obtained with a nontrivial interaction.\\

The Dirichlet forms defined with derivations which are non-inner provide a new challenging area of noncommutative analysis.

\section{Polynomial Decay to Equilibrium}
In this section we consider a model with derivations in direction of $Z_{j,k}=A_j-A_k$ , for $j\sim k,\; j,k\in \mathbb{Z}^d$  in the space associated to the product state $\omega_0$ describing a system of infinite number of quantum harmonic oscillators.
 We show the following result generalising the commutative one of \cite{INZ}.
\begin{theorem}
    A quantum system described by the family of generators of the form in the above  decays to equilibrium  algebraically in time in the sense that
    \[\sum_j|\delta_{A_j^\sharp} (e^{-t {L}}f)|^2\to_{t\to\infty} 0\]
    with algebraic rate.
\end{theorem}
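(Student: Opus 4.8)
The plan is to reduce the full interacting dynamics to a single-particle lattice heat flow and then read off the decay rate from the classical on-diagonal heat-kernel asymptotics on $\mathbb{Z}^d$.

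First I would compute the action of the generator on the linear sector. For $X_j = Z_{j,k} = A_j - A_k$ one has $\alpha_t(Z_{j,k}) = e^{i\beta t}Z_{j,k}$, so $\delta_{\alpha_t(Z_{j,k}^\ast)}(A_l^\ast) = i[e^{-i\beta t}(A_j^\ast-A_k^\ast), A_l^\ast] = 0$, while $\delta_{\alpha_t(Z_{j,k})}(A_l^\ast) = e^{i\beta t}\,i(\delta_{jl}-\delta_{kl})$ is a scalar multiple of the identity. Applying the adjoint formula of Proposition \ref{AdjointPro} in the form of the subsequent Remark (here $\alpha_{\pm i/2}(\alpha_t(Z_{j,k})) = e^{\mp\beta/2}\alpha_t(Z_{j,k})$, i.e.\ $\xi = -\beta/2$) gives $\delta^\star_{\alpha_t(Z_{j,k})}(\mathrm{id}) = 2i\sinh(-\beta/2)\,e^{-i\beta t}(A_j^\ast - A_k^\ast)$. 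The oscillating factors $e^{\pm i\beta t}$ cancel, so integration against $\eta$ contributes only $\hat\eta(0) = \int\eta(t)\,dt$, and summing over nearest-neighbour pairs yields the exact identity $-\mathfrak{L}(A_l^\ast) = c\sum_{m\sim l}(A_l^\ast - A_m^\ast) = c\,(\Delta A^\ast)_l$, where $\Delta$ is the graph Laplacian on $\mathbb{Z}^d$ and $c = 2\hat\eta(0)\sinh(\beta/2) > 0$ (the analogous statement holds for $A_l$). Consequently $P_t$ acts on the single-particle space exactly as the continuous-time simple-random-walk semigroup, $P_t(A_l^\sharp) = \sum_m p_t(l,m)A_m^\sharp$ with $p_t = e^{-ct\Delta}$; the finite-speed-of-propagation estimate of Section 4 and the $\mathbb{L}_p$ bounds of Section 5 are what justify taking this infinite-volume limit on a dense domain.

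Next I would evaluate the gradient energy on a mean-zero linear element, say $f = A_0^\ast$ (replacing a general $f$ by $f - \omega_0(f)$). Since $\delta_{A_j^\ast}(P_t f) = 0$ and $\delta_{A_j}(P_t f) = \sum_m p_t(0,m)\,i[A_j, A_m^\ast] = i\,p_t(0,j)\,\mathrm{id}$, the symmetry $p_t(0,j)=p_t(j,0)$ and the Chapman--Kolmogorov identity give $\sum_j |\delta_{A_j}(P_t f)|^2 = \big(\sum_j p_t(0,j)^2\big)\mathrm{id} = p_{2t}(0,0)\,\mathrm{id}$. I would then invoke the standard Fourier representation $p_{2t}(0,0) = (2\pi)^{-d}\int_{[-\pi,\pi]^d} e^{-2ct\,\lambda(\theta)}\,d\theta$, with symbol $\lambda(\theta) = 2\sum_{i=1}^d(1-\cos\theta_i) \sim |\theta|^2$ near the origin; Laplace's method then yields $p_{2t}(0,0) \sim C_d\,(ct)^{-d/2}$, which is precisely the claimed algebraic rate $t^{-d/2}$. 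This is also consistent with the absence of a spectral gap: the conserved mode $\sum_l A_l^\ast$, annihilated by every $\delta_{Z_{j,k}}$, corresponds to the zero Fourier mode $\theta = 0$, and it is exactly the softness of the symbol there that replaces exponential decay by the power law.

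Finally I would extend to arbitrary local polynomials $f$ by exploiting the degree filtration. The jump operators $Z_{j,k}$ are linear, so $\mathfrak{L}$ preserves the filtration by polynomial degree and its highest-degree part acts as a tensor power of the single-particle heat flow; using the orthogonality of distinct monomials in $\mathbb{L}_2(\omega_0)$ (a product state) one checks that the contribution of the degree-$n$ component of $f$ to $\sum_j|\delta_{A_j^\sharp}(P_t f)|^2$ decays like $p_{2t}(0,0)^{\,n} \sim t^{-nd/2}$, so the linear component dominates and the overall rate is $t^{-d/2}$. \textbf{The main obstacle} is precisely this last step: one must control the lower-degree terms that $\mathfrak{L}$ generates out of a homogeneous monomial and show, via Cauchy--Schwarz together with the contractivity of $P_t$ on $\mathbb{L}_2(\omega_0)$, that none of them decays more slowly than $t^{-d/2}$; securing the requisite dense domains for these manipulations in infinite volume again rests on the finite-speed-of-propagation estimate.
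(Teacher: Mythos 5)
Your linear-sector analysis is sound and in fact reproduces the paper's own complementary computation: the generator does act on the span of $\{A_j,A_j^\ast\}$ as $C\bigtriangleup$ with $\bigtriangleup$ the lattice Laplacian, and for $f=A_0^\ast$ the identity $\sum_j|\delta_{A_j}(P_tf)|^2=p_{2t}(0,0)\sim t^{-d/2}$ is exactly right (the paper likewise solves $\dot{\kappa}=C\bigtriangleup\kappa$, i.e.\ $\kappa(t)=e^{tC\bigtriangleup}\kappa_0$, for operators linear in creators and annihilators). The genuine gap is the passage to general $f$, which is the actual content of the theorem. Your degree-filtration argument is only sketched --- you yourself flag it as ``the main obstacle'' --- and its supporting claims do not hold as stated: $\mathfrak{L}$ preserves only the filtration by degree, not the grading (each $\delta^\star_{Z}\delta_{Z}$ has a degree-preserving part and a part lowering degree by two, coming from the derivation term in $\delta^\star_Z$), distinct monomials of equal charge need not be orthogonal in $\mathbb{L}_2(\omega_0)$ (e.g.\ $A_j^\ast A_j$ and $A_j^{\ast 2}A_j^2$), the asserted decay $p_{2t}(0,0)^{\,n}$ of the degree-$n$ component is unproven, and a general $f$ (say $f=N_0$) has no linear component to ``dominate.''

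The paper closes this gap by a different, more robust device which is what your proposal is missing: a commutator identity at the level of derivations. One proves (Lemma \ref{Lm8}) that $[\delta_{A_j},L]=C\sum_{k\sim j}\left(\delta_{A_k}-\delta_{A_j}\right)$ with $C=4\hat{\eta}(0)\sinh(\beta/2)$, valid when applied to \emph{any} $f$ in the domain; the key point is that $A_j$ commutes with every $Z_{k,l}$ and hence with $\delta_{Z_{k,l}^\ast}$, so the commutator comes only from the multiplication part of $\delta^\star_{Z_{k,l}}$ and is scalar. Differentiating $\|\delta_{A_j}(P_tf)\|_\omega^2$, discarding the dissipative term $2\,\mathfrak{Re}\langle \delta_{A_j}(P_tf), L\,\delta_{A_j}(P_tf)\rangle_\omega\leq 0$, and applying Cauchy--Schwarz then gives the scalar differential inequality $\partial_t F(t,j)\leq C(\bigtriangleup F)(t,j)$ for $F(t,j)=\|\delta_{A_j}(P_tf)\|_\omega$, whence $F(t)\leq e^{Ct\bigtriangleup}F(0)$ by comparison, and the on-diagonal heat-kernel decay (your Fourier computation, or \cite{INZ}) finishes the proof for every local $f$ at once, with no decomposition into degrees. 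To salvage your route you would need to prove precisely such a commutation/comparison statement; as written, your argument establishes the theorem only for $f$ linear in creators and annihilators.
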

Proof:\\
\textbf{Z-Case}\\

For the model with derivations in direction of $Z_{j,k}=A_j-A_k$ , for $j\sim k,\; j,k\in \mathbb{Z}^d$ , 
set $f_t=P_tf$. 

Same as writing the derivative of the scalar product
\begin{equation}\label{Z1D}
  \frac{d}{dt}\langle\delta_{A_{j}}(P_tf),\delta_{A_{j}}(P_tf)\rangle_\omega=\langle\frac{d}{dt}\delta_{A_{j}}(P_tf),\delta_{A_{j}}(P_tf)\rangle_\omega+\langle\delta_{A_{j}}(P_tf),\frac{d}{dt}\delta_{A_{j}}(P_tf)\rangle_\omega 
\end{equation} 
\[=\langle\delta_{A_{j}} {L}(P_tf),\delta_{A_{j}}(P_tf)\rangle_\omega+\langle\delta_{A_{j}}(P_tf),\delta_{A_{j}}\mathfrak{L}(P_tf)\rangle_\omega\]
\[=\langle {L} \delta_{A_j}(P_{t} f)+[ \delta_{A_j}, {L}](P_{t} f),\delta_{A_{j}}(P_tf)\rangle_\omega+\langle\delta_{A_{j}}(P_tf), {L} \delta_{A_j}(P_{t} f)+[ \delta_{A_j}, {L}](P_{t} f)\rangle_\omega\]
\[\leq \langle[ \delta_{A_j}, {L}](P_{t} f),\delta_{A_{j}}(P_tf)\rangle_\omega+\langle\delta_{A_{j}}(P_tf),[ \delta_{A_j}, {L}](P_{t} f)\rangle_\omega\]

Next we use the following fact.
\begin{lemma}\label{Lm8}
\[[\delta_{A_j},L]=
   4\hat{\eta}(0)\sinh(\beta/2)\sum_{k\sim j}\left( \delta_{A_k}f - \delta_{A_j}f \right).\]  
\end{lemma}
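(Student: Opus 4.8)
The plan is to write the generator explicitly from its Dirichlet form and then isolate the only part of it that fails to commute with $\delta_{A_j}$. Since $\alpha_t(Z_{l,m})=e^{i\beta t}Z_{l,m}$ and $\alpha_t(Z_{l,m}^\ast)=e^{-i\beta t}Z_{l,m}^\ast$ have modulus-one modular scaling, the time integral collapses to $\hat\eta(0)$ exactly as in the earlier examples, so
\[
-L f = \hat\eta(0)\sum_{l\sim m}\left(\delta_{Z_{l,m}}^\star\delta_{Z_{l,m}}(f)+\delta_{Z_{l,m}^\ast}^\star\delta_{Z_{l,m}^\ast}(f)\right).
\]
Because $\alpha_{\pm i/2}(Z_{l,m})=e^{\mp\beta/2}Z_{l,m}$ and $\alpha_{\pm i/2}(Z_{l,m}^\ast)=e^{\pm\beta/2}Z_{l,m}^\ast$, I would invoke the simplified adjoint formula from the Remark following Proposition \ref{AdjointPro} with $\xi=-\beta/2$ in the $Z_{l,m}$ direction and $\xi=+\beta/2$ in the $Z_{l,m}^\ast$ direction, and then split $L=L_1+L_2$, where $L_1$ collects the second-order pieces $-\cosh(\beta/2)\,\delta_{Z^\ast}\delta_{Z}$ (and its conjugate) and $L_2$ collects the zeroth-order anticommutator pieces carrying the factor $\sinh(\beta/2)$.

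The first key step is to show $[\delta_{A_j},L_1]=0$. Using the CCR \eqref{CCR}, every commutator $[A_j,A_l^\sharp]$ is central, so $[\delta_{A_j},\delta_{A_l^\sharp}]=-[[A_j,A_l^\sharp],\,\cdot\,]=0$ for all $l$ and both choices of $\sharp$. Hence $\delta_{A_j}$ commutes with every $\delta_{Z_{l,m}}$ and $\delta_{Z_{l,m}^\ast}$, and therefore with their compositions, so $L_1$ drops out of the commutator and $[\delta_{A_j},L]=[\delta_{A_j},L_2]$.

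The second step is a Leibniz computation on the anticommutator term. Since $\delta_{A_j}$ is a derivation, $[\delta_{A_j},\{Y,\,\cdot\,\}](g)=\{\delta_{A_j}(Y),g\}$ for fixed $Y$. Applying this with $Y=Z_{l,m}^\ast$ and $Y=Z_{l,m}$, and using $\delta_{A_j}(Z_{l,m})=i[A_j,A_l-A_m]=0$ together with $\delta_{A_j}(Z_{l,m}^\ast)=i(\delta_{jl}-\delta_{jm})\,\mathrm{id}$, only the anticommutator containing $Z_{l,m}^\ast$ contributes, and it produces a central scalar; since an anticommutator with a scalar merely doubles it, $\{i(\delta_{jl}-\delta_{jm}),g\}=2i(\delta_{jl}-\delta_{jm})g$, so
\[
[\delta_{A_j},L_2](f)=-2\hat\eta(0)\sinh(\beta/2)\sum_{l\sim m}(\delta_{jl}-\delta_{jm})\,\delta_{Z_{l,m}}(f).
\]
Finally I would collapse the double sum through the Kronecker deltas: the two ways the site $j$ can be matched ($l=j$ and $m=j$) contribute identically after relabelling, turning the prefactor $2$ into $4$ and leaving $\sum_{k\sim j}(\delta_{A_j}f-\delta_{A_k}f)$, which rearranges to the claimed $4\hat\eta(0)\sinh(\beta/2)\sum_{k\sim j}(\delta_{A_k}f-\delta_{A_j}f)$.

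The manipulations themselves are routine; what needs the most care is the assignment of the modular exponent $\xi=\mp\beta/2$ to the two directions, since the whole content of the lemma is the coefficient $\sinh(\beta/2)$ and its sign, and a slip there flips the result. In infinite volume one must also check that these operations are legitimate on the dense domain of local polynomials: although $L$ is a priori nonlocal, the Kronecker deltas localise the final expression to a neighbourhood of $j$, and the well-definedness of the action of $L$ on local $f$ is underwritten by the finite speed of propagation of information established in Section 4.
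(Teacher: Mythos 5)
Your proof is correct and follows essentially the same route as the paper: both write the generator with the factor $\hat\eta(0)$, reduce $[\delta_{A_j},L]$ to the commutator of $\delta_{A_j}$ with the multiplication-operator part of the adjoint $\delta^\star_{Z_{l,m}}$ (the pure derivation parts commuting because every $[A_j,A_l^\sharp]$ is central), extract the scalar $\delta_{A_j}(Z_{l,m}^\ast)=i(\delta_{jl}-\delta_{jm})$, and collapse the double sum using the antisymmetry $Z_{m,j}=-Z_{j,m}$ to get the factor $4$. The only differences are cosmetic: you use the symmetric $\cosh/\sinh$-anticommutator form of the adjoint from the Remark after Proposition \ref{AdjointPro} (with $\xi=\mp\beta/2$) where the paper uses the equivalent splitting $\delta^\star_{Z_{j,k}}=-e^{-\beta/2}\delta_{Z_{j,k}^\ast}-2i\sinh(\beta/2)L_{Z_{j,k}^\ast}$, and you spell out two points the paper leaves implicit, namely that the $Z^\ast$-direction term vanishes because $\delta_{A_j}(Z_{l,m})=0$ and that the infinite-volume domain issues are controlled by finite speed of propagation.
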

\begin{proof} (\textit{of Lemma})
Since 
\[ L = -\hat{\eta} (0) \sum_{j\sim k} \left(\delta^{\star}_{Z_{j,k}} \delta_{ Z_{j,k}}+\delta^{\star}_{Z_{j,k}^\ast} \delta_{Z_{j,k}^\ast}\right) \]
and  $A_l$ commute with $Z_{jk}$,
for fixed $l$, we have
\[ 
  [\delta_{A_l},L] =-\hat{\eta} (0) \sum_{j\sim k}\left([\delta_{A_l}, \delta^{\star}_{Z_{j,k}} ]\delta_{ Z_{j,k}}  
  +[\delta_{A_l}, \delta^{\star}_{Z_{j,k}^\ast}] \delta_{ Z_{j,k}^\ast}  \right). \]
Since, by Proposition {\ref{AdjointPro}}, 
we have
\[\begin{split}
   \delta^{\star}_{Z_{j,k}} (g) 
   &= - \delta_{\alpha_{-i/2}(Z_{j,k}^{\ast})}(g) +i\left(\alpha_{-i/2}(Z_{j,k}^{\ast})- \alpha_{i/2}(Z_{j,k}^{\ast})  \right)g \\
   &=  - e^{-\frac\beta{2}} \delta_{Z_{j,k}^{\ast}}(g) -2i \sinh(\frac\beta{2}) Z_{j,k}^{\ast} g 
\end{split} \]
In our setup we have 
\[[\delta_{A_l},\delta_{Z_{j,k}^{\ast}}]=i \delta_{[A_l , Z_{j,k}^{\ast}]} =0.\]
On the other hand for the left multiplication operator $l_{Z_{j,k}^{\ast}}$ by 
$ Z_{j,k}^{\ast}$, thanks to Leibnitz rule for the derivation 
\[  [\delta_{A_l}, L_{Z_{j,k}^{\ast}}]
 = \delta_{A_l}(Z_{j,k}^{\ast}) =
 \delta_{l,j}-\delta_{l,k}
 \]
 Hence we obtain 
\[\begin{split}
    [\delta_{A_l}, \delta^{\star}_{Z_{j,k}} ](g) =
    [\delta_{A_l}, - e^{-\frac\beta{2}} \delta_{Z_{j,k}^{\ast}}  -2i \sinh(\frac\beta{2}) l_{Z_{j,k}^{\ast}}    ](g)
   &=  [\delta_{A_l},-2i \sinh(\frac\beta{2}) l_{Z_{j,k}^{\ast}}  ](g)\\
   &=  2  \sinh(\frac\beta{2})(\delta_{l,j}-\delta_{l,k}) g
\end{split}\]
 as
$ [A_l,Z_{j,k}^\ast] 
 =\delta_{l,j}-\delta_{l,k}$.
%
Thus we have
\[[\delta_{A_l}, \delta^{\star}_{Z_{j,k}} ]\delta_{ Z_{j,k}}(f)=2sin h(\beta/2)(\delta_{l,j}-\delta_{l,k} )\delta_{ Z_{j,k}}(f) \]
and hence for a fixed $l$,
\[\begin{split}
[\delta_{A_l},L]&=-2sin h(\beta/2)\hat{\eta}(0)\sum_{j\sim k}(\delta_{l,j}-\delta_{l,k} )\delta_{ Z_{j,k}}(f)\\
&= - 4sin h(\beta/2)\hat{\eta}(0)\sum_{j\sim l} \delta_{ Z_{l,j}}(f)\\
& = - 4sin h(\beta/2)\hat{\eta}(0)\sum_{j\sim l}\left( \delta_{A_l}f - \delta_{A_j}f \right)
\end{split}\]  

\end{proof} 

Using \eqref{Z1D} and lemma \ref{Lm8}, for any $j\in\mathbb{Z}^d$ we have
\[\frac{d}{dt}\langle\delta_{A_{j}}(P_tf),\delta_{A_{j}}(P_tf)\rangle_\omega
\leq
8\sinh(\beta/2)\hat{\eta}(0)\sum_{k\sim j}\mathfrak{Re} \langle \delta_{A_{j}}(P_tf),(\delta_{A_{k}}-\delta_{A_{j}})(P_{t}f)\rangle .\]
Using Cauchy-Schwartz inequality on the right hand side, this implies the following relation
\[\frac{d}{dt}\|\delta_{A_{j}}(P_tf)\| 
\leq
4\hat{\eta}(0)\sinh(\beta/2)\sum_{k\sim j}  \left(\| \delta_{A_{k}}(P_{t}f)\|  
-\|\delta_{A_{j}} (P_{t}f)\|  \right).
\]
with the norm given by the scalar product. \\
Denoting by $\bigtriangleup$  the lattice Laplacian on $\mathbb{Z}^d$ and setting $F(t,j)=\|\delta_{A_{j}}(P_tf)\|$,  with $C\equiv 4\hat{\eta}(0)\sinh(\beta/2)$, we get the following differential inequality
\[ \frac{d}{dt}F(t,j) \leq C \; (\bigtriangleup F)(t,j)\]
Integrating both sides
\[\int_0^t {d}F(s)\leq \int_0^t C\Delta F(s)ds\]
\[F(t)\leq F(0)+\int_0^t C \Delta F(s)ds,\]
by iteration we arrive at the following bound 
\[F(t)\leq e^{Ct\bigtriangleup}F(0)\]
 from which the algebraic decay follows, see e.g. \cite{INZ}. Similar arguments work and so we get the same algebraic bound for derivation with respect to $A_j^\ast$.\\

 Let us note that similarly as in the case of functions, \cite{INZ},
 the space of  linear combinations of $A_j, A_j^\ast$, $j\in\mathbb{Z}^d$,
 is  mapped into itself by the generator $L$ .
 Explicitly we have
\[
\begin{split}
  LA_l &= -\hat{\eta} (0) \sum_{j\sim k} \left( \delta^{\star}_{Z_{j,k}^\ast} \delta_{Z_{j,k}^\ast}\right)A_l \\
  &=  i\hat{\eta} (0) \sum_{j\sim k} \left( \delta^{\star}_{Z_{j,k}^\ast} (\delta_{jl}- \delta_{kl}) \right)
  =   4  \hat{\eta} (0) \sinh(\frac\beta{2}) \sum_{j\sim l} (A_j^\ast - A_l^\ast )\\
  LA_l^\ast &= -i\hat{\eta} (0) \sum_{j\sim k} \left(\delta^{\star}_{Z_{j,k}} (\delta_{jl}- \delta_{kl})  \right) = 4 \hat{\eta} (0)\sinh(\frac\beta{2}) \sum_{j\sim l}    (A_j  - A_l  )
\end{split}
\]
where we used 
\[ 
\begin{split}\delta^{\star}_{Z_{j,k}} (g) &= - e^{-\frac\beta{2}} \delta_{Z_{j,k}^{\ast}}(g) -2i \sinh(\frac\beta{2}) Z_{j,k}^{\ast} g \\
\delta^{\star}_{Z_{j,k}^\ast} (g) &= - e^{\frac\beta{2}}\delta_{Z_{j,k} }(g) + 2i \sinh(\frac\beta{2}) Z_{j,k}  g 
\end{split}
\]
Hence, with $C\equiv 4  \hat{\eta} (0) \sinh(\frac\beta{2})$, we get
\[\begin{split}
L(A_j\pm A_j^\ast) &= C\sum_{j\sim l} ((A_j\pm A_j^\ast) - (A_l\pm A_l^\ast))\\
&\equiv C(\bigtriangleup (A_\cdot\pm A_\cdot^\ast))_j 
\end{split}
\]
Thus for an operator linear in creators and annihilators 
\[f\equiv \sum_j  \kappa_j(t) (A_j\pm A_j^\ast) \]
we have 
\[ \begin{split}
\partial_t f &= C\sum_j  \dot\kappa_j(t) (A_j\pm A_j^\ast)\\
  &=  C\sum_j   \kappa_j(t) (\bigtriangleup (A_\cdot\pm A_\cdot^\ast))_j =  C\sum_j   (\bigtriangleup\kappa)_j(t)  (A_j \pm A_j\cdot^\ast) 
\end{split}
\]
which holds iff
\[ \begin{split}
  \dot\kappa_j(t)  = C\sum_j   (\bigtriangleup\kappa)_j(t)   
\end{split}
\]
i.e.
\[\mathbf{\kappa}(t)=e^{tC\bigtriangleup}\mathbf{\kappa}_0.\]
This provides explicit algebraic decay to equilibrium for operators linear in creators and anihillators.\\


$Y$-case is similar. \qed

\section{Appendix: $\Gamma_1$}
 
For the generator 
\[\begin{split}
-\mathfrak{L} (f)&=\int   \left(\delta^\star_{\alpha_t(X)}\delta_{\alpha_t(X)}(  f) +   \delta^\star_{\alpha_t(X^\ast)}\delta_{\alpha_t(X^\ast)}(f) \right)\eta(t) dt \\
\end{split}\]
associated to a state $\omega$ and the corresponding  Dirichlet form in this appendix we demonstrate that 
\[ \begin{split}
2\Gamma_{1}(f) &=  \mathfrak{L} (f^\ast f)-f^\ast\mathfrak{L} ( f)-\mathfrak{L} (f^\ast )f \\
&= \int\left(
  \left|\delta_{\alpha_{t-i/4}(X^{\ast})} (f)\right|^2 +\left| \delta_{\alpha_{t-i/4}(X)} (f)\right|^2\right)\left( \eta(t+i/4) +\eta(t-i/4)\right) dt \\
\end{split}
\]
for all operators $f\in\mathcal{D}(\mathfrak{L})$ such that $f^\ast f\in\mathcal{D}(\mathfrak{L})$
and $f^\ast\mathfrak{L}(f)\in\mathbb{L}_2$.
\begin{proof} 
Using the Leibnitz rule for the derivation for the first part of the integrant,
(the second will be analogous), we have
\begin{equation*}\begin{split} \delta^\star_{\alpha_t(X)}\delta_{\alpha_t(X)}(f^\ast f)&=     \delta^\star_{\alpha_t(X)} \left(\delta_{\alpha_t(X)}(f^\ast )f 
+ 
f^\ast \delta_{\alpha_t(X)}(f)\right)  
\end{split} 
\end{equation*}
Since  the 1st, respectively 2nd, formula of Proposition 2 for  modified Leibnitz property  of the adjoint, for $f,g\in\mathcal{D}_\varepsilon$ , we have  
\[ \begin{split}
  \delta^{\star}_{Y} (fg) 
&= \delta^{\star}_{Y} (f) g  - f  \delta_{\alpha_{-i/2}(Y^{\ast})}(g) \\
&= f \delta^{\star}_{Y} (g) - \delta_{\alpha_{i/2}(Y^{\ast})}(f)  g ,
\end{split}
\] 
hence
\[\begin{split} \delta^\star_{\alpha_t(X)}\delta_{\alpha_t(X)}(f^\ast f)&=    \left( \delta^\star_{\alpha_t(X)}  \delta_{\alpha_t(X)}(f^\ast )\right)f 
- 
\delta_{\alpha_t(X)}(f^\ast)\cdot  \delta_{\alpha_{t-i/2}(X^{\ast})} (f)  \\ 
&+  f^\ast \left(\delta^\star_{\alpha_t(X)} \delta_{\alpha_t(X)}(f)\right) 
 - \delta_{\alpha_{t+i/2}(X^{\ast})}(f^\ast)\cdot  \delta_{\alpha_t(X)}(f)   \\
\end{split} \]
Similarly for the term with $X^\ast$ replacing $X$, we get
\[\begin{split} \delta^\star_{\alpha_t(X^\ast)}\delta_{\alpha_t(X^\ast)}(f^\ast f)&=    \left( \delta^\star_{\alpha_t(X^\ast)}  \delta_{\alpha_t(X^\ast)}(f^\ast )\right)f 
- 
\delta_{\alpha_t(X^\ast)}(f^\ast)\cdot  \delta_{\alpha_{t-i/2}(X)} (f)  \\   
&+  f^\ast \left(\delta^\star_{\alpha_t(X^\ast)} \delta_{\alpha_t(X^\ast)}(f)\right) 
 - \delta_{\alpha_{t+i/2}(X)}(f^\ast)\cdot  \delta_{\alpha_t(X^\ast)}(f)   \\
\end{split} \]
Adding together both formulas we have

\begin{equation}\leqnomode
\tag{A.1}
\begin{split}
\mathfrak{L} (f^\ast f)&=\int   \left(\delta^\star_{\alpha_t(X)}\delta_{\alpha_t(X)}(f^\ast f) +   \delta^\star_{\alpha_t(X^\ast)}\delta_{\alpha_t(X^\ast)}(f^\ast f) \right)\eta(t) dt \\
&= f^\ast \mathfrak{L} (f) + \mathfrak{L} (f^\ast) f \\
& + \int\left(
\delta_{\alpha_t(X)}(f^\ast)\cdot  \delta_{\alpha_{t-i/2}(X^{\ast})} (f) \right)\eta(t)dt\\
&+ \int\left( \delta_{\alpha_{t+i/2}(X^{\ast})}(f^\ast)\cdot  \delta_{\alpha_t(X)}(f)  \right)\eta(t)dt \\
&+ \int\left(
\delta_{\alpha_t(X^\ast)}(f^\ast)\cdot  \delta_{\alpha_{t-i/2}(X)} (f) \right)\eta(t)dt\\
&+ \int\left(\delta_{\alpha_{t+i/2}(X)}(f^\ast)\cdot  \delta_{\alpha_t(X^\ast)}(f)\right)\eta(t)dt
\end{split}    
\end{equation}
Shifting the integration variable in first and third term by $-i/4$ and in the second and fourth by $-i/4$, we get
\[\begin{split}
\mathfrak{L} (f^\ast f)&=\int   \left(\delta^\star_{\alpha_t(X)}\delta_{\alpha_t(X)}(f^\ast f) +   \delta^\star_{\alpha_t(X^\ast)}\delta_{\alpha_t(X^\ast)}(f^\ast f) \right)\eta(t) dt \\
&= f^\ast \mathfrak{L} (f) + \mathfrak{L} (f^\ast) f \\
& + \int\left(
\delta_{\alpha_{t+i/4}(X)}(f^\ast)\cdot  \delta_{\alpha_{t-i/4}(X^{\ast})} (f) \right)\eta(t+i/4)dt\\
&+ \int\left( \delta_{\alpha_{t+i/4}(X^{\ast})}(f^\ast)\cdot  \delta_{\alpha_{t-i/4}(X)}(f)  \right)\eta(t-i/4)dt \\
&+ \int\left(
\delta_{\alpha_{t+i/4}(X^\ast)}(f^\ast)\cdot  \delta_{\alpha_{t-i/4}(X)} (f) \right)\eta(t+i/4)dt\\
&+ \int\left(\delta_{\alpha_{t+i/4}(X)}(f^\ast)\cdot  \delta_{\alpha_{t-i/4}(X^\ast)}(f)\right)\eta(t-i/4)dt
\end{split}
\]
Now the terms in the brackets can be written as squares of operators. After adding 
corresponding terms with the same weight, we obtain
\[\begin{split}
\mathfrak{L} (f^\ast f)&=\int   \left(\delta^\star_{\alpha_t(X)}\delta_{\alpha_t(X)}(f^\ast f) +   \delta^\star_{\alpha_t(X^\ast)}\delta_{\alpha_t(X^\ast)}(f^\ast f) \right)\eta(t) dt \\
&= f^\ast \mathfrak{L} (f) + \mathfrak{L} (f^\ast) f \\
& + \int\left(
  \left|\delta_{\alpha_{t-i/4}(X^{\ast})} (f)\right|^2 +\left| \delta_{\alpha_{t-i/4}(X)} (f)\right|^2\right)\eta(t+i/4)dt\\
&+ \int\left( \left|  \delta_{\alpha_{t-i/4}(X)}(f)\right|^2  +\left|  \delta_{\alpha_{t-i/4}(X^\ast)}(f)\right|^2\right)\eta(t-i/4)dt \\
\end{split}\]
Since the brackets are the same in both integral expressions, finally this can be rearranged as follows
\[\begin{split}
\mathfrak{L} (f^\ast f)&=\int   \left(\delta^\star_{\alpha_t(X)}\delta_{\alpha_t(X)}(f^\ast f) +   \delta^\star_{\alpha_t(X^\ast)}\delta_{\alpha_t(X^\ast)}(f^\ast f) \right)\eta(t) dt \\
&= f^\ast \mathfrak{L} (f) + \mathfrak{L} (f^\ast) f \\
& + \int\left(
  \left|\delta_{\alpha_{t-i/4}(X^{\ast})} (f)\right|^2 +\left| \delta_{\alpha_{t-i/4}(X)} (f)\right|^2\right)\left( \eta(t+i/4) +\eta(t-i/4)\right) dt \\
\end{split}\]
Hence we conclude
\[ \leqnomode
\tag{A.2}
\begin{split}
2\Gamma_{1}(f) &=  \mathfrak{L} (f^\ast f)-f^\ast\mathfrak{L} ( f)-\mathfrak{L} (f^\ast )f \\
&= \int\left(
  \left|\delta_{\alpha_{t-i/4}(X^{\ast})} (f)\right|^2 +\left| \delta_{\alpha_{t-i/4}(X)} (f)\right|^2\right)\left( \eta(t+i/4) +\eta(t-i/4)\right) dt \\
\end{split}
\]
\end{proof}

If $ \alpha_{\pm i/4}(X)=e^{\pm\frac12\xi}$X  
then the formula (A.2) yields

\[ \leqnomode
\tag{A.3} 
 \frac{1}{C}\Gamma_{1}(f) 
=   
  e^{\xi}\left|\delta_{X^{\ast}}(f)\right|^2 + e^{-\xi}\left| \delta_{X}(f)\right|^2  
\]
with a constant $C\equiv \int \left( \eta(t+i/4) +\eta(t-i/4)\right) dt$.
\\
For example this will be the case of modular dynamics associated to $\omega_0$ and 
derivations associated with $A_{I,J}$.\\
Generalisation to the case of infinite number of eigenvectors is as follows
$X_j$,$j\in\mathbb{Z}^d$, is as follows 
\[ 
-\tilde{\mathfrak{L}} (f^\ast f) =    \sum_{j\in\mathbb{Z}^d} \left(\kappa_j\delta^\star_{X_j}\delta_{X_j}(f^\ast f) +   \mu_j \delta^\star_{X_j^\ast}\delta_{ X_j^\ast}(f^\ast f)\right)  
 \]
 with some constants $\kappa_j, \mu_j\in\mathbb{R}^+$.
Then similar computations as above yield
\[
\tilde{\Gamma}_1(f) =  \sum_{j\in\mathbb{Z}^d} \left(\kappa_j e^{-\xi_j}   |\delta_{X_j}(f) |^2  
+   \mu_j  e^{\xi_j}| \delta_{X_j^{\ast}} (f) |^2 \right)
\]
In the infinite dimensional case on a lattice to secure the dense domain it is necessary to have the finite speed of propagation of information.
\label{EndGamma1}
\\

\textbf{Remark A.1}:\\
Recall that using Leibnitz rule for the derivationas and modified Leibnitz rule for its adjoint in $\mathbb{L}_2$
\[ \begin{split}
  \delta^{\star}_{Y} (fg) 
&= \delta^{\star}_{Y} (f) g  - f  \delta_{\alpha_{-i/2}(Y^{\ast})}(g) ,
\end{split}
\] 
 we have the following calculations
\[\begin{split}
\delta_{X_t}^\star\delta_{X_t}(f^\ast f) &= \delta_{X_t}^\star\left(\delta_{X_t}(f^\ast) f+f^\ast \delta_{X_t}(f)\right)  \\
& =  \left(\delta_{X_t}^\star\delta_{X_t}(f^\ast)\right) f  - f^\ast  \delta_{\alpha_{-i/2}(X_t^{\ast})}(f) 
  +f^\ast \left(\delta_{X_t}^\star\delta_{X_t}(f)\right) 
- \delta_{\alpha_{i/2}(X_t^{\ast})}(f^{\ast})  f  
\delta_{X_t}^\star(f^\ast) \delta_{X_t}(f) 
\end{split}
\] 
which implies
\[\begin{split}
\mathfrak{L}(f^\ast f)  =  \mathfrak{L}(f^\ast)  f  - f^\ast \int\left( \delta_{\alpha_{-i/2}(X_t^{\ast})}(f)  + \delta_{\alpha_{-i/2}(X_t)}(f) \right)\eta_t dt
  +f^\ast  \mathfrak{L}(f) 
-\int \left(\delta_{\alpha_{i/2}(X_t^{\ast})}(f^{\ast}) +\delta_{\alpha_{i/2}(X_t)}(f^{\ast})\right)\eta_t dt  f .
\end{split}
\] 
provided operator $f$ satisfies suitable conditions so that the operations on the right hand side make sense. In particular $f^\ast f\in \mathcal{D}(\mathfrak{L})$, for bounded operators $f\in\mathcal{D}(\mathfrak{L})$,  if  $\exists C\in(0,\infty)$ such that we have
\[\|\int \left(\delta_{\alpha_{i/2}(X_t^{\ast})}(f^{\ast}) +\delta_{\alpha_{i/2}(X_t)}(f^{\ast})\right)\eta_t dt\|^2\leq 
2\int \left(\|\delta_{\alpha_{\pm i/2}(X_t^{\ast})}(f^{\ast})\|_2^2 +\|\delta_{\pm \alpha_{i/2}(X_t)}(f^{\ast})\|_2^2 \right)\eta_t dt\| \leq C \, \mathcal{E}(f)
\]
The last condition can be satisfied if $\eta$ is analytic in a strip $[-i/2,+i/2]$ and $|\eta (t\pm i/2)|\leq C\, \eta(t)$.
In the models considered before a domain of validity of the above relation could contain local polynomials of elements in $\mathcal{D}_\varepsilon$.
For a positivity preserving contraction semigroup in $\mathbb{L}_2$, for any $t>0$ and $f\in  \mathbb{L}_\infty$, we have
$P_t f\in\mathcal{D}(\mathfrak{L})\cap \mathbb{L}_\infty$. Thus for bounded operators, we can
identify $\Gamma_1$ as follows: For $0<s<t$
\[\frac{d}{ds}\left(P_{t-s} |P_s f|^2-|P_s f|^2\right) =
2 P_{t-s} \Gamma_1(P_s f ) \geq 0 .
\]
Using this, we have
\[
P_t f^2-|P_t f|^2 = \int_0^t  \frac{d}{ds}P_{t-s} (P_s f)^2 ds=
\int_0^t ds \frac{d}{ds}P_{t-s} (P_s f)^2 = \int_0^t ds\,  2 P_{t-s} \Gamma_1(P_s f) \geq 0.
\]
Hence, under the above conditions, for Markov semigroup in $\mathbb{L}_2$ the following 
strong positivity conditions holds.
\begin{proposition} \label{Pro 11.1}
For the diffusion semigroup $P_t\equiv e^{t\mathfrak{L}}$ in $\mathbb{L}_2(\omega)$, for any $t>0$ and $f\in \mathbb{L}_2\cap \mathbb{L}_\infty$, the following Schwartz inequality is true.
\[  |P_t f|^2\leq P_t |f|^2.\] 

\end{proposition}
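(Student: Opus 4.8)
The plan is to establish the Kadison--Schwarz inequality by the standard interpolation argument between the two orderings of the semigroup, using the positivity of the $\Gamma_1$ form established in (A.2) together with the positivity preservation of $P_t$. The computation leading to (A.2) already expresses $2\Gamma_1(g)=\mathfrak{L}(g^\ast g)-g^\ast\mathfrak{L}(g)-\mathfrak{L}(g^\ast)g$ as a sum of squares of the form $|\delta_{\alpha_{t-i/4}(X^\ast)}(g)|^2+|\delta_{\alpha_{t-i/4}(X)}(g)|^2$, weighted by $\eta(t+i/4)+\eta(t-i/4)$, which is nonnegative by the second admissibility condition of Definition \ref{admissible}. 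Hence $\Gamma_1(g)\geq 0$ as a positive operator for every $g$ in its domain, and this positivity is the only structural input the argument needs.

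First I would fix $t>0$ and $f\in\mathbb{L}_2\cap\mathbb{L}_\infty$ and introduce the interpolating family
\[
\phi(s)\equiv P_{t-s}\left(|P_s f|^2\right),\qquad s\in[0,t],
\]
so that the two sides of the desired inequality are exactly the endpoint values $\phi(0)=P_t|f|^2$ and $\phi(t)=|P_tf|^2$. The heart of the proof is to show $\phi$ is nonincreasing. Differentiating, and using $\tfrac{d}{ds}P_{t-s}=-\mathfrak{L}P_{t-s}=-P_{t-s}\mathfrak{L}$ together with the identity $\tfrac{d}{ds}|P_sf|^2=\mathfrak{L}(|P_sf|^2)-2\Gamma_1(P_sf)$ (which is the definition of $\Gamma_1$ applied to $g=P_sf$, since $\tfrac{d}{ds}P_sf=\mathfrak{L}P_sf$ and $\mathfrak{L}$ commutes with the adjoint), the two $\mathfrak{L}$-terms cancel because $P_{t-s}$ commutes with $\mathfrak{L}$, leaving
\[
\phi'(s)=-2\,P_{t-s}\left(\Gamma_1(P_sf)\right).
\]
Since $\Gamma_1(P_sf)\geq 0$ and $P_{t-s}$ is positivity preserving, this gives $\phi'(s)\leq 0$, and integrating yields $|P_tf|^2-P_t|f|^2=\int_0^t\phi'(s)\,ds\leq 0$, which is precisely the claimed Schwarz inequality.

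The main obstacle is domain regularity rather than algebra. For the differentiation to be legitimate I need $P_sf\in\mathcal{D}(\mathfrak{L})$ and $|P_sf|^2\in\mathcal{D}(\mathfrak{L})$ so that $\Gamma_1(P_sf)$ is well defined, and I must control the behaviour at the endpoints $s=0$ and $s=t$. I would resolve this using the smoothing property of the positivity preserving contraction semigroup recorded just above the statement: for $f\in\mathbb{L}_\infty$ and any $s>0$ one has $P_sf\in\mathcal{D}(\mathfrak{L})\cap\mathbb{L}_\infty$, while the sufficient condition on $\eta$ from Remark A.1 (analyticity in the strip $\Im z\in[-\tfrac12,\tfrac12]$ with $|\eta(t\pm i/2)|\leq C\,\eta(t)$) guarantees $|P_sf|^2\in\mathcal{D}(\mathfrak{L})$ with an $\mathcal{E}$-controlled bound. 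Thus $\phi$ is continuous on $[0,t]$ and differentiable on $(0,t)$, and monotonicity on the open interval combined with continuity up to the endpoints gives $\phi(t)\leq\phi(0)$. The one genuinely delicate point is justifying the interchange of $\tfrac{d}{ds}$ with the generally unbounded operators, which I would make rigorous by first proving the identity for $f$ in the dense domain $\mathcal{D}_\varepsilon$ and then passing to the limit in $\mathbb{L}_2$.
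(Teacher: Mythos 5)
Your proposal is correct and takes essentially the same route as the paper: the paper's own argument (Remark A.1 together with the displayed computation just before the proposition) is exactly this interpolation $s\mapsto P_{t-s}\left(|P_s f|^2\right)$, whose $s$-derivative reduces to $\pm 2\,P_{t-s}\Gamma_1(P_s f)$ after the $\mathfrak{L}$-terms cancel, combined with $\Gamma_1\geq 0$ from (A.2), positivity preservation of the semigroup, and the same smoothing/domain considerations ($P_s f\in\mathcal{D}(\mathfrak{L})\cap\mathbb{L}_\infty$ and the strip condition on $\eta$). The only difference is cosmetic: you run the interpolation in the decreasing orientation while the paper writes the increasing one, which changes nothing.
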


\textbf{Remark A.2}:\\

Next we note that using the formula for  $\Gamma_1$, with suitable function $\gamma(t)\geq 0$, in general with some positive constants $\nu_\sharp$, we have
\[\begin{split}
    \Gamma_1(\alpha_{-i/4}(f))&= \int \sum_{\sharp}\nu_\sharp\left( |\alpha_{-i/4}(\delta_{\alpha_t(X^\sharp)}(f)|^2\right)\gamma(t)dt\\
   & = \int \sum_{\sharp}\nu_\sharp\left(  \rho^{-1/4}\left(\delta_{\alpha_t(X^\sharp)}(f)\right)^\ast \rho^{1/2}\left(\delta_{\alpha_t(X^\sharp)}(f)\right)\rho^{-1/4}\right)\gamma(t)dt 
\end{split}\]
 Hence 
 we have the following property.
 \begin{proposition} \label{Pro 11.12}
 \[
\omega \, \Gamma_1(\alpha_{-i/4}(f)) = 
  \int \sum_{\sharp} \nu_\sharp\langle \delta_{\alpha_t(X^\sharp)}(f),\delta_{\alpha_t(X^\sharp)}(f)\rangle \gamma(t) dt 
\simeq\mathcal{E}_\gamma (f).
 \]
     
 \end{proposition}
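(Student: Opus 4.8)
The plan is to insert the state $\omega(\cdot)=\mathrm{Tr}(\rho\,\cdot)$ into the expression for $\Gamma_1(\alpha_{-i/4}(f))$ recorded in Remark A.2, exchange it with the sum over $\sharp$ (and, in the lattice case, over $j\in\mathbb{Z}^d$) and with the $t$-integral, and then collapse each resulting summand to a scalar product by a single use of cyclicity of the trace.

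First I would record the elementary identity on which everything rests. Since $\alpha_{-i/4}(g)=\rho^{1/4}g\rho^{-1/4}$ and $\rho$ is self-adjoint and positive,
\[
\left|\alpha_{-i/4}(g)\right|^2
=\left(\rho^{1/4}g\rho^{-1/4}\right)^\ast\rho^{1/4}g\rho^{-1/4}
=\rho^{-1/4}g^\ast\rho^{1/2}g\rho^{-1/4},
\]
so that, applying $\omega$ and cycling the trailing $\rho^{-1/4}$ to the front to combine the three powers of $\rho$ into $\rho^{1/2}$,
\[
\omega\!\left(\left|\alpha_{-i/4}(g)\right|^2\right)
=\mathrm{Tr}\!\left(\rho^{1/2}g^\ast\rho^{1/2}g\right)
=\langle g,g\rangle_\omega .
\]
This is precisely the statement that $\omega\circ|\alpha_{-i/4}(\cdot)|^2$ reproduces the symmetric $\mathbb{L}_2(\omega)$ inner product, which is the whole purpose of the $-i/4$ shift.

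Next I would substitute $g=\delta_{\alpha_t(X^\sharp)}(f)$ into this identity for each $\sharp$ and each lattice direction. Summing against the weights $\nu_\sharp$ and integrating against $\gamma(t)\,dt$ then gives, term by term,
\[
\omega\,\Gamma_1(\alpha_{-i/4}(f))
=\int\sum_{\sharp}\nu_\sharp\,\omega\!\left(\left|\alpha_{-i/4}\!\left(\delta_{\alpha_t(X^\sharp)}(f)\right)\right|^2\right)\gamma(t)\,dt
=\int\sum_{\sharp}\nu_\sharp\,\langle\delta_{\alpha_t(X^\sharp)}(f),\delta_{\alpha_t(X^\sharp)}(f)\rangle_\omega\,\gamma(t)\,dt,
\]
which we identify with $\mathcal{E}_\gamma(f)$.

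The routine computations above conceal the one genuine obstacle: justifying the interchange of $\omega$ with the $t$-integral and the lattice sum, together with the cyclicity manipulation, requires that every operator in sight lie in the appropriate trace-class / $\mathbb{L}_p(\omega)$ domains. For this I would restrict $f$ to $\mathcal{D}_\varepsilon$, on which $\alpha_{-i/4}$ and the derivations are well defined and map into $\mathbb{L}_p(\omega)$ by Proposition \ref{AdjointPro} and the domain discussion of Section 2; the admissibility-type decay of $|\gamma(t+is)|$ in the strip controls the $t$-integral, while in the infinite-volume case the finite speed of propagation of information from Section 4 keeps the directions $\alpha_t(X^\sharp)$ localised enough for the sum over $j\in\mathbb{Z}^d$ and the limit $\Lambda\to\mathbb{Z}^d$ to be carried inside $\omega$. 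Granting these integrability facts, the two displayed computations give the assertion immediately.
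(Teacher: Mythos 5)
Your proposal is correct and follows essentially the same route as the paper: the paper also expands $|\alpha_{-i/4}(\delta_{\alpha_t(X^\sharp)}(f))|^2$ as $\rho^{-1/4}(\delta_{\alpha_t(X^\sharp)}(f))^\ast\rho^{1/2}(\delta_{\alpha_t(X^\sharp)}(f))\rho^{-1/4}$ and then applies $\omega$, using cyclicity of the trace to recover $\langle\delta_{\alpha_t(X^\sharp)}(f),\delta_{\alpha_t(X^\sharp)}(f)\rangle_\omega$ under the weighted $t$-integral and $\sharp$-sum. Your added remarks on interchanging $\omega$ with the sum and integral (via $\mathcal{D}_\varepsilon$, decay of $\gamma$, and finite speed of propagation) are a more careful treatment of points the paper leaves implicit, but not a different argument.
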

In case when $X^\sharp$ are eigenvectors of modular operator to the power $\pm1/4$, 
with some positive constants $\nu,\mu$, we have
\[
 {\Gamma}_1(\alpha_{-i/4}(f)) =   \nu e^{-\xi/2}   |\delta_{X}(\alpha_{-i/4}(f)) |^2  
+   \mu e^{\xi/2}| \delta_{X^{\ast}} (\alpha_{-i/4}(f)) |^2 = 
\nu    |\alpha_{-i/4}\left(\delta_{X}(f)\right) |^2  
+   \mu |\alpha_{-i/4}\left(\delta_{X^\ast}(f)\right) |^2 
\]
and hence
\[
\omega\tilde{\Gamma}_1(\alpha_{-i/4}(f)) = \nu\langle\delta_{X}(f),\delta_{X}(f)\rangle
+\mu \langle\delta_{X^\ast}(f),\delta_{X^\ast}(f)\rangle \sim\mathcal{E}(f)
\]

Hence, we have the following property.
\begin{proposition}
For $f_t\equiv P_tf\equiv e^{-t\mathfrak{L}}f$, if
\[  
\Gamma_{1}(f_t) \leq e^{-2mt}Const\, \Gamma_{1}(f), 
\]
then we have 
\[\mathcal{E}(f_t)\leq e^{-2mt} Const\, \mathcal{E}(f)\]
and if Poincar\'e inequality holds,  we get
\[
\|f_t-\omega (f)\|_2^2 \leq e^{-2mt} \|f -\omega (f)\|_2^2.
\]
\end{proposition}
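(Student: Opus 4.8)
The plan is to obtain the first inequality by feeding the operator $\Gamma_1$-bound into the state $\omega$ (exploiting that the Dirichlet form is, up to constants, $\omega$ applied to $\Gamma_1$ of a modularly rotated argument), and then to derive the variance decay from the Poincar\'e inequality by the standard energy--dissipation/Gronwall argument. The two conclusions are proved by distinct routes: the first uses the $\Gamma_1$ hypothesis, the second uses only the Poincar\'e inequality.

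For the first inequality the bridge is the identity established just above, $\mathcal{E}(g)\simeq\omega\bigl(\Gamma_1(\alpha_{-i/4}(g))\bigr)$, valid for $g$ in the relevant dense domain, which rewrites the Dirichlet form as the state evaluated on $\Gamma_1$ of the modularly rotated operator. Applying it with $g=f_t$ gives $\mathcal{E}(f_t)\simeq\omega\bigl(\Gamma_1(\alpha_{-i/4}(f_t))\bigr)$. Since the generator $\mathfrak{L}$ is symmetric with respect to the KMS scalar product $\langle\cdot,\cdot\rangle_\omega$, it commutes with the modular dynamics, so that $\alpha_{-i/4}(P_tf)=P_t(\alpha_{-i/4}(f))$, i.e. $\alpha_{-i/4}(f_t)=(\alpha_{-i/4}f)_t$. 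Reading the $\Gamma_1$-decay as a gradient estimate for the semigroup (hence valid for the admissible operator $\alpha_{-i/4}f$ in place of $f$), we obtain the operator inequality $\Gamma_1\bigl((\alpha_{-i/4}f)_t\bigr)\le e^{-2mt}\,\mathrm{Const}\,\Gamma_1(\alpha_{-i/4}f)$. Because $\omega$ is a positive functional it preserves this order; converting both sides back through the identity $\mathcal{E}(\cdot)\simeq\omega\bigl(\Gamma_1(\alpha_{-i/4}(\cdot))\bigr)$ and absorbing the fixed equivalence constants into $\mathrm{Const}$ yields $\mathcal{E}(f_t)\le e^{-2mt}\,\mathrm{Const}\,\mathcal{E}(f)$.

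For the second inequality I would use energy dissipation. Writing $g_t\equiv f_t-\omega(f)\mathbb{I}$ and using $\mathfrak{L}\mathbb{I}=0$, invariance of the state $P_t^\ast\omega=\omega$ (so $\omega(f_t)=\omega(f)$), and self-adjointness of $\mathfrak{L}$ in $\mathbb{L}_2(\omega)$, one has
\[
\frac{d}{dt}\|f_t-\omega(f)\|_2^2 = 2\langle g_t,\mathfrak{L}g_t\rangle_\omega = -2\,\mathcal{E}(f_t),
\]
where $\mathcal{E}(g_t)=\mathcal{E}(f_t)$ since the form annihilates constants. If the Poincar\'e inequality $m\|g-\omega(g)\|_2^2\le\mathcal{E}(g)$ holds, then applying it with $g=f_t$ turns this into the differential inequality $\frac{d}{dt}\|f_t-\omega(f)\|_2^2\le -2m\|f_t-\omega(f)\|_2^2$, and Gronwall's lemma integrates it to $\|f_t-\omega(f)\|_2^2\le e^{-2mt}\|f-\omega(f)\|_2^2$, matching the stated rate $e^{-2mt}$ because the Poincar\'e constant $m$ produces dissipation rate $2m$.

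The main obstacle is the first step: justifying rigorously that $P_t$ commutes with the analytic modular automorphism $\alpha_{-i/4}$ and that all the objects involved ($f_t$, $\alpha_{-i/4}f_t$, and their images under the derivations $\delta_{\alpha_t(X^\sharp)}$) remain in the spaces where the identity $\mathcal{E}\simeq\omega\circ\Gamma_1\circ\alpha_{-i/4}$ and the $\Gamma_1$ formula hold. This is precisely where the finite speed of propagation of information from Section 4 and the smoothing property $P_t(\mathbb{L}_\infty)\subset\mathcal{D}(\mathfrak{L})\cap\mathbb{L}_\infty$ (Remark A.1) are needed: they guarantee the modular rotation is well defined on $f_t$ and that the algebraic manipulations survive on the dense domain $\mathcal{D}_\varepsilon$. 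By contrast the second step is the classical spectral-gap computation and is routine once the dissipation identity $\frac{d}{dt}\|f_t-\omega(f)\|_2^2=-2\mathcal{E}(f_t)$ is in hand.
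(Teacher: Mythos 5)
Your proof is correct and takes essentially the same route as the paper, whose own "proof" is implicit: the proposition is stated as an immediate consequence ("Hence, we have the following property") of the preceding identity $\omega\,\Gamma_1(\alpha_{-i/4}(f))\simeq\mathcal{E}(f)$, which is exactly the bridge you use for the first inequality, while your second part is the standard dissipation--Gronwall consequence of the Poincar\'e inequality that yields the stated bound with no extra constant. Your version in fact adds value by making explicit the step the paper glosses over — that one must apply the $\Gamma_1$ hypothesis to $\alpha_{-i/4}f$ and commute $P_t$ with the modular rotation, which is a genuine technical point (it holds in the paper's eigenvector/product-state examples, though in general KMS symmetry alone does not guarantee commutation with the modular group) — and by flagging the domain issues where finite speed of propagation is needed.
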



\hfill\hyperlink{Contents}{${\color{blue}\bullet}${Back to Contents}}

\end{document}